\newtheorem{lemma}{Lemma}
\newtheorem{theorem}{Theorem}
\newtheorem{corollary}{Corollary}
\theoremstyle{definition}
\newcommand{\tr}{\operatorname{Tr}}
\newcommand{\id}{\mathrm{Id}}
\newcommand{\ket}[1]{\vert #1 \rangle}
\tikzset{
  tensor/.style={
    inner sep = 0.055cm,
    shape = circle,
    draw,
    fill
  },
  longtensor/.style={
    draw,
    fill,
    rounded corners=0.055cm,
    inner sep = 0.055cm,
    minimum width=1.31cm
  },
  verylongtensor/.style={
    draw,
    fill,
    rounded corners=0.055cm,
    inner sep = 0.055cm,
    minimum width=2.31cm
  },
  every picture/.style = {
    baseline={([yshift=-.5ex]current bounding box.center)}, 
    transform shape,
    font=\scriptsize
  }
}
\begin{document}

\title{Normal projected entangled pair states generating the same state}

\author{Andras Molnar}
\affiliation{Max-Planck-Institut für Quantenoptik, Hans-Kopfermann-Str. 1, D-85748 Garching, Germany}
\author{José Garre-Rubio}
\affiliation{Dpto. An\'alisis Matem\'atico y Matem\'atica Aplicada, Universidad Complutense de Madrid, 28040 Madrid, Spain}
\affiliation{Instituto de Ciencias Matem\'aticas, Campus Cantoblanco UAM, C/ Nicol\'as Cabrera, 13-15, 28049 Madrid, Spain}
\author{David Pérez-García} 
\affiliation{Dpto. An\'alisis Matem\'atico y Matem\'atica Aplicada, Universidad Complutense de Madrid, 28040 Madrid, Spain}
\affiliation{Instituto de Ciencias Matem\'aticas, Campus Cantoblanco UAM, C/ Nicol\'as Cabrera, 13-15, 28049 Madrid, Spain}
\author{Norbert Schuch}
\affiliation{Max-Planck-Institut für Quantenoptik, Hans-Kopfermann-Str. 1, D-85748 Garching, Germany}
\author{J. Ignacio Cirac}
\affiliation{Max-Planck-Institut für Quantenoptik, Hans-Kopfermann-Str. 1, D-85748 Garching, Germany}

\begin{abstract}
  Tensor networks are generated by a set of small rank tensors and define many-body quantum  states in a succinct form. The corresponding map is not one-to-one: different sets of tensors may  generate the very same state. A fundamental question in the study of tensor networks naturally  arises: what is then the relation between those sets? The answer to this question in one dimensional   setups has found several applications, like the characterization of local and global symmetries,  the classification of phases of matter and unitary evolutions, or the determination of the fixed  points of renormalization procedures. Here we answer this question  for projected entangled-pair states (PEPS) in any dimension and lattice geometry, as long as the tensors generating the states are normal, which constitute an important and generic class.
\end{abstract}

\maketitle

\section{Introduction}

Tensor Networks (TNs) provide us with very efficient ways of describing quantum states in discrete systems. They are particularly useful to describe ground \cite{Hastings2006a} and thermal equilibrium states \cite{Hastings2006a,Molnar2015} of local Hamiltonians, or to describe exotic phases of matter \cite{Affleck1988,Schuch2010}. The most prominent examples are matrix product states \cite{Fannes1992,Kluemper1993} (MPS), which portray one-dimensional systems, and their higher dimensional generalization, projected entangled-pair states \cite{Verstraete2003} (PEPS). Their simplicity and special properties makes them very practical in numerical computations \cite{White1992,Verstraete2008,Orus2013,Vidal2007}, as well as in the characterization and classification of a variety of scenarios and phenomena. This includes, for instance, the characterization of symmetry protected phases in spin \cite{Pollmann2010,Chen2010a,Schuch2011} and fermionic \cite{Fidkowski2011} chains, or topological order \cite{Schuch2010,Levin2005,Sahinoglu2014} in two dimensions, lattice gauge theories \cite{Rico2013,Haegeman2014}, unitary evolutions \cite{Schumacher2004,Cirac2017a}, one-way quantum computing \cite{Raussendorf2001,Verstraete2003,Gross2007}, or quantum tomography \cite{Cramer2011}.

Tensor network states can be defined on arbitrary lattices. They are generated by a
set of tensors, $\{A_n\}$, which are assigned to each vertex and are contracted according to the geometry of the lattice. For regular lattices, the generated states are translationally invariant (TI) if all the tensors are the same. A key feature of general TNs is that two different sets of tensors may generate the same tensor network state. This occurs,
for instance, when they are related by a (so-called) gauge transformation; that is, when the tensors of one set are related to the other by matrix multiplication of the indices that are contracted, so that those matrices
cancel with each other once they are contracted. Let us illustrate this with MPS. There, the tensors $A_n$ have rank three: one of the indices corresponds to the physical index, and the other two to the virtual ones that
are contracted in order to generate the state. For a given value of the physical index, $i$, the tensors are just matrices, $A^i_n$. Obviously, the tensors $B_n$, with $B^i_n = X_n A^i_n X_{n+1}^{-1},$ generate the same state
as the tensors $A_n$, where $X_n$ are arbitrary non-singular matrices. One of the fundamental questions in the description of TNs is precisely if this is the only thing that can happen. That is, if two sets of tensors generate the same state, must they be related by a gauge transformation? This question is crucial in many of the applications of tensor networks. For instance, when the answer is affirmative, it gives rise to a canonical form of describing MPS \cite{Fannes1992,Perez-Garcia2007,Vidal2003}. Or, more importantly, it characterizes the tensors
generating states with certain global or local (gauge) symmetries \cite{Sanz2009,Kull2017}. The reason is very simple: if a state is symmetric it means that an operation leaves it invariant; however, in general, it will change the tensors, so that the resulting ones should be related to the original ones by a gauge transformation. This implies that symmetries in the quantum states can be captured by symmetries in the tensors. This question is also decisive in many other situations dealing with string order \cite{Perez-Garcia2008a}, topological order \cite{Sahinoglu2014}, renormalization \cite{Cirac2017}, or time evolution \cite{Cirac2017a}. Theorems answering such fundamental questions about the structure of TNs are typically referred to as Fundamental Theorems.

Proving a Fundamental Theorem for the most general TN is impossible: even for two tensors generating translationally invariant 2D PEPS in an $N\times N$ lattice, there cannot exist an algorithm to decide whether they will generate the same state for all $N$ or not \cite{Scarpa2018}. It is therefore necessary to impose restrictions to the TN (both on the geometry of the lattice as well as on the properties of the defining tensors).
So far, most of the Fundamental Theorems concern MPS. They have been proven for translationally invariant states \cite{Cirac2017,Cuevas2017} as long as the two tensors generate the same state for any size of the lattice. They have also been proven for not necessarily translationally invariant states for a fixed (but large enough) system size for a restricted class of tensors \cite{Perez-Garcia2010}. This class includes injective tensors, that can be inverted by just acting on the physical index, i.e. there exists another tensor, $A^{-1}$, such that
\begin{equation*}
 \sum_i A^i_{\alpha,\beta} (A^{-1})^{i}_{\alpha',\beta'}=
\delta_{\alpha,\alpha'}\delta_{\beta,\beta'},
\end{equation*}
as well as normal tensors, that become injective after blocking a few sites. For 2D PEPS such theorems only exist for restricted (but generic) classes of tensors: for normal tensors \cite{Perez-Garcia2010} and semi-injective tensors \cite{Molnar2017}. These theorems require only a fixed (but large enough) system size. The proof techniques, however, exploit the lattice structure in a fundamental way and thus do not generalize to other geometries.  

In this paper we prove the Fundamental Theorem for normal (and thus also injective) PEPS
in arbitrary lattices (geometries and dimensions). We obtain that if two sets of such tensors
generate the same state, then they must be related by a gauge transformation. This generalizes
the previous results as follows. First, we relax the condition of an existence of a sequence
of TNs (required in e.g. Ref. \onlinecite{Cirac2017}) so that our results hold for a fixed (but large enough) size. The required system size is smaller than in Ref. \onlinecite{Perez-Garcia2010}. Second, the TNs considered
here do not need to be translationally invariant, which is important when applying the results
to local gauge symmetries. Third, the results hold for any geometry (including, for instance,
three dimensions or hyperbolic, as it is used in the constructions of AdS/CFT correspondence
\cite{Pastawski2015,Hayden2016}). Additionally, we show that if a TI PEPS defined in a regular lattice is normal although the tensors are different in different sites, then there exists a TI PEPS description with the same bond dimension and where the tensors at every site are the same.
Furthermore, the proof presented here uses a new technique: even though it relies on a reduction to the MPS case, this reduction is done in a local way instead of ``slicing'' a PEPS into an MPS along one dimension.

\section{Injective MPS}

In this Section we define non-translational invariant \emph{injective} MPS. We show that two such MPS generate the same state if and only if the generating tensors are related with a gauge transformation (if the MPS contains at least three sites). This extends the previously known results as here we consider (i) a fixed system size and (ii) non-translational invariant MPS with closed boundary conditions. 

A non translational invariant MPS on $n$ particles is a state
\begin{equation*}
  \ket{\Psi} = \sum_i \tr\{A_1^{i_1} A_2^{i_2} \dots A_n^{i_n} \} \ket{i_1 \dots i_n},
\end{equation*}
where each $i_k$ runs through a basis of the (finite dimensional) Hilbert space associated to the $k$th particle and each $A_k^{i_k}$ is a  $D_k \times D_{k+1}$ matrix ($D_{n+1}=D_1$). From now on, we will use graphical notation: each tensor is depicted by a dot with lines attached to it. The lines correspond to the different indices of the tensor; joining the lines correspond to contraction of indices. For example, a scalar is represented by a single dot with no lines joinig to it, a vector is represented by a dot with a single line attached to it, a matrix by a dot with two lines attached to it:
\begin{equation*}
  s = \begin{tikzpicture}
    \node[tensor, label=below:$s$] {};
  \end{tikzpicture} \ , \quad
  \ket{v} = \begin{tikzpicture}
    \node[tensor, label=below:$v$] (v) {};
    \draw (v)--++(-0.5,0);
  \end{tikzpicture} \ , \quad 
  A = \begin{tikzpicture}
    \node[tensor, label=below:$A$] (A) {};
    \draw (-0.5,0)--(0.5,0);
  \end{tikzpicture} \ ;  
\end{equation*}
the scalar product of two vectors, the action of a matrix on a vector and a matrix element can be written as
\begin{equation*}
  \langle w \vert v \rangle = \begin{tikzpicture}
    \node[tensor, label=below:$w$] at (0,0) (v) {};
    \node[tensor, label=below:$v$] at (1,0) (w) {};
    \draw (v)--(w);
  \end{tikzpicture} \ , \quad 
  A \ket{v} = \begin{tikzpicture}
    \node[tensor, label=below:$A$] at (0,0) (A) {};
    \node[tensor, label=below:$v$] at (1,0) (v) {};
    \draw (-0.5,0)--(v);
  \end{tikzpicture} \ , \quad 
 \langle w \vert A \ket{v} = \begin{tikzpicture}
    \node[tensor, label=below:$A$] at (0,0) (A) {};
    \node[tensor, label=below:$w$] at (-1,0) (v) {};
    \node[tensor, label=below:$v$] at (1,0) (w) {};
    \draw (v)--(w);
 \end{tikzpicture} \ .
\end{equation*}
In this notation, the MPS $\ket{\Psi}$ is written as
\begin{equation*}
  \ket{\Psi} = \ 
  \begin{tikzpicture}
    \draw (0.5,0) rectangle (5.5,-0.5);
    \foreach \x/\t in {1/1,2/2,3/3,5/n}{
      \node[tensor,label=below:$A_\t$] (t\x) at (\x,0) {};
      \draw[very thick] (t\x) --++ (0,0.5);
    }
    \node[fill=white] at (4,0) {$\dots$};
  \end{tikzpicture}  \ .
\end{equation*}  
An \emph{injective} MPS is an MPS where every tensor -- if considered as a map from the virtual level to the physical one -- is injective, i.e.\ 
\begin{equation*}
  \begin{tikzpicture}
    \draw (-0.5,0) rectangle (0.5,-0.6);
    \node[tensor,label=below:$A_i$] (a) at (0,0) {};
    \draw[very thick] (a)--++(0,0.5);
    \node[tensor,label=below:$X$] at (0,-0.6) {};
  \end{tikzpicture} \ = 0 \quad \Rightarrow \quad X = 0.
\end{equation*}
This is equivalent to the tensor $A_i$ admitting a one-sided inverse $A_i^{-1}$:
\begin{equation*}
  \begin{tikzpicture}
    \draw (-0.5,0)--(0.5,0);
    \draw (-0.5,0.5)--(0.5,0.5);
    \draw[very thick] (0,0)--(0,0.5);
    \node[tensor,label=below:$A_i$] at (0,0) {};
    \node[tensor,label=above:$A_i^{-1}$] at (0,0.5) {};
  \end{tikzpicture}  =
  \begin{tikzpicture}
    \begin{scope}
      \clip (0.2,-0.2) rectangle (1,0.7);
      \draw (0,0) rectangle (0.5,0.5);
      \draw[shift={(0.7,0)}] (0,0) rectangle (0.5,0.5);
    \end{scope}
  \end{tikzpicture} \ . 
\end{equation*}  
Notice that this immediately shows that the contraction of two injective MPS tensors is again injective; the inverse of the obtained tensor is proportional to the contraction of the inverses of the individual tensors:
\begin{equation*}
  \begin{tikzpicture}
    \draw (-0.5,0)--(1.5,0);
    \draw (-0.5,0.5)--(1.5,0.5);
    \draw[very thick] (0,0)--++(0,0.5);
    \draw[very thick] (1,0)--++(0,0.5);
    \node[tensor,label=below:$A_1$] at (0,0) {};
    \node[tensor,label=above:$A_1^{-1}$] at (0,0.5) {};
    \node[tensor,label=below:$A_2$] at (1,0) {};
    \node[tensor,label=above:$A_2^{-1}$] at (1,0.5) {};
  \end{tikzpicture}  =
  \begin{tikzpicture}
    \begin{scope}
      \clip (0.2,-0.2) rectangle (1.65,0.7);
      \draw (0,0) rectangle (0.5,0.5);
      \draw[shift={(0.7,0)}] (0,0) rectangle (0.5,0.5);
      \draw[shift={(1.4,0)}] (0,0) rectangle (0.5,0.5);
    \end{scope}
  \end{tikzpicture} \ = D \cdot \ 
  \begin{tikzpicture}
    \begin{scope}
      \clip (0.2,-0.2) rectangle (1,0.7);
      \draw (0,0) rectangle (0.5,0.5);
      \draw[shift={(0.7,0)}] (0,0) rectangle (0.5,0.5);
    \end{scope}
  \end{tikzpicture} \ , 
\end{equation*}
where $D$ is the dimension of the vector space assigned to the index connecting the tensors $A_1$ and $A_2$. 

In the rest of this Section, we prove the two main lemmas leading to the Fundamental Theorem. We also illustrate how to use them by deriving the Fundamental Theorem for non translational invariant MPS. In the following, we consider two injective tensor networks generating the same state; the defining tensors of the two TNs are labeled by $A_i$ and $B_i$. The first lemma assigns a special gauge transformation to each edge of one of the tensor networks; the second lemma shows that once these gauges are absorbed into the tensors $B_i$, the resulting tensors are equal to $A_i$.

\begin{lemma}\label{lem:inj_isomorph}
  Suppose $A,B$ are two injective, non translational invariant MPS on three sites that generate the same state. Then for every edge and for every matrix $X$  there is a matrix $Y$ such that
  \begin{equation*}
    \begin{tikzpicture}
      \draw (0.5,0) rectangle (3.5,-0.5);
      \foreach \x in {1,2,3}{
        \node[tensor,label=below:$A_\x$] (t\x) at (\x,0) {};
        \draw[very thick] (t\x) --++ (0,0.5);
      }
      \node[tensor, red,label=above:$X$] (x) at (1.5,0) {};
    \end{tikzpicture} = 
    \begin{tikzpicture}
      \draw (0.5,0) rectangle (3.5,-0.5);
      \foreach \x in {1,2,3}{
        \node[tensor,label=below:$B_\x$] (t\x) at (\x,0) {};
        \draw[very thick] (t\x) --++ (0,0.5);
      }
      \node[tensor, red,label=above:$Y$] (x) at (1.5,0) {};
    \end{tikzpicture} \ .
  \end{equation*} 
  Moreover, $X$ and $Y$ have the same dimension and there is an invertible matrix $Z$ such that $Y=Z^{-1}X Z$. This $Z$ is uniquely defined up to multiplication with a constant.
\end{lemma}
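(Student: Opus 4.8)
The plan is to reduce the statement to a single clean assertion about the \emph{blocked} tensors and then produce the gauge $Z$. Fix the edge, say the one between sites $1$ and $2$, and write $\mathbb{A}^{i_1 i_2 i_3}$ for the cyclic product of the three tensors starting just after that edge, i.e. $\mathbb{A}^{i_1 i_2 i_3} := A_2^{i_2} A_3^{i_3} A_1^{i_1}$, and likewise $\mathbb{B}^{i_1 i_2 i_3} := B_2^{i_2} B_3^{i_3} B_1^{i_1}$. Since the product of injective tensors is again injective (as shown in the excerpt), $\mathbb{A}$ and $\mathbb{B}$ are injective single-site tensors on the ring, and inserting $X$ on the chosen edge is exactly $\Phi_A(X) := \sum_{\vec\imath} \tr(\mathbb{A}^{\vec\imath} X)\ket{\vec\imath}$, respectively $\Phi_B(Y)$. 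The same-state hypothesis reads $\Phi_A(\id) = \Phi_B(\id)$, and injectivity of the blocked tensors makes $\Phi_A,\Phi_B$ injective linear maps. My first observation is that, by nondegeneracy of the trace pairing, the whole statement is equivalent to producing one invertible $Z$ with $\mathbb{A}^{\vec\imath} = Z\, \mathbb{B}^{\vec\imath}\, Z^{-1}$ for all $\vec\imath$: this conjugation is the same as ``$\tr(\mathbb{A}^{\vec\imath} X) = \tr(\mathbb{B}^{\vec\imath}\, Z^{-1} X Z)$ for all $X,\vec\imath$'', it forces equal bond dimensions because $Z$ is square, and it yields precisely $Y = Z^{-1} X Z$.

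The core step is therefore to construct $Z$, equivalently to show $\operatorname{Im}\Phi_A = \operatorname{Im}\Phi_B$, after which $\phi := \Phi_B^{-1}\circ\Phi_A$ is a well-defined linear bijection (uniqueness of $Y$ being immediate from injectivity of $\Phi_B$). I would prove the inclusion $\operatorname{Im}\Phi_A \subseteq \operatorname{Im}\Phi_B$ by reconstruction with the one-sided inverses: contracting the physical indices of a state in $\operatorname{Im}\Phi_A$ against $B_1^{-1}, B_2^{-1}, B_3^{-1}$ yields a candidate matrix, and feeding it back into $\mathbb{B}$ must return the original state. The hypothesis enters because, after peeling a subset of sites with the $B^{-1}$'s while keeping the chosen bond open, the remaining \emph{uninserted} portion of the network can be rewritten from $B$ to $A$ using $\Phi_A(\id)=\Phi_B(\id)$ on that sub-pattern; crucially the three tensors sit at distinct sites, and this product structure is what rules out the transpose, which otherwise reproduces the same trace data yet is not a conjugation. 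By symmetry $\operatorname{Im}\Phi_B \subseteq \operatorname{Im}\Phi_A$, so the images agree and the bond dimensions coincide. Carried out at the level of the open bond, this reconstruction is exactly what collapses to a bond-local $Z$. I expect this image equality — that the space of single-edge deformations of the state is intrinsic and independent of the injective representation — to be the main obstacle, since it is here that injectivity and the global equality of states must be combined in a genuinely local way.

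It remains to pin down the conjugation form. One route is to verify directly that the bond-local reconstruction of the previous paragraph produces an invertible $Z$ realizing $\mathbb{A}^{\vec\imath} = Z\,\mathbb{B}^{\vec\imath}\,Z^{-1}$, its inverse coming from the symmetric reconstruction and the two composing to a scalar multiple of the identity by injectivity; this delivers parts (a)--(c) at once. Alternatively, and perhaps more cleanly, I would show that $\phi$ is a unital algebra homomorphism and invoke representation theory: unitality $\phi(\id)=\id$ is the same-state hypothesis, and multiplicativity $\phi(XX')=\phi(X)\phi(X')$ follows from a local (cut-bond) strengthening of the existence statement — concretely, for invertible $X$ the substitution $A_1^{i}\mapsto A_1^{i}X$, $B_1^{i}\mapsto B_1^{i}\phi(X)$ gives two new injective MPS generating the common state $\Phi_A(X)=\Phi_B(\phi(X))$, and inserting a further $X'$ then relates $\phi(XX')$ to $\phi(X)$; density of invertible matrices and linearity extend this to all $X$. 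Once $\phi\colon M_{D^A}\to M_{D^B}$ is a unital isomorphism of full matrix algebras, the Skolem--Noether theorem (equivalently, simplicity of $M_D$ together with uniqueness of its irreducible module) gives $D^A=D^B$ and an invertible $Z$ with $\phi(X)=Z^{-1}XZ$.

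Finally, uniqueness of $Z$ up to a scalar is Schur's lemma: if $Z_1^{-1} X Z_1 = Z_2^{-1} X Z_2$ for all $X$, then $Z_2 Z_1^{-1}$ commutes with every matrix, hence lies in the center of $M_{D^B}$ and is a multiple of $\id$. Running the same argument with the blocked tensor started after each of the three edges in turn establishes the claim for every edge, completing the proof.
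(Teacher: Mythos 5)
Your high-level skeleton coincides with the paper's: encode the insertion of $X$ as $\Phi_A(X)=\sum_{\vec\imath}\tr(\mathbb{A}^{\vec\imath}X)\ket{\vec\imath}$, show the induced map $\phi\colon X\mapsto Y$ is a unital isomorphism of full matrix algebras, and conclude $Y=Z^{-1}XZ$ via Skolem--Noether, with uniqueness of $Z$ from Schur's lemma. Those last steps are correct and are what the paper does. The gap is exactly where you predict it: the existence of $Y$, i.e.\ $\operatorname{Im}\Phi_A\subseteq\operatorname{Im}\Phi_B$. Your argument for it is circular. Feeding $\Phi_A(X)$ through the left inverse $P_B$ built from $B_1^{-1},B_2^{-1},B_3^{-1}$ and back into $\mathbb{B}$ returns the original state if and only if $\Phi_A(X)$ already lies in $\operatorname{Im}\Phi_B$, since $\Phi_BP_B$ is a projection onto that image --- which is what you are trying to prove. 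The justification you offer, that after peeling some sites with the $B^{-1}$'s ``the remaining uninserted portion of the network can be rewritten from $B$ to $A$ using $\Phi_A(\id)=\Phi_B(\id)$ on that sub-pattern,'' is not available: the hypothesis is an equality of the two \emph{closed} states only, and it does not license substituting one sub-network for the other once virtual legs have been cut open. Establishing that weak form of local substitutability is precisely the content of the lemma, so it cannot be assumed.

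The paper closes this gap with a two-operator localization argument that your sketch never actually performs. The insertion of $X$ into the $A$-network is realized in \emph{two} ways as physical operators, $O_1$ on site $1$ and $O_2$ on site $2$ (sandwiching $X$ between $A_i$ and $A_i^{-1}$ as in \cref{eq:X->O}); since the closed states agree, $O_1\ket{\Psi_B}=O_2\ket{\Psi_B}$. Peeling $B_2,B_3$ from the first expression localizes the deformation onto the virtual legs of site $1$; peeling $B_1,B_3$ from the second localizes it onto the legs of site $2$; injectivity of the spectator tensor then forces the two localizations to coincide and hence to be supported only on the shared bond $(1,2)$, giving $\sum_j(O_1)_{ij}B_1^{j}=B_1^{i}W$ with $W=Y$. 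This is also exactly what rules out the transpose: $X\mapsto X^{T}$ reproduces all the trace data $\tr(\mathbb{A}^{\vec\imath}X)$ but cannot be implemented consistently as a single-site physical operator from both sides of the bond. Note that your route to multiplicativity inherits the same gap (it presupposes the ``local cut-bond strengthening'' of existence), whereas in the paper multiplicativity comes for free because both $X\mapsto O_1$ and $O_1\mapsto W$ are manifestly algebra homomorphisms. With the localization step supplied, the remainder of your proposal goes through.
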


This Lemma will be used to assign a local gauge transformation to all edges on one of two tensor networks generating the same state. These local gauges will then be incorporated into the defining tensors; doing so will lead to two tensor networks where inserting any matrix $X$ on any bond simultaneously in the two networks gives two new states that are still equal. 

The proof of \cref{lem:inj_isomorph} is based on the observation that any local operation on the virtual level can be realized by a physical one on either of the neighboring particles; and vice versa, two physical operations on neighboring particles that transform the state the same way correspond to a virtual operation on the bond connecting the two particles. Given two tensor networks generating the same state, this correspondence establishes an isomorphism between the algebra of virtual operations. The basis change realizing this isomorphism is the local gauge relating the two tensors.

Before proceeding to the proof, notice that due to injectivity of the tensors, if 
\begin{equation*}
  \begin{tikzpicture}
    \draw (0.5,0) rectangle (3.5,-0.5);
    \foreach \x in {1,2,3}{
      \node[tensor,label=below:$A_\x$] (t\x) at (\x,0) {};
      \draw[very thick] (t\x) --++ (0,0.5);
    }
    \node[tensor, red,label=above:$X_1$] (x) at (1.5,0) {};
  \end{tikzpicture} = 
  \begin{tikzpicture}
    \draw (0.5,0) rectangle (3.5,-0.5);
    \foreach \x in {1,2,3}{
      \node[tensor,label=below:$A_\x$] (t\x) at (\x,0) {};
      \draw[very thick] (t\x) --++ (0,0.5);
    }
    \node[tensor, red,label=above:$X_2$] (x) at (1.5,0) {};
  \end{tikzpicture} \ ,
\end{equation*}
then $X_1=X_2$.

\begin{proof}[Proof of \cref{lem:inj_isomorph}]
Consider now a deformation of the TN by inserting a matrix $X$ on one of the bonds. This deformation can be realized by physical operations acting on either of the two neighboring particles:
\begin{equation*}
  \begin{tikzpicture}[baseline=-0.1cm]
    \draw (0.5,0) rectangle (3.5,-0.5);
    \foreach \x in {1,2,3}{
      \node[tensor,label=below:$A_\x$] (t\x) at (\x,0) {};
      \draw[very thick] (t\x) --++ (0,0.5);
    }
    \node[tensor, red,label=above:$X$] (x) at (1.5,0) {};
  \end{tikzpicture} = 
  \begin{tikzpicture}[baseline=-0.1cm]
    \draw (0.5,0) rectangle (3.5,-0.5);
    \foreach \x in {1,2,3}{
      \node[tensor,label=below:$A_\x$] (t\x) at (\x,0) {};
      \draw[very thick] (t\x) --++ (0,0.5);
    }
    \node[tensor, red,label=left:$O_1$] (o) at (1,0.5) {};
    \draw[very thick,red] (o)--++(0,0.5); 
  \end{tikzpicture} = 
  \begin{tikzpicture}[baseline=-0.1cm]
    \draw (0.5,0) rectangle (3.5,-0.5);
    \foreach \x in {1,2,3}{
      \node[tensor,label=below:$A_\x$] (t\x) at (\x,0) {};
      \draw[very thick] (t\x) --++ (0,0.5);
    }
    \node[tensor, red,label=left:$O_2$] (o) at (2,0.5) {};
    \draw[very thick,red] (o)--++(0,0.5); 
  \end{tikzpicture} \ ,
\end{equation*} 
with 
\begin{equation}\label{eq:X->O}
   O_1 = 
  \begin{tikzpicture}
    \draw (-0.5,0) rectangle (0.5,1.2);
    \node[tensor,label=below:$A_1$] (t) at (0,1.2) {};
    \node[tensor,label=above:$A_1^{-1}$] (b) at (0,0) {};
    \node[tensor,label=right:$X$] (x) at (0.5,0.6) {};
    \draw[very thick] (t)--++(0,0.5);
    \draw[very thick] (b)--++(0,-0.5);
  \end{tikzpicture} \quad \text{and} \quad 
   O_2 = 
  \begin{tikzpicture}
    \draw (-0.5,0) rectangle (0.5,1.2);
    \node[tensor,label=below:$A_2$] (t) at (0,1.2) {};
    \node[tensor,label=above:$A_2^{-1}$] (b) at (0,0) {};
    \node[tensor,label=left:$X$] (x) at (-0.5,0.6) {};
    \draw[very thick] (t)--++(0,0.5);
    \draw[very thick] (b)--++(0,-0.5);
  \end{tikzpicture} \ .     
\end{equation}
Notice that the mappings $X\mapsto O_1$ and $X\mapsto O_2^T$ are algebra homomorphisms\footnote{The virtual bonds of the tensors $A_i$ read from left to right, thus the loops in \cref{eq:O->X} read clockwise; hence the transpose in the mapping $X\mapsto O_2^T$.}. These mappings do not depend on $A_3$.

Consider now the converse: two physical operations on neighboring particles that transform the MPS to the same state:
\begin{equation}\label{eq:resonate}
  \begin{tikzpicture}[baseline=-0.1cm]
    \draw (0.5,0) rectangle (3.5,-0.5);
    \foreach \x in {1,2,3}{
      \node[tensor,label=below:$B_\x$] (t\x) at (\x,0) {};
      \draw[very thick] (t\x) --++ (0,0.5);
    }
    \node[tensor, red,label=left:$O_1$] (o) at (1,0.5) {};
    \draw[very thick,red] (o)--++(0,0.5); 
  \end{tikzpicture} = 
  \begin{tikzpicture}[baseline=-0.1cm]
    \draw (0.5,0) rectangle (3.5,-0.5);
    \foreach \x in {1,2,3}{
      \node[tensor,label=below:$B_\x$] (t\x) at (\x,0) {};
      \draw[very thick] (t\x) --++ (0,0.5);
    }
    \node[tensor, red,label=left:$O_2$] (o) at (2,0.5) {};
    \draw[very thick,red] (o)--++(0,0.5); 
  \end{tikzpicture} \ .
\end{equation}
Inverting $B_2$ and $B_3$, we arrive to 
\begin{equation}\label{eq:inj_O->X_argument}
  \begin{tikzpicture}
    \node[tensor,label=below:$B_1$] (l) at (0,-0.6) {};
      \draw[very thick] (l)--++(0,0.6);
    \draw (l)++(-0.6,0)--(l)--++(0.6,0);
    \node[tensor,red,label=left:$O_1$] (o) at ($(l)+(0,0.5)$) {};
    \draw[red, very thick] (o) --++(0,0.5);
  \end{tikzpicture} = D_{23}^{-1}
  \begin{tikzpicture}
    \node[tensor,label=below:$B_1$] (l) at (0,-0.6) {};
    \node[tensor,label=below:$B_2$] (r) at (1,-0.6) {};
    \foreach \x in {l,r}{
      \draw[very thick] (\x)--++(0,0.6);
    }
    \draw (l)++(-0.6,0)--(l)--(r)--++(0.6,0);
    \node[tensor,red,label=left:$O_2$] (o) at ($(r)+(0,0.5)$) {};
    \draw[red, very thick] (o) --++(0,0.5);      
    \node[tensor,label=above:$B_2^{-1}$] (i) at ($ (r) + (0,1)$) {};
    \draw (i)++(-0.6,0)--(i)--++(0.6,0)--++(0,-1);
  \end{tikzpicture}  =  
  \begin{tikzpicture}
    \draw (-0.5,0)--(1.5,0);
    \node[tensor,label=below:$B_1$] (a) at (0,0) {};
    \draw[very thick] (a)--++(0,0.5);
    \node[tensor, label=below:$W$] (x) at (1,0) {};
  \end{tikzpicture} \ ,
\end{equation}  
for some matrix $W$, where $D_{23}$ is the dimension of the vector space on the edge $(2,3)$.  Similarly, inverting $B_1$ and $B_3$, we arrive to 
\begin{equation*}
  \begin{tikzpicture}
    \draw (-0.5,0)--(0.5,0);
    \node[tensor,label=below:$B_2$] (a) at (0,0) {};
    \draw[very thick] (a)--++(0,0.5);
    \node[tensor, red,label=left:$O_2$] (o) at (0,0.5) {};
    \draw[very thick,red] (o)--++(0,0.5); 
  \end{tikzpicture} = 
  \begin{tikzpicture}
    \draw (-1.5,0)--(0.5,0);
    \node[tensor,label=below:$B_2$] (a) at (0,0) {};
    \draw[very thick] (a)--++(0,0.5);
    \node[tensor, label=below:$V$] (x) at (-1,0) {};
  \end{tikzpicture} \ ,
\end{equation*}
for some matrix $V$.  Therefore
\begin{equation*}
  \begin{tikzpicture}[baseline=-0.1cm]
    \draw (0.5,0) rectangle (3.5,-0.5);
    \foreach \x in {1,2,3}{
      \node[tensor,label=below:$B_\x$] (t\x) at (\x,0) {};
      \draw[very thick] (t\x) --++ (0,0.5);
    }
    \node[tensor, red,label=above:$W$] (x) at (1.5,0) {};
  \end{tikzpicture} = 
  \begin{tikzpicture}[baseline=-0.1cm]
    \draw (0.5,0) rectangle (3.5,-0.5);
    \foreach \x in {1,2,3}{
      \node[tensor,label=below:$B_\x$] (t\x) at (\x,0) {};
      \draw[very thick] (t\x) --++ (0,0.5);
    }
    \node[tensor, red,label=left:$O_1$] (o) at (1,0.5) {};
    \draw[very thick,red] (o)--++(0,0.5); 
  \end{tikzpicture} = 
  \begin{tikzpicture}[baseline=-0.1cm]
    \draw (0.5,0) rectangle (3.5,-0.5);
    \foreach \x in {1,2,3}{
      \node[tensor,label=below:$B_\x$] (t\x) at (\x,0) {};
      \draw[very thick] (t\x) --++ (0,0.5);
    }
    \node[tensor, red,label=left:$O_2$] (o) at (2,0.5) {};
    \draw[very thick,red] (o)--++(0,0.5); 
  \end{tikzpicture} \ = \ 
  \begin{tikzpicture}[baseline=-0.1cm]
    \draw (0.5,0) rectangle (3.5,-0.5);
    \foreach \x in {1,2,3}{
      \node[tensor,label=below:$B_\x$] (t\x) at (\x,0) {};
      \draw[very thick] (t\x) --++ (0,0.5);
    }
    \node[tensor, red,label=above:$V$] (x) at (1.5,0) {};
  \end{tikzpicture} \ ,
\end{equation*}
and thus by injectivity, $V=W$. Therefore
\begin{equation} \label{eq:O->X}
  \begin{tikzpicture}
    \draw (-0.5,0)--(0.5,0);
    \node[tensor,label=below:$B_1$] (a) at (0,0) {};
    \draw[very thick] (a)--++(0,0.5);
    \node[tensor, red,label=left:$O_1$] (o) at (0,0.5) {};
    \draw[very thick,red] (o)--++(0,0.5); 
  \end{tikzpicture} = 
  \begin{tikzpicture}
    \draw (-0.5,0)--(1.5,0);
    \node[tensor,label=below:$B_1$] (a) at (0,0) {};
    \draw[very thick] (a)--++(0,0.5);
    \node[tensor, label=below:$W$] (x) at (1,0) {};
  \end{tikzpicture} \quad \text{and} \quad 
  \begin{tikzpicture}
    \draw (-0.5,0)--(0.5,0);
    \node[tensor,label=below:$B_2$] (a) at (0,0) {};
    \draw[very thick] (a)--++(0,0.5);
    \node[tensor, red,label=left:$O_2$] (o) at (0,0.5) {};
    \draw[very thick,red] (o)--++(0,0.5); 
  \end{tikzpicture} = 
  \begin{tikzpicture}
    \draw (-1.5,0)--(0.5,0);
    \node[tensor,label=below:$B_2$] (a) at (0,0) {};
    \draw[very thick] (a)--++(0,0.5);
    \node[tensor, label=below:$W$] (x) at (-1,0) {};
  \end{tikzpicture} \ ,
\end{equation}
and the maps $O_1\mapsto W$ and $O_2^T \mapsto W$ are uniquely defined and are  algebra homomorphisms. 

Consider now two three-particle, non translational invariant injective MPS generating the same state:
\begin{equation*}
  \begin{tikzpicture}
    \draw (0.5,0) rectangle (3.5,-0.5);
    \foreach \x in {1,2,3}{
      \node[tensor,label=below:$A_\x$] (t\x) at (\x,0) {};
      \draw[very thick] (t\x) --++ (0,0.5);
    }
  \end{tikzpicture} = 
  \begin{tikzpicture}
    \draw (0.5,0) rectangle (3.5,-0.5);
    \foreach \x in {1,2,3}{
      \node[tensor,label=below:$B_\x$] (t\x) at (\x,0) {};
      \draw[very thick] (t\x) --++ (0,0.5);
    }
  \end{tikzpicture} \ .
\end{equation*}
Deform the MPS on the LHS by inserting a matrix $X$ on one of the bonds. By the above arguments, this deformation is equivalent to any of the two physical operations:
\begin{equation*}
  \begin{tikzpicture}[baseline=-0.1cm]
    \draw (0.5,0) rectangle (3.5,-0.5);
    \foreach \x in {1,2,3}{
      \node[tensor,label=below:$A_\x$] (t\x) at (\x,0) {};
      \draw[very thick] (t\x) --++ (0,0.5);
    }
    \node[tensor, red,label=above:$X$] (x) at (1.5,0) {};
  \end{tikzpicture} = 
  \begin{tikzpicture}[baseline=-0.1cm]
    \draw (0.5,0) rectangle (3.5,-0.5);
    \foreach \x in {1,2,3}{
      \node[tensor,label=below:$A_\x$] (t\x) at (\x,0) {};
      \draw[very thick] (t\x) --++ (0,0.5);
    }
    \node[tensor, red,label=left:$O_1$] (o) at (1,0.5) {};
    \draw[very thick,red] (o)--++(0,0.5); 
  \end{tikzpicture} = 
  \begin{tikzpicture}[baseline=-0.1cm]
    \draw (0.5,0) rectangle (3.5,-0.5);
    \foreach \x in {1,2,3}{
      \node[tensor,label=below:$A_\x$] (t\x) at (\x,0) {};
      \draw[very thick] (t\x) --++ (0,0.5);
    }
    \node[tensor, red,label=left:$O_2$] (o) at (2,0.5) {};
    \draw[very thick,red] (o)--++(0,0.5); 
  \end{tikzpicture} \ .
\end{equation*} 
As the MPS defined by the $A$ and $B$ tensors is the same state, these physical operators also satisfy
\begin{equation*}
  \begin{tikzpicture}[baseline=-0.1cm]
    \draw (0.5,0) rectangle (3.5,-0.5);
    \foreach \x in {1,2,3}{
      \node[tensor,label=below:$A_\x$] (t\x) at (\x,0) {};
      \draw[very thick] (t\x) --++ (0,0.5);
    }
    \node[tensor, red,label=above:$X$] (x) at (1.5,0) {};
  \end{tikzpicture} = 
  \begin{tikzpicture}[baseline=-0.1cm]
    \draw (0.5,0) rectangle (3.5,-0.5);
    \foreach \x in {1,2,3}{
      \node[tensor,label=below:$B_\x$] (t\x) at (\x,0) {};
      \draw[very thick] (t\x) --++ (0,0.5);
    }
    \node[tensor, red,label=left:$O_1$] (o) at (1,0.5) {};
    \draw[very thick,red] (o)--++(0,0.5); 
  \end{tikzpicture} = 
  \begin{tikzpicture}[baseline=-0.1cm]
    \draw (0.5,0) rectangle (3.5,-0.5);
    \foreach \x in {1,2,3}{
      \node[tensor,label=below:$B_\x$] (t\x) at (\x,0) {};
      \draw[very thick] (t\x) --++ (0,0.5);
    }
    \node[tensor, red,label=left:$O_2$] (o) at (2,0.5) {};
    \draw[very thick,red] (o)--++(0,0.5); 
  \end{tikzpicture} \ ,
\end{equation*} 
and thus, by \cref{eq:O->X}, for every $X$ there is a matrix $Y$ such that 
\begin{equation*}
  \begin{tikzpicture}
    \draw (0.5,0) rectangle (3.5,-0.5);
    \foreach \x in {1,2,3}{
      \node[tensor,label=below:$A_\x$] (t\x) at (\x,0) {};
      \draw[very thick] (t\x) --++ (0,0.5);
    }
    \node[tensor, red,label=above:$X$] (x) at (1.5,0) {};
  \end{tikzpicture} = 
  \begin{tikzpicture}
    \draw (0.5,0) rectangle (3.5,-0.5);
    \foreach \x in {1,2,3}{
      \node[tensor,label=below:$B_\x$] (t\x) at (\x,0) {};
      \draw[very thick] (t\x) --++ (0,0.5);
    }
    \node[tensor, red,label=above:$Y$] (x) at (1.5,0) {};
  \end{tikzpicture} \ .
\end{equation*}
Due to injectivity of the $B$ tensors, the mapping $X\mapsto Y$ is uniquely defined. Due to injectivity of the $A$ tensors, it is an injective map. As the argument is symmetric with respect of the exchange of the $A$ and $B$ tensors, it also has to be surjective and therefore the map $X\mapsto Y$ is a bijection. Moreover, it is clear from the construction that it is an algebra homomorphism, as both $X\mapsto O_1$ and $O_1\mapsto Y$ are algebra homomorphisms. Therefore the mapping $X\mapsto Y$ is an algebra isomorphism. As $X$ (and $Y$) can be any matrix on the bond, this means that the bond dimensions on the LHS and the RHS are the same and that $Y=ZXZ^{-1}$ for some invertible $Z$ and this $Z$ is uniquely defined (up to a multiplicative constant).
\end{proof}

\begin{lemma}\label{lem:inj_equal_tensors}
  Let $A_1,A_2$ and $B_1,B_2$ be injective MPS tensors. Suppose that for all $X$ and $Y$
  \begin{equation*}
    \begin{tikzpicture}
      \draw (0.5,0) rectangle (2.5,-0.5);
      \foreach \x in {1,2}{
        \node[tensor,label=below:$A_\x$] (t\x) at (\x,0) {};
        \draw[very thick] (t\x) --++ (0,0.5);
      }
      \node[tensor, red,label=above:$X$] (x) at (1.5,0) {};
    \end{tikzpicture} = 
    \begin{tikzpicture}
      \draw (0.5,0) rectangle (2.5,-0.5);
      \foreach \x in {1,2}{
        \node[tensor,label=below:$B_\x$] (t\x) at (\x,0) {};
        \draw[very thick] (t\x) --++ (0,0.5);
      }
      \node[tensor, red,label=above:$X$] (x) at (1.5,0) {};
    \end{tikzpicture} \quad \text{and} \quad
    \begin{tikzpicture}
      \draw (0.5,0) rectangle (2.5,-0.5);
      \foreach \x in {1,2}{
        \node[tensor,label=below:$A_\x$] (t\x) at (\x,0) {};
        \draw[very thick] (t\x) --++ (0,0.5);
      }
      \node[tensor, red,label=below:$Y$] (x) at (1.5,-0.5) {};
    \end{tikzpicture} = 
    \begin{tikzpicture}
      \draw (0.5,0) rectangle (2.5,-0.5);
      \foreach \x in {1,2}{
        \node[tensor,label=below:$B_\x$] (t\x) at (\x,0) {};
        \draw[very thick] (t\x) --++ (0,0.5);
      }
      \node[tensor, red,label=below:$Y$] (x) at (1.5,-0.5) {};
    \end{tikzpicture} \ .    
  \end{equation*}
  Then $A_1 = \lambda B_1$ and $A_2 = \lambda^{-1} B_2$ for some constant $\lambda$.
\end{lemma}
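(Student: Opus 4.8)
The plan is to first strip away the physical indices, then use injectivity to peel off the tensors one at a time, and finally exploit \emph{both} hypotheses simultaneously to force the remaining freedom to be a scalar.

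First I would observe that, since the \emph{same} matrices $X$ and $Y$ are inserted into both networks, the hypotheses only typecheck if the two networks share bond dimensions; write $D_2$ for the dimension of the inner bond carrying $X$ and $D_1$ for the closure bond carrying $Y$, so that $A_1^i,B_1^i$ are $D_1\times D_2$ and $A_2^j,B_2^j$ are $D_2\times D_1$. Expanding both states in the product basis $\ket{i j}$ and using its linear independence, the $X$-hypothesis reads $\tr\{A_1^i X A_2^j\}=\tr\{B_1^i X B_2^j\}$ for all $i,j,X$, and the $Y$-hypothesis reads $\tr\{A_1^i A_2^j Y\}=\tr\{B_1^i B_2^j Y\}$ for all $i,j,Y$. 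By cyclicity of the trace together with nondegeneracy of the trace pairing (``$\tr\{MZ\}=\tr\{NZ\}$ for all $Z$ implies $M=N$''), these become the two families of matrix identities
\begin{equation*}
  A_2^j A_1^i = B_2^j B_1^i \qquad\text{and}\qquad A_1^i A_2^j = B_1^i B_2^j,
\end{equation*}
holding for all physical $i,j$; call these (II) and (I) respectively.

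Next I would use injectivity, recalling that it is equivalent to each set of tensors spanning the full matrix space of its size and that it supplies a one-sided inverse tensor. Contracting (I) against the one-sided inverse of $A_1$ on the index $i$ collapses $A_1^i$ into identity wires on its virtual legs and, after fixing one value of the freed left virtual index, leaves a single $D_2\times D_2$ matrix $T$ with $A_2^j=T B_2^j$ for every $j$. Running the same argument with $A$ and $B$ exchanged produces $\tilde T$ with $B_2^j=\tilde T A_2^j$; since $\{A_2^j\}$ spans all $D_2\times D_1$ matrices, $T\tilde T=\id$ and thus $T$ is invertible. Substituting $A_2^j=T B_2^j$ back into (I) gives $(A_1^i T - B_1^i)B_2^j=0$ for all $j$, and because $\{B_2^j\}$ spans all $D_2\times D_1$ matrices this forces $A_1^i=B_1^i T^{-1}$ for every $i$. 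At this stage the two tensor sets are related by a \emph{single} invertible gauge $T$ on the inner bond.

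Finally I would feed these two relations into (II): $A_2^j A_1^i = T\,B_2^j B_1^i\,T^{-1}$ must equal $B_2^j B_1^i$, so $T$ commutes with every product $B_2^j B_1^i$. Since $\{B_1^i\}$ spans all $D_1\times D_2$ matrices and $\{B_2^j\}$ spans all $D_2\times D_1$ matrices, their products $\{B_2^j B_1^i\}$ span the full algebra $M_{D_2}(\mathbb{C})$ (already products of elementary matrix units exhaust it). Hence $T$ lies in the commutant of the full matrix algebra, so by Schur's lemma $T=\lambda\,\id$ for a scalar $\lambda$; then $A_2^j=\lambda B_2^j$ and $A_1^i=\lambda^{-1}B_1^i$, which is the claim after renaming $\lambda\mapsto\lambda^{-1}$. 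The main obstacle I anticipate is essentially bookkeeping: carrying out the contraction with the one-sided inverse carefully enough to see that the output is a genuine single matrix $T$, independent of the freed virtual index and of $j$, and that its invertibility is secured before it is used. The two conceptual points that make the proof work are that a single hypothesis, say (I), only yields equivalence up to an arbitrary invertible $T$, and that the second hypothesis (II)—insertion on the \emph{complementary} bond—is exactly what collapses this $T$ to a scalar.
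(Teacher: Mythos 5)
Your proposal is correct and follows essentially the same route as the paper: both reduce the hypotheses to the two open-bond identities $A_2^jA_1^i=B_2^jB_1^i$ and $A_1^iA_2^j=B_1^iB_2^j$ and then peel off tensors using the one-sided inverses guaranteed by injectivity. The only cosmetic difference is the final step --- the paper inverts $A_2$ in both identities to obtain $W B_1^i = B_1^i Z$ and concludes $W,Z\propto \id$ from the resulting relation $W\otimes\id=\id\otimes Z$, whereas you invert $A_1$ once to extract a single gauge $T$ and let the second identity force $T$ into the commutant of the full matrix algebra; the two endgames are elementary and equivalent.
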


\begin{proof}
  From the first equation, as $X$ can be any matrix, 
  \begin{equation*}
    \begin{tikzpicture}
      \draw (0.5,0)--(2.5,0);
      \node[tensor,label=below:$A_2$] (t1) at (1,0) {};
      \node[tensor,label=below:$A_1$] (t2) at (2,0) {};
      \draw[very thick] (t1) --++ (0,0.5);
      \draw[very thick] (t2) --++ (0,0.5);
    \end{tikzpicture} \ = \ 
    \begin{tikzpicture}
      \draw (0.5,0)--(2.5,0);
      \node[tensor,label=below:$B_2$] (t1) at (1,0) {};
      \node[tensor,label=below:$B_1$] (t2) at (2,0) {};
      \draw[very thick] (t1) --++ (0,0.5);
      \draw[very thick] (t2) --++ (0,0.5);
    \end{tikzpicture} \ .
  \end{equation*}
  Similarly, from the second equation,
  \begin{equation*}
    \begin{tikzpicture}
      \draw (0.5,0)--(2.5,0);
      \node[tensor,label=below:$A_1$] at (1,0) {};
      \node[tensor,label=below:$A_2$] at (2,0) {};
      \draw[very thick] (t1) --++ (0,0.5);
      \draw[very thick] (t2) --++ (0,0.5);
    \end{tikzpicture} \ = \ 
    \begin{tikzpicture}
      \draw (0.5,0)--(2.5,0);
      \node[tensor,label=below:$B_1$] at (1,0) {};
      \node[tensor,label=below:$B_2$] at (2,0) {};
      \draw[very thick] (t1) --++ (0,0.5);
      \draw[very thick] (t2) --++ (0,0.5);
    \end{tikzpicture} \ .
  \end{equation*}  
  Therefore, applying $A_2^{-1}$ to both equations, we get that
  \begin{equation*}
    \begin{tikzpicture}
      \draw (-0.5,0)--(0.5,0);
      \node[tensor,label=below:$A_1$] (a) at (0,0) {};
      \draw[very thick] (a)--++(0,0.5);
    \end{tikzpicture} \ = \ 
    \begin{tikzpicture}
      \draw (-0.5,0)--(1.5,0);
      \node[tensor,label=below:$B_1$] (a) at (0,0) {};
      \draw[very thick] (a)--++(0,0.5);
      \node[tensor, label=below:$Z$] (x) at (1,0) {};
    \end{tikzpicture} \ = \ 
    \begin{tikzpicture}
      \draw (-0.5,0)--(1.5,0);
      \node[tensor,label=below:$B_1$] (a) at (1,0) {};
      \draw[very thick] (a)--++(0,0.5);
      \node[tensor, label=below:$W$] (x) at (0,0) {};
    \end{tikzpicture} \ ,
  \end{equation*}
  for some matrices $Z$ and $W$. Applying the inverse of $B_1$, we conclude that both $Z$ and $W$ are proportional to identity and hence $A_1 = \lambda B_1$. Similarly $A_2 = \mu B_2$ for some other constant $\mu$ and $\mu = 1/\lambda$.
\end{proof}

In the following, we show how to use these lemmas for injective MPS to prove the Fundamental Theorem. This is a special case of the next section, and only presented to explain the ideas.

\begin{theorem}\label{thm:inj_MPS}
  Let the tensors $A_i$ and $B_i$ define two injective, non translational invariant MPS on at least three particles. Suppose they generate the same state:
  \begin{equation*}
    \ket{\Psi} = 
    \begin{tikzpicture}
      \draw (0.5,0) rectangle (4.5,-0.5);
      \foreach \x/\t in {1/1,2/2,4/n}{
        \node[tensor,label=below:$A_\t$] (t\x) at (\x,0) {};
        \draw[very thick] (t\x) --++ (0,0.5);
      }
      \node[fill=white] at (3,0) {$\dots$};
    \end{tikzpicture} = 
    \begin{tikzpicture}
      \draw (0.5,0) rectangle (4.5,-0.5);
      \foreach \x/\t in {1/1,2/2,4/n}{
        \node[tensor,label=below:$B_\t$] (t\x) at (\x,0) {};
        \draw[very thick] (t\x) --++ (0,0.5);
      }
      \node[fill=white] at (3,0) {$\dots$};
    \end{tikzpicture} \ .
  \end{equation*}  
  Then there are invertible matrices $Z_i$ ($i=1,...,n+1$, $Z_{i+1}=Z_1$) such that 
    \begin{equation*}
      \begin{tikzpicture}
        \draw (-0.5,0)--(0.5,0);
        \node[tensor,label=below:$B_i$] (t) at (0,0) {};
        \draw[very thick] (t)--(0,0.5);
      \end{tikzpicture}  = 
      \begin{tikzpicture}
        \draw (-1,0)--(1,0);
        \node[tensor,label=below:$Z_i^{-1}$] at (-0.5,0) {};
        \node[tensor,label=below:\ $Z_{i+1}$\vphantom{$Z_i^{-1}$}] at (0.5,0) {};
        \node[tensor,label=below:$A_i$\vphantom{$Z_i^{-1}$}] (t) at (0,0) {};
        \draw[very thick] (t)--(0,0.5);
      \end{tikzpicture} \ .
    \end{equation*}  
    Moreover, the gauges $Z_i$ are unique up to a multiplicative constant.
\end{theorem}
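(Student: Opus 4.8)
The plan is to use \cref{lem:inj_isomorph} to attach a gauge matrix to each bond of the $n$-site loop, then use \cref{lem:inj_equal_tensors} to show that once these gauges are absorbed the defining tensors agree up to scalars, and finally to fix the scalars by a telescoping argument. The crucial device throughout is blocking: since the contraction of injective tensors is again injective, any grouping of the $n$ sites into two or three blocks yields a genuine injective MPS with fewer sites that generates the same state after the $A$- and $B$-tensors are blocked identically (blocking only regroups physical indices, so equality of the two states is preserved).

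First I would produce the bond gauges. Fix the bond $(i,i+1)$ and block the loop into the three injective tensors $A_i$, $A_{i+1}$, and $A_{i+2}\cdots A_{i-1}$ (the remaining $n-2\ge 1$ sites, contracted; this is where $n\ge 3$ enters). Applying \cref{lem:inj_isomorph} to this three-site MPS and its $B$-counterpart yields, for every $X$ on the bond $(i,i+1)$, a matrix $Y=Z_{i+1}^{-1}XZ_{i+1}$ with $Z_{i+1}$ invertible and unique up to a scalar, such that inserting $X$ in the $A$-loop equals inserting $Y$ in the $B$-loop. Doing this for every bond produces gauges $Z_1,\dots,Z_n$ (with $Z_{n+1}=Z_1$), where $Z_{i+1}$ lives on the $D_{i+1}$-dimensional bond between sites $i$ and $i+1$.

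Next I would absorb the gauges. Set $\tilde B_i := Z_i B_i Z_{i+1}^{-1}$; the loop of $\tilde B$-tensors is a gauge transform of the $B$-loop and hence generates the same state. Tracking the gauges through the telescoping product shows that inserting any $X$ on any bond of the $A$-loop produces exactly the same state as inserting the same $X$ on that bond of the $\tilde B$-loop: the conjugation $Y=Z_{i+1}^{-1}XZ_{i+1}$ from \cref{lem:inj_isomorph} is precisely cancelled by the factors $Z_{i+1}^{\pm1}$ picked up when rewriting $B$ in terms of $\tilde B$. Now block the loop into the single site $A_{i+1}$ and the contraction of the other $n-1$ sites; the two bonds of this two-site loop are $(i,i+1)$ and $(i+1,i+2)$, and by the previous sentence inserting an arbitrary matrix on either bond gives equal $A$- and $\tilde B$-states. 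Thus \cref{lem:inj_equal_tensors} applies and gives $A_{i+1}=\lambda_{i+1}\tilde B_{i+1}$ for a scalar $\lambda_{i+1}$; repeating for each site yields $A_j=\lambda_j\tilde B_j$ for all $j$.

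Finally I would fix the scalars and argue uniqueness. Since $A_j=\lambda_j\tilde B_j$ and the two loops generate the same nonzero state, comparing the closed trace forces $\prod_j\lambda_j=1$. Hence one may choose scalars $c_j$ with $c_{j+1}/c_j=\lambda_j^{-1}$ consistently around the loop (the closure condition $c_{n+1}=c_1$ is exactly $\prod_j\lambda_j=1$), and replacing $Z_j$ by $c_jZ_j$ converts $A_j=\lambda_j\tilde B_j=\lambda_j Z_j B_j Z_{j+1}^{-1}$ into the claimed $B_j=Z_j^{-1}A_jZ_{j+1}$. For uniqueness, two gauge families realizing the factorization each make $Z_{i+1}$ a gauge as in \cref{lem:inj_isomorph} on the bond $(i,i+1)$, hence equal up to a scalar; plugging back into the factorization forces all these scalars to coincide, so the $Z_i$ are determined up to one common constant. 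I expect the main obstacle to be this gluing step: \cref{lem:inj_isomorph} only controls single-bond insertions and fixes each bond gauge independently, so the real work is showing (via the two-bond input to \cref{lem:inj_equal_tensors}) that the independently defined gauges are mutually compatible at each site, and that the residual site scalars $\lambda_j$ close up consistently around the loop.
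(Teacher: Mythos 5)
Your proposal is correct and follows essentially the same route as the paper: block around each bond into a three-site injective MPS to invoke \cref{lem:inj_isomorph}, absorb the resulting gauges into $\tilde B_i = Z_i B_i Z_{i+1}^{-1}$ so that single-bond insertions of the same $X$ agree, then block to a two-site MPS and apply \cref{lem:inj_equal_tensors} to get $A_i=\lambda_i\tilde B_i$ with $\prod_i\lambda_i=1$, finally redistributing the scalars into the gauges. The telescoping/closure discussion and the uniqueness argument match the paper's (the latter is spelled out slightly more explicitly than in the paper, which is fine).
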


\begin{proof}
  First let us choose any edge, for example the edge $(1,2)$. Let us block the tensors $A_3,\dots A_n$ (and $B_3,\dots B_n$) into one tensor:
  \begin{align*}
    \begin{tikzpicture}
      \draw (0.5,0) -- (1.5,0);
      \node[tensor,label=below:$a$] (t) at (1,0) {};
      \draw[very thick] (t) --++ (0,0.5);
    \end{tikzpicture} &= 
    \begin{tikzpicture}
      \draw (2.5,0) -- (6.5,0);
      \foreach \x/\t in {3/3,4/4,6/n}{
        \node[tensor,label=below:$A_\t$] (t\x) at (\x,0) {};
        \draw[very thick] (t\x) --++ (0,0.5);
      }
      \node[fill=white] at (5,0) {$\dots$};
    \end{tikzpicture} \\
    \begin{tikzpicture}
      \draw (0.5,0) -- (1.5,0);
      \node[tensor,label=below:$b$] (t) at (1,0) {};
      \draw[very thick] (t) --++ (0,0.5);
    \end{tikzpicture} &= 
    \begin{tikzpicture}
      \draw (2.5,0) -- (6.5,0);
      \foreach \x/\t in {3/3,4/4,6/n}{
        \node[tensor,label=below:$B_\t$] (t\x) at (\x,0) {};
        \draw[very thick] (t\x) --++ (0,0.5);
      }
      \node[fill=white] at (5,0) {$\dots$};
    \end{tikzpicture} \ .
  \end{align*}
  As injectivity is preserved under blocking, both $a$ and $b$ are injective tensors. With this notation, the MPS can be written as a non translational invariant MPS on three sites:
  \begin{equation*}
    \ket{\Psi} = 
    \begin{tikzpicture}
      \draw (0.5,0) rectangle (3.5,-0.5);
      \foreach \x/\t in {1/A_1,2/A_2,3/a}{
        \node[tensor,label=below:$\t$] (t\x) at (\x,0) {};
        \draw[very thick] (t\x) --++ (0,0.5);
      }
    \end{tikzpicture} = 
    \begin{tikzpicture}
      \draw (0.5,0) rectangle (3.5,-0.5);
      \foreach \x/\t in {1/B_1,2/B_2,3/b}{
        \node[tensor,label=below:$\t$] (t\x) at (\x,0) {};
        \draw[very thick] (t\x) --++ (0,0.5);
      }
    \end{tikzpicture} \ .
  \end{equation*}    
  Therefore \cref{lem:inj_isomorph} can be applied leading to a gauge transform $Z_2$ on the edge $(1,2)$ that, for all $X$ with  $Y = Z_2^{-1} X Z_2$, satisfies
  \begin{equation*}
    \begin{tikzpicture}
      \draw (0.5,0) rectangle (3.5,-0.5);
      \foreach \x/\t in {1/$A_1$,2/$A_2$,3/$a$}{
        \node[tensor,label=below:\t] (t\x) at (\x,0) {};
        \draw[very thick] (t\x) --++ (0,0.5);
      }
      \node[tensor, red,label=above:$X$] (x) at (1.5,0) {};
    \end{tikzpicture} = 
    \begin{tikzpicture}
      \draw (0.5,0) rectangle (3.5,-0.5);
      \foreach \x/\t in {1/$B_1$,2/$B_2$,3/$b$}{
        \node[tensor,label=below:\t] (t\x) at (\x,0) {};
        \draw[very thick] (t\x) --++ (0,0.5);
      }
      \node[tensor, red,label=above:$Y$] (x) at (1.5,0) {};
    \end{tikzpicture} \ .
  \end{equation*}
  The lemma can be applied to all edges leading to gauge $Z_i$ on the edge $(i-1,i)$. After incorporating these gauges into the tensor $B_i$:
  \begin{equation}\label{eq:B tilde}
    \begin{tikzpicture}
      \draw (-0.5,0)--(0.5,0);
      \node[tensor,label=below:$\tilde{B}_i$] (t) at (0,0) {};
      \draw[very thick] (t)--(0,0.5);
    \end{tikzpicture}  = 
    \begin{tikzpicture}
      \draw (-1,0)--(1,0);
      \node[tensor,label=below:$Z_i$\vphantom{$Z_{i+1}^{-1}$}] at (-0.5,0) {};
      \node[tensor,label=below:\ $Z_{i+1}^{-1}$] at (0.5,0) {};
      \node[tensor,label=below:$B_i$\vphantom{$Z_{i+1}^{-1}$}] (t) at (0,0) {};
      \draw[very thick] (t)--(0,0.5);
    \end{tikzpicture} \ ,
  \end{equation}  
  we arrive at two MPS with the property that on every bond for every matrix $X$
  \begin{equation*} 
    \begin{tikzpicture}
      \draw (0.5,0) rectangle (4.5,-0.6);
      \foreach \x/\t in {1/1,2/2,4/n}{
        \node[tensor,label=below:$A_\t$] (t\x) at (\x,0) {};
        \draw[very thick] (t\x) --++ (0,0.5);
      }
      \node[fill=white] at (3,0) {$\dots$};
      \node[tensor, red,label=above:$X$] (x) at (1.5,0) {};
    \end{tikzpicture} = 
    \begin{tikzpicture}
      \draw (0.5,0) rectangle (4.5,-0.6);
      \foreach \x/\t in {1/1,2/2,4/n}{
        \node[tensor,label=below:$\tilde{B}_\t$] (t\x) at (\x,0) {};
        \draw[very thick] (t\x) --++ (0,0.5);
      }
      \node[fill=white] at (3,0) {$\dots$};
      \node[tensor, red,label=above:$X$] (x) at (1.5,0) {};
    \end{tikzpicture} \ .
  \end{equation*}
  In particular,
  \begin{equation*} 
    \begin{tikzpicture}
      \draw (0.5,0) rectangle (4.5,-0.6);
      \foreach \x/\t in {1/1,2/2,4/n}{
        \node[tensor,label=below:$A_\t$] (t\x) at (\x,0) {};
        \draw[very thick] (t\x) --++ (0,0.5);
      }
      \node[fill=white] at (3,0) {$\dots$};
      \node[tensor, red,label=below:$Y$] (x) at (1.5,-0.6) {};
    \end{tikzpicture} = 
    \begin{tikzpicture}
      \draw (0.5,0) rectangle (4.5,-0.6);
      \foreach \x/\t in {1/1,2/2,4/n}{
        \node[tensor,label=below:$\tilde{B}_\t$] (t\x) at (\x,0) {};
        \draw[very thick] (t\x) --++ (0,0.5);
      }
      \node[fill=white] at (3,0) {$\dots$};
      \node[tensor, red,label=below:$Y$] (x) at (1.5,-0.6) {};
    \end{tikzpicture} \ .
  \end{equation*}
  Let us now block the MPS into a two partite MPS:
  \begin{equation*}
    \ket{\Psi} = 
    \begin{tikzpicture}
      \draw (0.5,0) rectangle (2.5,-0.5);
      \foreach \x/\t in {1/A_1,2/a}{
        \node[tensor,label=below:$\t$] (t\x) at (\x,0) {};
        \draw[very thick] (t\x) --++ (0,0.5);
      }
    \end{tikzpicture} = 
    \begin{tikzpicture}
      \draw (0.5,0) rectangle (2.5,-0.5);
      \foreach \x/\t in {1/B_1,2/b}{
        \node[tensor,label=below:$\t$] (t\x) at (\x,0) {};
        \draw[very thick] (t\x) --++ (0,0.5);
      }
    \end{tikzpicture}\ ,  
  \end{equation*}
  with 
  \begin{align*} 
    \begin{tikzpicture}
      \node[tensor,label=below:$a$] (t) at (1,0) {};
      \draw[very thick] (t) --++ (0,0.5);
      \draw (0.5,0)--(1.5,0);
    \end{tikzpicture} & = 
    \begin{tikzpicture}
      \draw (1.5,0) -- (4.5,0);
      \foreach \x/\t in {2/2,4/n}{
        \node[tensor,label=below:$A_\t$] (t\x) at (\x,0) {};
        \draw[very thick] (t\x) --++ (0,0.5);
      }
      \node[fill=white] at (3,0) {$\dots$};
    \end{tikzpicture} \\
    \begin{tikzpicture}
      \node[tensor,label=below:$b$] (t) at (1,0) {};
      \draw[very thick] (t) --++ (0,0.5);
      \draw (0.5,0)--(1.5,0);
    \end{tikzpicture} & = 
    \begin{tikzpicture}
      \draw (1.5,0) -- (4.5,0);
      \foreach \x/\t in {2/2,4/n}{
        \node[tensor,label=below:$\tilde{B}_\t$] (t\x) at (\x,0) {};
        \draw[very thick] (t\x) --++ (0,0.5);
      }
      \node[fill=white] at (3,0) {$\dots$};
    \end{tikzpicture} \ . 
  \end{align*}  
  After this blocking, the requirements of \cref{lem:inj_equal_tensors} are satisfied, therefore $A_1 = \lambda_1 \tilde{B}_1$. Similarly for all $i$,  $A_i = \lambda_i \tilde{B}_i$ and $\prod_i \lambda_i = 1$. Notice that these $\lambda_i$ can be sequentially absorbed into the gauges $Z_i$ in \cref{eq:B tilde}. 
\end{proof}

Notice that if the two MPS are translational invariant, i.e.\ the tensors at each vertex are the same, then the gauges relating them are also translational invariant (up to a constant), as 
\begin{equation*}
  \begin{tikzpicture}
    \draw (-1,0)--(1,0);
    \node[tensor,label=below:$Z_{i-1}^{-1}$] at (-0.5,0) {};
    \node[tensor,label=below:$Z_{i}$\vphantom{$Z_{i-1}^{-1}$}] at (0.5,0) {};
    \node[tensor,label=below:$A$\vphantom{$Z_{i-1}^{-1}$}] (t) at (0,0) {};
    \draw[very thick] (t)--(0,0.5);
  \end{tikzpicture} \ =
  \begin{tikzpicture}
    \draw (-1,0)--(1,0);
    \node[tensor,label=below:$Z_i^{-1}$] at (-0.5,0) {};
    \node[tensor,label=below:$Z_{i+1}$\vphantom{$Z_i^{-1}$}] at (0.5,0) {};
    \node[tensor,label=below:$A$\vphantom{$Z_i^{-1}$}] (t) at (0,0) {};
    \draw[very thick] (t)--(0,0.5);
  \end{tikzpicture} \ \Rightarrow \ Z_i \propto Z_{i+1},
\end{equation*}  
which can be seen by inverting the tensor $A$. We conclude therefore that
\begin{corollary}
  Let the tensors $A$ and $B$ define two injective, translational invariant MPS on $n\geq 3$ particles. Suppose they generate the same state:
  \begin{equation*}
    \ket{\Psi} = 
    \begin{tikzpicture}
      \draw (0.5,0) rectangle (4.5,-0.5);
      \foreach \x/\t in {1/1,2/2,4/n}{
        \node[tensor,label=below:$A$] (t\x) at (\x,0) {};
        \draw[very thick] (t\x) --++ (0,0.5);
      }
      \node[fill=white] at (3,0) {$\dots$};
    \end{tikzpicture} = 
    \begin{tikzpicture}
      \draw (0.5,0) rectangle (4.5,-0.5);
      \foreach \x/\t in {1/1,2/2,4/n}{
        \node[tensor,label=below:$B$] (t\x) at (\x,0) {};
        \draw[very thick] (t\x) --++ (0,0.5);
      }
      \node[fill=white] at (3,0) {$\dots$};
    \end{tikzpicture} \ .
  \end{equation*}  
  Then there is an invertible matrix $Z$ and a constant $\lambda\in\mathbb{C}$, $\lambda^n = 1$, such that 
    \begin{equation*}
      \begin{tikzpicture}
        \draw (-0.5,0)--(0.5,0);
        \node[tensor,label=below:$B$] (t) at (0,0) {};
        \draw[very thick] (t)--(0,0.5);
      \end{tikzpicture}  =  \lambda \cdot\ 
      \begin{tikzpicture}
        \draw (-1,0)--(1,0);
        \node[tensor,label=below:$Z^{-1}$] at (-0.5,0) {};
        \node[tensor,label=below:$Z$\vphantom{$Z^{-1}$}] at (0.5,0) {};
        \node[tensor,label=below:$A$\vphantom{$Z^{-1}$}] (t) at (0,0) {};
        \draw[very thick] (t)--(0,0.5);
      \end{tikzpicture} \ .
    \end{equation*}    
    Moreover, the gauge $Z$ is unique up to a multiplicative constant.    
\end{corollary}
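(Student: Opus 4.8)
The plan is to derive the translationally invariant statement directly from \cref{thm:inj_MPS}, treating the two translationally invariant MPS as ordinary (a priori non translationally invariant) injective MPS with $A_i = A$ and $B_i = B$ for all $i$. First I would apply \cref{thm:inj_MPS} to obtain invertible matrices $Z_1,\dots,Z_n$ with the periodic closure $Z_{n+1}=Z_1$ and
$B = Z_i^{-1} A\, Z_{i+1}$ on every edge $i$.

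The key step is to collapse all the $Z_i$ to scalar multiples of a single matrix $Z$. This is exactly the observation recorded immediately before the corollary: evaluating the gauge relation at two consecutive sites gives $Z_{i-1}^{-1} A Z_i = Z_i^{-1} A Z_{i+1}$ (both sides equal $B$), and inverting the injective tensor $A$ forces these gauges to be proportional, $Z_{i-1}\propto Z_i\propto Z_{i+1}$. Concretely, rewriting the equality as $A^p = (Z_{i-1}Z_i^{-1})\,A^p\,(Z_{i+1}Z_i^{-1})$ for every physical index $p$ and contracting with the one-sided inverse $A^{-1}$ collapses both bracketed factors to multiples of the identity, so the ratio $\rho := Z_{i+1}/Z_i$ is well defined and the same for all $i$. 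Hence $Z_i = \rho^{\,i-1} Z$ with $Z := Z_1$.

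Substituting back yields $B = Z_i^{-1} A Z_{i+1} = \rho\, Z_i^{-1} A Z_i = \rho\, Z^{-1} A Z$, where the last equality uses $Z_i = \rho^{\,i-1}Z$; this identifies the constant of the statement as $\lambda := \rho$ and gives $B = \lambda\, Z^{-1} A Z$. To pin down $\lambda^n = 1$ I would invoke the periodic closure: from $Z_{n+1} = \rho^{\,n} Z$ and $Z_{n+1} = Z_1 = Z$ one gets $\rho^{\,n} Z = Z$, hence $\lambda^n = \rho^{\,n} = 1$. The uniqueness of $Z$ up to a multiplicative constant is then inherited verbatim from the corresponding uniqueness of the gauges $Z_i$ in \cref{thm:inj_MPS}.

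I do not expect a genuine obstacle here, since all the hard content already lives in \cref{thm:inj_MPS} and in the injectivity argument that makes the $Z_i$ proportional. The only point requiring a little care is to argue that $\lambda$ cannot depend on $i$: this is precisely what the constant-ratio computation above guarantees, and it is ultimately underwritten by $Z^{-1}AZ$ being injective (hence nonzero), so that the scalar relating it to the fixed tensor $B$ is uniquely determined. The remaining work is routine bookkeeping of the scalars, tracking how the periodicity $Z_{n+1}=Z_1$ turns into the root-of-unity condition $\lambda^n = 1$.
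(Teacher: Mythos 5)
Your proposal is correct and follows essentially the same route as the paper: apply \cref{thm:inj_MPS} with $A_i=A$, $B_i=B$, then invert the injective tensor $A$ to conclude $Z_i\propto Z_{i+1}$ with a constant ratio, and use the periodic closure $Z_{n+1}=Z_1$ to obtain $\lambda^n=1$. The paper records exactly this observation in the paragraph preceding the corollary; your constant-ratio bookkeeping just spells it out in more detail.
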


\section{Injective PEPS}

In general, PEPS can be defined on any graph (no double edges are allowed, but there are extra edges attached to every vertex that is associated to a physical particle). The state corresponding to the PEPS is obtained by placing tensors on each vertex and contracting all indices corresponding to the edges of the graph. An example of a tensor network is depicted below:
\begin{equation*}
  \begin{tikzpicture}
    \coordinate (c1) at (0,0);
    \coordinate (c2) at (1,0);
    \coordinate (c3) at (1.5,1);
    \coordinate (c4) at (0.6,1.5);
    \coordinate (c5) at (-0.6,1.5);
    \foreach \x in {1,2,3,4,5}{
      \filldraw (c\x) circle (0.07);
      \draw[very thick] (c\x) --++(0,0.5);
    }
    \draw (c1)--(c2)--(c3)--(c4)--(c5)--(c1);
    \draw (c2)--(c5);
  \end{tikzpicture} \ .
\end{equation*}
This definition includes TNs such as MPS, 2D PEPS and higher-dimensional PEPS. It also includes PEPS defined on arbitrary lattices, such as hyperbolic lattices used in the AdS/CFT correspondence\cite{Pastawski2015,Hayden2016}.  

We say that the tensor network is \emph{injective} if all tensors interpreted as maps from the virtual space to the physical one are injective. This is equivalent to the tensor having a one-sided inverse, as in the MPS case. Similar to the MPS case, the contraction of two injective tensors results in an injective tensor. 

One can group particles of the PEPS together treating them as one bigger particle. This regrouping can naturally be reflected in PEPS. In particular, we will block tensor networks to a three particle MPS as follows. Choose one edge of the PEPS and group together all vertices except the endpoints of the edge. This regrouped tensor together with the two endpoints of the edge forms a three-partite MPS as illustrated below; notice that the resulting MPS is injective: 
\begin{equation}\label{eq:block_to_mps}
  \begin{tikzpicture}
    \coordinate (c1) at (0,0);
    \coordinate (c2) at (1,0);
    \coordinate (c3) at (1.5,1);
    \coordinate (c4) at (0.6,1.5);
    \coordinate (c5) at (-0.6,1.5);
    \foreach \x in {1,2,3,4,5}{
      \filldraw (c\x) circle (0.07);
      \draw[very thick] (c\x) --++(0,0.5);
    }
    \draw (c1)--(c2)--(c3)--(c4)--(c5)--(c1);
    \draw (c2)--(c5);
    \draw[red] (c1) circle (0.3cm);
    \draw[red] (c5) circle (0.3cm);
    \draw[rounded corners=2mm,red] (0.4,-0.2) rectangle (1.7,1.7);
    \node[red] at (-0.45,-0.35) {$A'_2$};
    \node[red] at (-1.1,1.8) {$A'_1$};
    \node[red] at (1.7,1.8) {$A'_3$};
  \end{tikzpicture} \ \Rightarrow \ 
  \begin{tikzpicture}
    \draw (0.5,0) rectangle (3.5,-0.6);
    \foreach \x in {1,2,3}{
      \node[tensor,label=below:$A'_\x$] (t\x) at (\x,0) {};
      \draw[very thick] (t\x) --++ (0,0.5);
    }
  \end{tikzpicture}  \ .  
\end{equation}
Consider now two injective PEPS defined on the same graph that generate the same state:
\begin{equation}\label{eq:TN_5_particle_eq}
  \begin{tikzpicture}
    \node[tensor, label=below:$A_1$] (c1) at (0,0) {};
    \node[tensor, label=below:$A_2$] (c2) at (1,0) {};
    \node[tensor, label=below right:$A_3$] (c3) at (1.5,1) {}; 
    \node[tensor, label=below:$A_4$] (c4) at (0.6,1.5) {};
    \node[tensor, label=below left:$A_5$] (c5) at (-0.6,1.5) {};
    \foreach \x in {1,2,3,4,5}{
      \draw[very thick] (c\x) --++(0,0.5);
    }
    \draw (c1)--(c2)--(c3)--(c4)--(c5)--(c1);
    \draw (c2)--(c5);
  \end{tikzpicture} \ = \ 
  \begin{tikzpicture}
    \node[tensor, label=below:$B_1$] (c1) at (0,0) {};
    \node[tensor, label=below:$B_2$] (c2) at (1,0) {};
    \node[tensor, label=below right:$B_3$] (c3) at (1.5,1) {};
    \node[tensor, label=below:$B_4$] (c4) at (0.6,1.5) {};
    \node[tensor, label=below left:$B_5$] (c5) at (-0.6,1.5) {};
    \foreach \x in {1,2,3,4,5}{
      \draw[very thick] (c\x) --++(0,0.5);
    }
    \draw (c1)--(c2)--(c3)--(c4)--(c5)--(c1);
    \draw (c2)--(c5);
  \end{tikzpicture} \ .
\end{equation}
After blocking to MPS as described above, we arrive at two injective MPS generating the same state; hence \cref{lem:inj_isomorph} can be applied. This establishes a gauge transformation on the edge $(1,5)$ of the original PEPS. Similar regrouping can be done around every edge; applying then \cref{lem:inj_isomorph} results in a gauge transformation assigned to every edge. Define now the tensors $\tilde{B}_i$ by absorbing these gauges into the tensors $B_i$. For the resulting PEPS, we have that for every edge and matrix $X$  
\begin{equation}\label{eq:inj_equal_edge}
  \begin{tikzpicture}
    \node[tensor, label=below:$A_1$] (c1) at (0,0) {};
    \node[tensor, label=below:$A_2$] (c2) at (1,0) {};
    \node[tensor, label=below right:$A_3$] (c3) at (1.5,1) {};
    \node[tensor, label=below:$A_4$] (c4) at (0.6,1.5) {};
    \node[tensor, label=below left:$A_5$] (c5) at (-0.6,1.5) {};
    \foreach \x in {1,2,3,4,5}{
      \draw[very thick] (c\x) --++(0,0.5);
    }
    \draw (c1)--(c2)--(c3)--(c4)--(c5)--(c1);
    \draw (c2)--(c5);
    \node[red,tensor, label=right:$X$] at ($(c2)!0.5!(c5)$) {};
  \end{tikzpicture} \ = \ 
  \begin{tikzpicture}
    \node[tensor, label=below:$\tilde{B}_1$] (c1) at (0,0) {};
    \node[tensor, label=below:$\tilde{B}_2$] (c2) at (1,0) {};
    \node[tensor, label=below right:$\tilde{B}_3$] (c3) at (1.5,1) {};
    \node[tensor, label=below:$\tilde{B}_4$] (c4) at (0.6,1.5) {};
    \node[tensor, label=below left:$\tilde{B}_5$] (c5) at (-0.6,1.5) {};
    \foreach \x in {1,2,3,4,5}{
      \draw[very thick] (c\x) --++(0,0.5);
    }
    \draw (c1)--(c2)--(c3)--(c4)--(c5)--(c1);
    \draw (c2)--(c5);
    \node[red,tensor, label=right:$X$] at ($(c2)!0.5!(c5)$) {};
  \end{tikzpicture} \ .
\end{equation}
To conclude that $A_i = \lambda_i \tilde{B}_i$, we will need to use a more general version of \cref{lem:inj_equal_tensors}:
\begin{lemma}\label{lem:inj_equal_tensors_2}
  Let $A_1,A_2$ and $B_1,B_2$ be injective tensors. Suppose for all $X$ on all edges
  \begin{equation}\label{eq:lem_inj_eq_ten_2}
    \begin{tikzpicture}
      \foreach \x/\t in {1/1,3/2}{
        \node[tensor,label=below:$A_\t$] (t\t) at (\x,0) {};
        \draw[very thick] (t\x) --++ (0,0.5);
      }
      \draw (t1) to[bend left=30] node[midway,tensor,red,label=above:$X$] {} (t2);
      \draw (t1) to[bend right=30] (t2);
      \draw (t1) to[bend right=45] (t2);
      \node at (2,0.05) {$\vdots$};
    \end{tikzpicture} = 
    \begin{tikzpicture}
      \foreach \x/\t in {1/1,3/2}{
        \node[tensor,label=below:$B_\t$] (t\t) at (\x,0) {};
        \draw[very thick] (t\x) --++ (0,0.5);
      }
      \draw (t1) to[bend left=30] node[midway,tensor,red,label=above:$X$] {} (t2);
      \draw (t1) to[bend right=30] (t2);
      \draw (t1) to[bend right=45] (t2);
      \node at (2,0.05) {$\vdots$};
    \end{tikzpicture}  \ .    
  \end{equation}
  Then $A_1 = \lambda B_1$ and $A_2 = \lambda^{-1} B_2$ for some constant $\lambda$.  
\end{lemma}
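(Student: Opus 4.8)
The plan is to follow the strategy of \cref{lem:inj_equal_tensors}, but to handle the several edges joining the two tensors by working on the full joint virtual space $V=V_1\otimes\cdots\otimes V_k$, where $V_e$ denotes the space carried by edge $e$. First, for each edge $e$ I would use that \cref{eq:lem_inj_eq_ten_2} holds for \emph{every} matrix $X$ on that edge: since the physical basis vectors $\ket{ij}$ are linearly independent, stripping off $X$ shows that the two tensors obtained by contracting $A_1,A_2$ (respectively $B_1,B_2$) along all edges \emph{except} $e$, leaving edge $e$ open, must coincide. Call this identity $\mathcal A_e=\mathcal B_e$; it is the multi-edge analogue of the ``cut one bond open'' step in \cref{lem:inj_equal_tensors}.

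Next I would peel off $A_2$ from the identity $\mathcal A_1=\mathcal B_1$ by contracting the physical index of $A_2$ with its one-sided inverse $A_2^{-1}$. The crucial point is that the defining property of $A_2^{-1}$ is the identity on the \emph{whole} virtual space of $A_2$, i.e.\ the product of deltas over all $k$ legs; hence on the left-hand side this contraction cleanly returns $A_1$, with the leg on edge $1$ left as an untouched spectator. On the right-hand side $A_2^{-1}$ instead meets $B_2$, producing $B_1$ ``dressed'' by a fixed operator on $V$. Because the left-hand side carries no dependence on the edge-$1$ index running through $A_2^{-1}$, this dressing operator is forced to factor as $\id_{V_1}\otimes M^{(1)}$ with $M^{(1)}\in\End(V_2\otimes\cdots\otimes V_k)$; that is, $A_1^i=(\id_{V_1}\otimes M^{(1)})\,B_1^i$ for all $i$. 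Repeating the identical peeling around each edge $e$ yields operators $O^{(e)}=\id_{V_e}\otimes M^{(e)}$ with $A_1^i=O^{(e)}B_1^i$.

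Finally I would combine these relations. Injectivity of $B_1$ means the covectors $\{B_1^i\}_i$ span the dual of $V$, so from $O^{(e)}B_1^i=O^{(f)}B_1^i$ for all $i$ we conclude that all the $O^{(e)}$ are one and the same operator $O$. This $O$ is simultaneously of the form $\id_{V_e}\otimes(\text{rest})$ for \emph{every} edge $e$; a short linear-algebra argument (trivial on $V_e$ and on $V_f$ forces trivial on $V_e\otimes V_f$, then induct over the edges) shows such an operator must be a scalar, $O=\lambda\,\id$. Hence $A_1=\lambda B_1$, and substituting $A_1=\lambda B_1$, $A_2=\mu B_2$ into any $\mathcal A_e=\mathcal B_e$ (whose common value is nonzero by injectivity) gives $\lambda\mu=1$, so $A_2=\lambda^{-1}B_2$.

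The step I expect to be the main obstacle is the middle one: making the peeling genuinely clean. One must invoke the full-virtual-space inverse property of $A_2^{-1}$ rather than a single-leg cancellation, and then argue that the leftover dressing operator factorizes trivially on the open edge. Once the ``$\id_{V_e}\otimes M^{(e)}$'' structure has been established for every edge, the spanning argument and the tensor-factor triviality lemma close the proof routinely; the special case $k=1$ needs no separate treatment, since there the complementary space is trivial and $O$ is a scalar immediately.
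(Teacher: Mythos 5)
Your proof is correct and follows essentially the same route as the paper's proof of \cref{lem:inj_equal_tensors_2}: cut each bond open using the ``for all $X$'' hypothesis, peel off $A_2$ with its one-sided inverse to obtain $A_1^i=(\id_{V_e}\otimes M^{(e)})B_1^i$ for every edge $e$, use injectivity of $B_1$ to identify all these dressing operators, and conclude that an operator acting trivially on every tensor factor is a scalar. The differences are cosmetic — you argue in operator/index language for general $k$ and spell out the induction, while the paper treats $k=3$ diagrammatically with the gauges $Z,U,W$.
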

\begin{proof}
  W.l.o.g.\ suppose that there are three lines connecting the tensors. Similar to the proof of \cref{lem:inj_equal_tensors}, if \cref{eq:lem_inj_eq_ten_2} holds for all $X$, then
  \begin{align*}
    \begin{tikzpicture}
      \foreach \x/\t in {1/1,3/2}{
        \node[tensor,label=below:$A_\t$] (t\t) at (\x,0) {};
        \draw[very thick] (t\x) --++ (0,0.5);
      }
      \draw (t1) to[bend left=30]  (t2);
      \draw (t1) to[bend right=10] (t2);
      \draw (t1) to[bend right=45] (t2);
      \fill[white] (1.3,0.1) rectangle (2.7,0.4);
    \end{tikzpicture} &= 
    \begin{tikzpicture}
      \foreach \x/\t in {1/1,3/2}{
        \node[tensor,label=below:$B_\t$] (t\t) at (\x,0) {};
        \draw[very thick] (t\x) --++ (0,0.5);
      }
      \draw (t1) to[bend left=30]  (t2);
      \draw (t1) to[bend right=10] (t2);
      \draw (t1) to[bend right=45] (t2);
      \fill[white] (1.3,0.1) rectangle (2.7,0.4);
    \end{tikzpicture}  \\
    \begin{tikzpicture}
      \foreach \x/\t in {1/1,3/2}{
        \node[tensor,label=below:$A_\t$] (t\t) at (\x,0) {};
        \draw[very thick] (t\x) --++ (0,0.5);
      }
      \draw (t1) to[bend left=30]  (t2);
      \draw (t1) to[bend right=10] (t2);
      \draw (t1) to[bend right=45] (t2);
      \fill[white] (1.3,-0.15) rectangle (2.7,-0.5);
    \end{tikzpicture} &= 
    \begin{tikzpicture}
      \foreach \x/\t in {1/1,3/2}{
        \node[tensor,label=below:$B_\t$] (t\t) at (\x,0) {};
        \draw[very thick] (t\x) --++ (0,0.5);
      }
      \draw (t1) to[bend left=30]  (t2);
      \draw (t1) to[bend right=10] (t2);
      \draw (t1) to[bend right=45] (t2);
      \fill[white] (1.3,-0.15) rectangle (2.7,-0.5);
    \end{tikzpicture}  \\ 
    \begin{tikzpicture}
      \foreach \x/\t in {1/1,3/2}{
        \node[tensor,label=below:$A_\t$] (t\t) at (\x,0) {};
        \draw[very thick] (t\x) --++ (0,0.5);
      }
      \draw (t1) to[bend left=30]  (t2);
      \draw (t1) to[bend right=10] (t2);
      \draw (t1) to[bend right=45] (t2);
      \fill[white] (1.3,-0.15) rectangle (2.7,0.1);
    \end{tikzpicture} &= 
    \begin{tikzpicture}
      \foreach \x/\t in {1/1,3/2}{
        \node[tensor,label=below:$B_\t$] (t\t) at (\x,0) {};
        \draw[very thick] (t\x) --++ (0,0.5);
      }
      \draw (t1) to[bend left=30]  (t2);
      \draw (t1) to[bend right=10] (t2);
      \draw (t1) to[bend right=45] (t2);
      \fill[white] (1.3,-0.15) rectangle (2.7,0.1);
    \end{tikzpicture}  \ .              
  \end{align*} 
  Applying now the inverse of $A_2$, we conclude that 
  \begin{equation*}
    \begin{tikzpicture}
      \node[tensor, label=below:$A_1$] (a) {};
      \draw[very thick] (a) --++ (0,0.5);
      \draw (a) -- (30:0.8);
      \draw (a) -- (-10:0.8);
      \draw (a) -- (-45:0.8);
    \end{tikzpicture} = 
    \begin{tikzpicture}
      \node[tensor, label=below:$B_1$] (a) {};
      \draw[very thick] (a) --++ (0,0.5);
      \draw (a) -- (30:0.8);
      \draw (a) -- (-10:0.8);
      \draw (a) -- (-45:0.8);
      \draw[red, line width=1mm] (-55:0.6) arc (-55:0:0.6);
      \node at (-27.5:0.9) {$Z$};
    \end{tikzpicture} = 
    \begin{tikzpicture}
      \node[tensor, label=below:$B_1$] (a) {};
      \draw[very thick] (a) --++ (0,0.5);
      \draw (a) -- (30:0.8);
      \draw (a) -- (-10:0.8);
      \draw (a) -- (-45:0.8);
      \draw[red, line width=1mm] (-20:0.6) arc (-20:40:0.6);
      \node at (10:0.9) {$U$};
    \end{tikzpicture} = 
    \begin{tikzpicture}
      \node[tensor, label=below:$B_1$] (a) {};
      \draw[very thick] (a) --++ (0,0.5);
      \draw (a) -- (30:0.8);
      \draw (a) -- (-10:0.4);
      \draw (a) -- (-45:0.8);
      \draw[red, line width=1mm] (-55:0.6) arc (-55:40:0.6);
      \node at (-7.5:0.9) {$W$};
    \end{tikzpicture} \ . 
  \end{equation*}
  Inverting $B_1$ we conclude that the gauges $Z,U,W$ satisfy 
  \begin{equation*}
    \sum_i \id \otimes Z^{(1)}_i \otimes Z^{(2)}_i= \sum_i  U^{(1)}_i \otimes U^{(2)}_i \otimes \id = \sum_i  W^{(1)}_i \otimes \id  \otimes W^{(2)}_i , 
  \end{equation*}
  where we have written 
  \begin{align*}
    Z &=\sum_i Z_i^{(1)}\otimes Z_i^{(2)}\\
    U &=\sum_i U_i^{(1)}\otimes U_i^{(2)}\\ 
    W &=\sum_i W_i^{(1)}\otimes W_i^{(2)} \ .
  \end{align*}
  Therefore all three gauges are proportional to the identity and thus $A_1 = \lambda B_1$. Similarly we get $A_2 = 1/\lambda B_2$.
\end{proof}

Let us now block the PEPS in \cref{eq:inj_equal_edge} into two injective tensors: select one tensor and block all the others into another injective tensor. These PEPS now satisfy the requirements of  \cref{lem:inj_equal_tensors_2} and thus for all $i$, $A_i = \lambda_i \tilde{B}_i$ for some constant $\lambda_i$, giving the Fundamental Theorem for general injective PEPS (the constants $\lambda_i$ can be incorporated into the gauge transformations):
\begin{theorem}\label{thm:inj}
  Two injective PEPS  -- defined on a graph that does not contain double edges and self-loops -- generate the same state if and only if the generating tensors are related with a local gauge. These gauges are unique up to a multiplicative constant. 
\end{theorem}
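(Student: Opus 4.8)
The plan is to prove the nontrivial (``only if'') direction, since the converse is immediate: if the two sets of tensors are related by a local gauge, then on every edge the gauge and its inverse are contracted against one another and cancel, so the resulting states coincide. For the forward direction I would mirror the structure of the proof of \cref{thm:inj_MPS}, replacing the one-dimensional ``slicing'' by the local blocking introduced in \cref{eq:block_to_mps}.

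First, for each edge $e=(u,v)$ of the graph I would block all vertices other than $u$ and $v$ into a single tensor. Because the contraction of injective tensors is injective, this produces a three-partite injective MPS -- the two endpoints and the blocked remainder -- and the $A$- and $B$-networks yield the same such MPS. Applying \cref{lem:inj_isomorph} then furnishes, for this edge, an invertible matrix $Z_e$, unique up to a scalar and satisfying $Y=Z_e^{-1}XZ_e$, such that inserting $X$ on $e$ in the $A$-network equals inserting $Y$ on $e$ in the $B$-network. Doing this independently for every edge and absorbing each $Z_e$ (and $Z_e^{-1}$) into the two tensors at its endpoints defines tensors $\tilde{B}_i$ for which, as recorded in \cref{eq:inj_equal_edge}, inserting an arbitrary matrix on any single edge transforms the $A$- and $\tilde{B}$-networks identically.

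Next I would fix one vertex $i$ and block all remaining vertices into a single injective tensor. The vertex $i$ is then joined to this blocked remainder by the (possibly several) edges incident to $i$, and the property of \cref{eq:inj_equal_edge} asserts exactly that inserting any $X$ on any of these edges acts the same way on both networks -- which is precisely the hypothesis of \cref{lem:inj_equal_tensors_2}. That lemma then yields $A_i=\lambda_i\tilde{B}_i$ for some scalar $\lambda_i$. Repeating over all vertices and absorbing the $\lambda_i$ sequentially into the edge gauges (exactly as at the end of the proof of \cref{thm:inj_MPS}) gives the desired local-gauge relation.

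The main obstacle is that, after blocking everything except a single vertex, that vertex is in general connected to the blocked remainder by \emph{more than one} edge, so the two-site \cref{lem:inj_equal_tensors} no longer applies; this is exactly what the multi-edge \cref{lem:inj_equal_tensors_2} is designed to handle, and checking that its hypothesis holds on all incident edges \emph{simultaneously} is the crux of the argument. Finally, uniqueness up to a multiplicative constant follows by combining the uniqueness of each $Z_e$ from \cref{lem:inj_isomorph} with the residual scalar freedom in the $\lambda_i$, which is the only remaining ambiguity.
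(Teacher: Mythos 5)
Your proposal is correct and follows essentially the same route as the paper: blocking around each edge to a three-partite injective MPS to invoke \cref{lem:inj_isomorph}, absorbing the resulting gauges to obtain \cref{eq:inj_equal_edge}, and then blocking the complement of each single vertex to apply \cref{lem:inj_equal_tensors_2}, with the scalars $\lambda_i$ absorbed at the end. Note only that the hypothesis of \cref{lem:inj_equal_tensors_2} requires each edge insertion separately (not simultaneously), which is exactly what \cref{eq:inj_equal_edge} provides, so that step is immediate rather than a remaining crux.
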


As the defining graph can not contain double edges and self-loops, the theorem is applicable for MPS of size $N$ only if $N\geq 3$, and for 2D PEPS of size $N\times M$ only if both $N\geq 3$ and $M\geq 3$. As an illustration of the theorem, for the two PEPS in \cref{eq:TN_5_particle_eq} there are matrices $Z_{12},Z_{23},Z_{34},Z_{45},Z_{51}$ and $Z_{25}$ such that
\begin{align*}
  \begin{tikzpicture}
    \node[tensor, label=below:$A_1$] (c1) at (0,0) {};
    \coordinate (c2) at (1,0) {};
    \coordinate (c5) at (-0.6,1.5) {};
    \draw[very thick] (c1) --++(0,0.5);
    \draw (c1)--(c2);
    \draw ($(c1)!0.5!(c5)$)--(c1);
    \node[red,tensor,label=below:$Z_{12}$] (z12) at ($(c1)!0.5!(c2)$) {}; 
    \node[red,tensor,label=below left:$Z_{51}$] (z15) at ($(c1)!0.25!(c5)$) {}; 
  \end{tikzpicture}    =
  \begin{tikzpicture}
    \node[tensor, label=below:$B_1$] (c1) at (0,0) {};
    \coordinate (c2) at (1,0) {};
    \coordinate (c5) at (-0.6,1.5) {};
    \draw[very thick] (c1) --++(0,0.5);
    \draw (c1)--(c2);
    \draw ($(c1)!0.5!(c5)$)--(c1);
  \end{tikzpicture}  \quad &\text{and} \quad
  \begin{tikzpicture}
    \coordinate (c1) at (0,0);
    \node[tensor, label=below:$A_2$] (c2) at (1,0) {};
    \coordinate (c3) at (1.5,1);
    \coordinate (c5) at (-0.6,1.5) {};
    \draw[very thick] (c2) --++(0,0.5);
    \draw (c1)--(c2);
    \draw (c2)--(c3);
    \draw (c2)--($(c2)!0.5!(c5)$);
    \node[red,tensor,label=below:$Z_{12}^{-1}$] (z12) at ($(c1)!0.5!(c2)$) {}; 
    \node[red,tensor,label=right:$Z_{23}$] (z23) at ($(c3)!0.5!(c2)$) {}; 
    \node[red,tensor,label=left:$Z_{25}$] (z25) at ($(c2)!0.25!(c5)$) {}; 
  \end{tikzpicture}   = 
  \begin{tikzpicture}
    \coordinate (c1) at (0,0);
    \node[tensor, label=below:$B_2$] (c2) at (1,0) {};
    \coordinate (c3) at (1.5,1);
    \coordinate (c5) at (-0.6,1.5) {};
    \draw[very thick] (c2) --++(0,0.5);
    \draw (c1)--(c2);
    \draw (c2)--(c3);
    \draw (c2)--($(c2)!0.5!(c5)$);
  \end{tikzpicture}   \ , \\
  \begin{tikzpicture}
    \coordinate (c2) at (1,0);
    \node[tensor, label=right:$A_3$] (c3) at (1.5,1) {};
    \coordinate (c4) at (0.6,1.5);
    \draw[very thick] (c3) --++(0,0.5);
    \draw (c2)--(c3);
    \draw (c3)--(c4);
    \node[red,tensor,label=right:$Z_{23}^{-1}$] (z23) at ($(c3)!0.5!(c2)$) {}; 
    \node[red,tensor,label=left:$Z_{34}$] (z34) at ($(c3)!0.5!(c4)$) {}; 
  \end{tikzpicture}   = 
  \begin{tikzpicture}
    \coordinate (c2) at (1,0);
    \node[tensor, label=right:$B_3$] (c3) at (1.5,1) {};
    \coordinate (c4) at (0.6,1.5);
    \draw[very thick] (c3) --++(0,0.5);
    \draw (c2)--(c3);
    \draw (c3)--(c4);
  \end{tikzpicture}  \quad &\text{and} \quad
  \begin{tikzpicture}
    \coordinate (c3) at (1.5,1);
    \node[tensor, label=below:$A_4$] (c4) at (0.6,1.5) {};
    \coordinate (c5) at (-0.6,1.5);
    \draw[very thick] (c4) --++(0,0.5);
    \draw (c3)--(c4);
    \draw (c4)--(c5);
    \node[red,tensor,label=right:$Z_{34}^{-1}$] (z34) at ($(c3)!0.5!(c4)$) {}; 
    \node[red,tensor,label=below:$Z_{45}$] (z45) at ($(c4)!0.5!(c5)$) {}; 
  \end{tikzpicture}  = 
  \begin{tikzpicture}
    \coordinate (c3) at (1.5,1);
    \node[tensor, label=below:$B_4$] (c4) at (0.6,1.5) {};
    \coordinate (c5) at (-0.6,1.5);
    \draw[very thick] (c4) --++(0,0.5);
    \draw (c3)--(c4);
    \draw (c4)--(c5);
  \end{tikzpicture} \ ,  \\
  \begin{tikzpicture}
    \coordinate (c1) at (0,0);
    \coordinate (c2) at (1,0);
    \coordinate (c4) at (0.6,1.5);
    \node[tensor, label=left:$A_5$] (c5) at (-0.6,1.5) {};
    \draw[very thick] (c5) --++(0,0.5);
    \draw (c4)--(c5);
    \draw (c5)--($(c1)!0.5!(c5)$);
    \draw ($(c2)!0.5!(c5)$)--(c5);
    \node[red,tensor,label=left:$Z_{51}^{-1}$] (z51) at ($(c5)!0.25!(c1)$) {}; 
    \node[red,tensor,label=right:$Z_{25}^{-1}$] (z25) at ($(c5)!0.25!(c2)$) {}; 
    \node[red,tensor,label=above:$Z_{45}^{-1}$] (z45) at ($(c5)!0.5!(c4)$) {}; 
  \end{tikzpicture} & = 
  \begin{tikzpicture}
    \coordinate (c1) at (0,0);
    \coordinate (c2) at (1,0);
    \coordinate (c4) at (0.6,1.5);
    \node[tensor, label=left:$B_5$] (c5) at (-0.6,1.5) {};
    \draw[very thick] (c5) --++(0,0.5);
    \draw (c4)--(c5);
    \draw (c5)--($(c1)!0.5!(c5)$);
    \draw ($(c2)!0.5!(c5)$)--(c5);
  \end{tikzpicture} \ .
\end{align*} 

\section{Normal PEPS}\label{sec:normal}

We call a PEPS \emph{normal}, if blocking tensors in certain regions results in injective tensors. To derive the Fundamental Theorem for this kind of PEPS, we use the same arguments as above after blocking tensors to injective ones. This technique requires that the system is big enough to allow for blocking. This proof technique presented here is not optimal in the required system size; we describe a proof technique giving tighter bounds in \cref{sec:normal_alt}. For simplicity, we present the proof for a TI normal PEPS on a square lattice, but it can easily be generalized to the non TI case on any geometry.

Before proceeding to the proof, we need the following lemma:

\begin{lemma}\label{lem:injective_union}
  The union of two injective regions is injective.
\end{lemma}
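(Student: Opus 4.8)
The plan is to reduce the statement to the fact, already established above, that the contraction of two injective tensors is injective, while working around the possibility that the two regions overlap. Write $U=R_1\cup R_2$ and split its vertices into $P=R_1\setminus R_2$, $S=R_1\cap R_2$ and $Q=R_2\setminus R_1$, and let $T_P,T_S,T_{R_1},T_{R_2},T_U$ denote the tensors obtained by contracting the internal edges of the respective vertex sets. The first observation I would record is that every internal edge of $U$ with one endpoint in $P$ has its other endpoint in $S$ or in $Q$, hence in $R_2$; consequently $T_U$ is precisely the contraction of $T_P$ with $T_{R_2}$ along all bonds joining $P$ to $R_2$ (and, symmetrically, of $T_Q$ with $T_{R_1}$). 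When $S=\emptyset$ this is literally the contraction of two injective tensors, so the claim follows from the result already proven; the genuine content is the overlapping case, in which neither $T_P$ nor $T_Q$ need be injective on its own.

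First I would exploit the injectivity of $R_2$. Suppose $T_U$ sends a boundary vector $v$ of $U$ to zero. The physical legs of $U$ factor as $\mathcal P_P\otimes\mathcal P_{R_2}$, and, viewed as a map from its own virtual boundary into $\mathcal P_{R_2}$, the tensor $T_{R_2}$ is injective, so it has a left inverse $T_{R_2}^{-1}$. Applying $\id_{\mathcal P_P}\otimes T_{R_2}^{-1}$ to $T_U v=0$ and using $T_{R_2}^{-1}T_{R_2}=\id$ collapses the $R_2$-block; comparing coefficients over the basis of $R_2$'s virtual boundary then leaves the statement that, for every configuration of the outer legs of $S$ and $Q$, there is a vector $w$ on the outer legs of $P$ with $T_P\bigl(w\otimes\ket{e}\bigr)=0$ for every basis vector $\ket e$ of the bonds joining $P$ to $R_2$.

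It then remains to conclude $w=0$, and here I would spend the injectivity of $R_1$. Since $R_1=P\cup S$, the tensor $T_{R_1}$ is the contraction of $T_P$ with $T_S$ along the $P$--$S$ bonds, while the $P$--$Q$ bonds stay free legs of $T_{R_1}$. Because $T_P$ annihilates $w$ paired with \emph{every} value of all bonds leaving $P$ toward $R_2$, contracting $T_P$ with $T_S$ over the $P$--$S$ bonds yields zero for every value of the remaining free legs of $T_{R_1}$; injectivity of $T_{R_1}$ then forces $w=0$, and as this holds for all boundary configurations we obtain $v=0$, i.e. $T_U$ is injective. The main obstacle is exactly the overlap: because $S$ lies in both regions, $T_U$ is \emph{not} a contraction of $T_{R_1}$ with $T_{R_2}$ (that would double-count the tensors of $S$), so the contraction lemma cannot be invoked directly. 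The trick that resolves this is to peel off only the non-overlapping part $P$, use the injectivity of $R_2$ to eliminate it, and then use the injectivity of $R_1$ on what is left — each of the two hypotheses being used exactly once.
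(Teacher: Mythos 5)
Your proof is correct and follows essentially the same route as the paper's: decompose $U$ into $R_1\setminus R_2$, $R_1\cap R_2$, $R_2\setminus R_1$, use each region's injectivity exactly once, and reconstitute the overlap in between. The paper merely runs the argument in the mirror order (invert $R_1$ first, explicitly re-insert the tensor on $R_1\cap R_2$, then invert $R_2$), whereas you invert $R_2$ first and absorb the overlap by noting that $T_P(w\otimes\,\cdot\,)\equiv 0$ propagates through the contraction with $T_S$ inside $T_{R_1}$.
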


\begin{proof}
  Let $A$ and $B$ be two injective regions. W.l.o.g.\ the TN can be blocked as follows (missing edges don't change the proof):
  \begin{equation*}
    \begin{tikzpicture}
      \node[tensor,label=below left:$A\backslash B$] (1) at (0,0) {};
      \node[tensor,label=right:$A\cap B$] (2) at (0.7,0.7) {};
      \node[tensor,label=below right:$B\backslash A$] (3) at (2,0) {};
      \node[tensor,label=below:$(A\cup B)^{c}$] (4) at (1.3,-0.7) {};
      \draw[very thick] (1)--++(0,0.5);
      \draw[very thick] (2)--++(0,0.5);
      \draw[very thick] (3)--++(0,0.5);
      \draw[very thick] (4)--++(0,0.5);
      \draw (1)--(2)--(3)--(4)--(1)--(3);
      \draw (2) -- (4);
    \end{tikzpicture} \ .
  \end{equation*}
  Notice that $A\cup B = (A\backslash B) \cup (A\cap B) \cup (B\backslash A)$. Let $X$ now be a tensor such that 
  \begin{equation*}
    \begin{tikzpicture}
      \node[tensor,label=below left:$A\backslash B$] (1) at (0,0) {};
      \node[tensor,label=right:$A\cap B$] (2) at (0.7,0.7) {};
      \node[tensor,label=below right:$B\backslash A$] (3) at (2,0) {};
      \node[tensor,label=below:$X$] (4) at (1.3,-0.7) {};
      \draw[very thick] (1)--++(0,0.5);
      \draw[very thick] (2)--++(0,0.5);
      \draw[very thick] (3)--++(0,0.5);
      \draw (1)--(2)--(3)--(4)--(1)--(3);
      \draw (2) -- (4);
    \end{tikzpicture} \ = 0.
  \end{equation*}
  As the region $A = (A\backslash B) \cup (A\cap B)$ is injective,
  \begin{equation*}
    \begin{tikzpicture}
      \coordinate (1) at (0,0);
      \coordinate (2) at (0.7,0.7) {};
      \node[tensor,label=below right:$B\backslash A$] (3) at (2,0) {};
      \node[tensor,label=below:$X$] (4) at (1.3,-0.7) {};
      \draw[very thick] (3)--++(0,0.5);
      \draw (3)--(4);
      \draw (4) -- ($(4)!0.3!(2)$);
      \draw (4) -- ($(4)!0.3!(1)$);
      \draw (3) -- ($(3)!0.3!(2)$);
      \draw (3) -- ($(3)!0.3!(1)$);
    \end{tikzpicture} \ = 0.
  \end{equation*}
  Plugging back the tensor over the region $A\cap B$,
  \begin{equation*}
    \begin{tikzpicture}
      \coordinate (1) at (0,0);
      \node[tensor,label=right:$A\cap B$] (2) at (0.7,0.7) {};
      \node[tensor,label=below right:$B\backslash A$] (3) at (2,0) {};
      \node[tensor,label=below:$X$] (4) at (1.3,-0.7) {};
      \draw[very thick] (2)--++(0,0.5);
      \draw[very thick] (3)--++(0,0.5);
      \draw (3)--(4);
      \draw (4) -- (2);
      \draw (4) -- ($(4)!0.3!(1)$);
      \draw (3) -- (2);
      \draw (3) -- ($(3)!0.3!(1)$);
      \draw (2) -- ($(2)!0.3!(1)$);
    \end{tikzpicture} \ = 0.
  \end{equation*}
  Finally, the region $B = (A\cap B) \cup (B\backslash A)$ is injective, hence inverting the tensor over that region gives
  \begin{equation*}
    \begin{tikzpicture}
      \coordinate (1) at (0,0);
      \coordinate (2) at (0.7,0.7);
      \coordinate (3) at (2,0);
      \node[tensor,label=below:$X$] (4) at (1.3,-0.7) {};
      \draw (4) -- ($(4)!0.3!(1)$);
      \draw (4) -- ($(4)!0.3!(2)$);
      \draw (4) -- ($(4)!0.3!(3)$);
    \end{tikzpicture} \ = 0,
  \end{equation*}
  which means that the region $A\cup B$ is injective. 
\end{proof}

For example, if in a TI 2D PEPS every $2\times 3$ and $3\times 2$ region is injective, then the following regions:
\begin{equation*}
  \begin{tikzpicture}
    \clip (-0.25,0.75) rectangle (2.25,3.25);
    \draw[step=0.5] (-1,-1) grid (4,4);
    \foreach \x in {0,0.5,...,3}{
      \foreach \y in {0,0.5,...,3}{
        \node[tensor] at (\x,\y) {};
      }
    }
    \draw[red,fill=red!40, thick, rounded corners] (0.3,1.3) -- (1.2,1.3) --(1.2,1.8)--(1.7,1.8)--(1.7,2.7)--(0.3,2.7)--cycle;
    \node  at (0.75,2) {$R$}; 
  \end{tikzpicture}  \ {\rm and} \     
  \begin{tikzpicture}
    \clip (-0.25,0.75) rectangle (2.25,3.25);
    \draw[step=0.5] (-1,-1) grid (4,4);
    \foreach \x in {0,0.5,...,3}{
      \foreach \y in {0,0.5,...,3}{
        \node[tensor] at (\x,\y) {};
      }
    }
    \draw[red,fill=red!40, thick, rounded corners] (0.3,1.3) rectangle (1.7,2.7);
    \node  at (1,2) {$S$}; 
  \end{tikzpicture}  
\end{equation*}
are unions of smaller injective regions, and they are thus injective. Similarly, if the size of the PEPS is at least $5\times 6$, then the region $T$ depicted below is injective:
\begin{equation*}
  \begin{tikzpicture}
    \clip (-0.25,0.25) rectangle (3.25,3.25);
    \draw[step=0.5] (-1,-1) grid (4,4);
    \foreach \x in {0,0.5,...,3}{
      \foreach \y in {0,0.5,...,3}{
        \node[tensor] at (\x,\y) {};
      }
    }
    \filldraw[draw=red, fill=red!40, thick, rounded corners,even odd rule] (0.3,2.7) -- (1.2,2.7) -- (1.2,1.7) -- (2.7,1.7)--(2.7,0.8)--(1.3,0.8)--(1.3,1.3)--(0.3,1.3)--cycle
    (-0.25,0.25) rectangle (3.25,3.25)
    ;
    \node  at (2,2.3) {$T$}; 
  \end{tikzpicture} \ .
\end{equation*}

In the following we prove the Fundamental Theorem for a normal TI 2D PEPS. In particular, we prove it in detail for the case where every region of size $2\times 3$ and $3\times 2$ is injective as in the examples above. Then, we generalize the proof for any normal PEPS that is big enough to allow the necessary blockings.

\begin{theorem}\label{thm:3}
  Let $A$ and $B$ be two normal 2D PEPS tensors such that every $2\times 3$ and $3\times 2$ region is injective. Suppose they generate the same state on some region $n\times m$ with $n,m\geq 7$. Then $A$ and $B$ are related to each other with a gauge transformation: 
  \begin{equation*}
    \begin{tikzpicture}[baseline=-0.1cm]
      \draw (-0.5,0)--(0.5,0);
      \draw (0,0,-0.5)--(0,0,0.5);
      \node[tensor,label=below:$B$] at (0,0) {};
      \draw[very thick] (0,0)--++(0,0.5);
    \end{tikzpicture} = \lambda \cdot 
    \begin{tikzpicture}[baseline=-0.1cm]
      \draw (-1.3,0)--(1.3,0);
      \draw (0,0,-1.6)--(0,0,1.6);
      \node[tensor,label=below:$A$] at (0,0) {};
      \node[tensor,label=right:$Y^{-1}$] at (0,0,-1) {};
      \node[tensor,label=below:$Y$] at (0,0,1) {};
      \node[tensor,label=below:$X^{-1}$] at (0.8,0,0) {};
      \node[tensor,label=below:$X$] at (-0.8,0,0) {};
      \draw[very thick] (0,0)--++(0,0.5);
    \end{tikzpicture} \ , 
  \end{equation*}
  with $\lambda^{n\cdot m} = 1$ and $X,Y$ invertible matrices. $X$ and $Y$ are unique up to a multiplicative constant.
\end{theorem}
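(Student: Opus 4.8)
The plan is to reduce the normal case to the injective one by blocking tensors into injective regions and then following the three-phase structure used for \cref{thm:inj}, replacing the single-site arguments there (which now fail, since an individual normal tensor is not injective) by region-level arguments built on \cref{lem:injective_union}.

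\emph{Phase 1: a gauge on every fine edge.} For a fixed edge $e=(v,w)$ I would partition the lattice into three regions $R_1\ni v$, $R_2\ni w$, and $R_3=(R_1\cup R_2)^c$, chosen so that each $R_j$ is injective and the \emph{only} edge running between $R_1$ and $R_2$ is $e$. This is precisely the role of the staircase-shaped complement $T$ drawn above: its notch houses two small injective pockets around $v$ and $w$ joined by the single bond $e$, while $T$ itself, a union of $2\times 3$ and $3\times 2$ blocks, is injective by \cref{lem:injective_union}. Blocking $R_1,R_2,R_3$ into a three-site injective MPS, \cref{lem:inj_isomorph} then produces for every $X$ a $Y=Z_e^{-1}XZ_e$ such that inserting $X$ on $e$ in the $A$-PEPS equals inserting $Y$ on $e$ in the $B$-PEPS; since that isomorphism is independent of the third tensor, $Z_e$ is intrinsic to $e$ and unique up to a scalar. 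The first, and I expect main, difficulty is purely geometric: verifying that for \emph{every} horizontal and vertical edge such injective $R_1,R_2,R_3$ actually fit, which is exactly where the hypothesis $n,m\ge 7$ enters.

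\emph{Phase 2: absorbing the gauges.} I would absorb $Z_e$ into $B$ on each edge, consistently on its two ends as in \cref{eq:B tilde}, producing tensors $\tilde{B}$ for which, on every edge and for every $X$, inserting $X$ simultaneously in the $A$- and $\tilde{B}$-PEPS yields the same state, exactly as in \cref{eq:inj_equal_edge}. Translational invariance of $A$ and $B$, together with the uniqueness in \cref{lem:inj_isomorph}, then forces $Z_e$ to be, up to a constant, one fixed matrix $X$ on every horizontal bond and one fixed matrix $Y$ on every vertical bond, just as in the translationally invariant MPS corollary above.

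\emph{Phase 3: from regions down to a single tensor.} For any injective region $R$ its complement $R^c$ is injective by \cref{lem:injective_union}, so applying \cref{lem:inj_equal_tensors_2} to the bipartition $(R,R^c)$, whose connecting bonds are fine edges on which insertion already resonates, gives $R_A=\lambda_R R_{\tilde{B}}$ for a scalar $\lambda_R$. To descend to a single site I would pick injective regions $R$ and $R'=R\cup\{v\}$ (both with injective complements, again available since $n,m\ge 7$). Writing $R'_A$ as $R_A$ contracted with $A_v$ and inserting $R_A=\lambda_R R_{\tilde{B}}$, the identities $R_A=\lambda_R R_{\tilde{B}}$ and $R'_A=\lambda_{R'}R'_{\tilde{B}}$ combine into $R_{\tilde{B}}\cdot(\lambda_R A_v-\lambda_{R'}\tilde{B}_v)=0$, the contraction running over the bonds joining $R$ to $v$. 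Since region injectivity of $R_{\tilde{B}}$ yields injectivity with respect to any subset of its open legs (invert $R_{\tilde{B}}$ on its physical index, leaving the remaining legs free), I conclude $A_v=\lambda_v\tilde{B}_v$ with $\lambda_v=\lambda_{R'}/\lambda_R$. This peeling step, which has no analogue in the injective proof, is the genuinely new ingredient. Finally, translational invariance makes all $\lambda_v$ a single $\lambda$, and contracting the whole $n\times m$ lattice so that all gauges cancel forces $\lambda^{nm}=1$; unwinding the definition of $\tilde{B}$ recovers the stated relation with the single matrices $X,Y$, unique up to a constant by \cref{lem:inj_isomorph}.
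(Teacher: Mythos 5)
Your proposal is correct and follows essentially the same route as the paper's proof: the same three-region blocking around each edge (two small injective pockets joined by the single bond $e$, with the injective staircase complement) feeding into \cref{lem:inj_isomorph}, the same absorption of gauges with translation invariance collapsing them to one $X$ and one $Y$, and the same peeling step comparing injective regions $R$ and $R\cup\{v\}$ via \cref{lem:inj_equal_tensors_2} and \cref{lem:injective_union}. No substantive differences to report.
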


\begin{proof}
  Let us block the TN into three injective parts around an edge. This can be done with e.g. the following choice of regions:
  \begin{equation*}
    \begin{tikzpicture}
      \clip (-0.25,-0.25) rectangle (3.25,3.25);
      \draw[step=0.5] (-1,-1) grid (4,4);
      \draw[line width=0.6mm,green] (1,1.5)--(1.5,1.5);
      \foreach \x in {0,0.5,...,3}{
        \foreach \y in {0,0.5,...,3}{
          \node[tensor] at (\x,\y) {};
        }
      }
      \draw[red,thick, rounded corners] (0.3,1.3) rectangle (1.2,2.7);
      \draw[blue,thick, rounded corners] (1.3,1.7) rectangle (2.7,0.8);
    \end{tikzpicture} \ \Rightarrow \ 
    \begin{tikzpicture}
      \draw (0.5,0) rectangle (3.5,-0.5);
      \draw[thick,green] (1,0)--(2,0);
      \foreach \x/\c in {1/red,2/blue,3/black}{
        \node[tensor,\c,label=below:$\color{\c}A_\x$] (t\x) at (\x,0) {};
        \draw[very thick] (t\x) --++ (0,0.5);
      }
    \end{tikzpicture}  \ ,    
  \end{equation*} 
  where $A_1$ corresponds to the red region, $A_2$ to the blue one and $A_3$ to the rest. The region $A_3$ is injective as long as the size of the PEPS is at least $5\times 7$. Therefore a $7\times 7$ PEPS can be blocked to injective three partite MPS around every edge (including the vertical edges that require a PEPS size at least $7 \times 5$). Therefore \cref{lem:inj_isomorph} can be applied giving a gauge transformation on every edge.  Due to translation invariance, these gauges are described by the same matrix $X$ ($Y$) on all  horizontal (vertical) edges. 
  
  Define now $\tilde{B}$ by incorporating the local gauges into the tensors $B$, such as in the injective case:
  \begin{equation*}
    \begin{tikzpicture}[baseline=-0.1cm]
      \draw (-0.5,0)--(0.5,0);
      \draw (0,0,-0.5)--(0,0,0.5);
      \node[tensor,label=below:$\tilde{B}$] at (0,0) {};
      \draw[very thick] (0,0)--++(0,0.5);
    \end{tikzpicture} = 
    \begin{tikzpicture}[baseline=-0.1cm]
      \draw (-1.3,0)--(1.3,0);
      \draw (0,0,-1.6)--(0,0,1.6);
      \node[tensor,label=below:$B$] at (0,0) {};
      \node[tensor,label=right:$Y$] at (0,0,-1) {};
      \node[tensor,label=below:$Y^{-1}$] at (0,0,1) {};
      \node[tensor,label=below:$X$] at (0.8,0,0) {};
      \node[tensor,label=below:$X^{-1}$] at (-0.8,0,0) {};
      \draw[very thick] (0,0)--++(0,0.5);
    \end{tikzpicture} \ . 
  \end{equation*}
  The two PEPS tensors $A$ and $\tilde{B}$ generate the same state. Moreover, inserting a matrix $Z$ on any bond of the first PEPS gives the same state as inserting the same matrix $Z$ on the corresponding bond of the second PEPS. Remember that \cref{lem:injective_union} implies that both 
  \begin{equation*}
    \begin{tikzpicture}
      \clip (-0.25,0.75) rectangle (2.25,3.25);
      \draw[step=0.5] (-1,-1) grid (4,4);
      \foreach \x in {0,0.5,...,3}{
        \foreach \y in {0,0.5,...,3}{
          \node[tensor] at (\x,\y) {};
        }
      }
      \draw[red,fill=red!40, thick, rounded corners] (0.3,1.3) -- (1.2,1.3) --(1.2,1.8)--(1.7,1.8)--(1.7,2.7)--(0.3,2.7)--cycle;
      \node  at (0.75,2) {$R$}; 
    \end{tikzpicture}  \ {\rm and} \     
    \begin{tikzpicture}
      \clip (-0.25,0.75) rectangle (2.25,3.25);
      \draw[step=0.5] (-1,-1) grid (4,4);
      \foreach \x in {0,0.5,...,3}{
        \foreach \y in {0,0.5,...,3}{
          \node[tensor] at (\x,\y) {};
        }
      }
      \draw[red,fill=red!40, thick, rounded corners] (0.3,1.3) rectangle (1.7,2.7);
      \node  at (1,2) {$S$}; 
    \end{tikzpicture}  
  \end{equation*}
  are injective regions and notice that the two regions differ in a single site. Moreover, if the PEPS is at least $5\times 5$, their complement regions $R^c$ and $S^c$ are also injective. Let us denote the tensor on region $R$ as $A_R$ ($\tilde{B}_R$) and on region $S$ as $A_S$ ($\tilde{B}_S$). Then, by \cref{lem:inj_equal_tensors_2}, $A_R \propto \tilde{B}_R$ and $A_S \propto \tilde{B}_S$. This can be represented as
  \begin{equation*}
    \begin{tikzpicture}
      \draw (0.5,0)--(2.5,0);
      \node[tensor,label=below:$A_{R}$] (a1) at (1,0) {};
      \draw[very thick] (a1)--++(0,0.5);
      \node[tensor,label=below:$A$] (a2) at (2,0) {};
      \draw[very thick] (a2)--++(0,0.5);
    \end{tikzpicture} = 
    \begin{tikzpicture}
      \draw (0.5,0)--(1.5,0);
      \node[tensor,label=below:$A_{S}$] (a1) at (1,0) {};
      \draw[very thick] (a1)--++(0,0.5);
    \end{tikzpicture} \propto
    \begin{tikzpicture}
      \draw (0.5,0)--(1.5,0);
      \node[tensor,label=below:$\tilde{B}_{S}$] (a1) at (1,0) {};
      \draw[very thick] (a1)--++(0,0.5);
    \end{tikzpicture} = 
    \begin{tikzpicture}
      \draw (0.5,0)--(2.5,0);
      \node[tensor,label=below:$\tilde{B}_{R}$] (a1) at (1,0) {};
      \draw[very thick] (a1)--++(0,0.5);
      \node[tensor,label=below:$\tilde{B}$] (a2) at (2,0) {};
      \draw[very thick] (a2)--++(0,0.5);
    \end{tikzpicture}.
  \end{equation*}
  Applying the inverse of $A_R\propto \tilde{B}_R$ on the two ends of the equation, we get that the tensors $A$ and $\tilde{B}$ are proportional.  
\end{proof}
 
The above proof can be repeated for any PEPS as long as it is possible to block into injective regions as required by \cref{lem:inj_isomorph} and \cref{lem:inj_equal_tensors_2}. This leads to the Fundamental Theorem of normal PEPS:

\begin{theorem}\label{thm:normal}
  Suppose two normal PEPS generating the same state satisfy the following:
  \begin{itemize}
  \item they can be blocked into three partite injective MPS around every edge,
  \item and for every site, there are injective regions with their complements also being injective that differ only in the given site.
  \end{itemize}
  Then the defining tensors are related with a local gauge. Moreover, the gauges are unique up to a multiplicative constant.
\end{theorem}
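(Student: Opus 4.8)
The plan is to run the argument of \cref{thm:3} verbatim in the present abstract setting, with the two hypotheses feeding exactly the two key lemmas. Note first that the proof of \cref{thm:3} never genuinely used translation invariance or the square-lattice geometry: it used only (i) that one can block around every edge into a three-partite injective MPS, and (ii) that every single site can be exposed as the one-site difference of two injective regions whose complements are also injective. These are precisely the two bullet points in the hypothesis, so the strategy is to isolate those two uses and present them geometry-independently.

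First I would apply \cref{lem:inj_isomorph}. By the first hypothesis, fix any edge $e$ and block the whole network into a three-partite injective MPS whose first two sites are the endpoints of $e$ and whose third site is everything else, exactly as in \cref{eq:block_to_mps}. Since $A$ and $B$ generate the same state, \cref{lem:inj_isomorph} yields an invertible matrix $Z_e$, unique up to a scalar, such that inserting any $X$ on $e$ in the $A$-network equals inserting $Y=Z_e^{-1}XZ_e$ on $e$ in the $B$-network. Carrying this out for every edge and absorbing each $Z_e$ into the adjacent $B$-tensors produces a network $\tilde{B}$ with the property that, for every edge and every matrix $X$, inserting $X$ on that edge gives the same state in the $A$- and $\tilde{B}$-networks; in particular the two networks still coincide, taking $X=\id$.

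Next I would localize to a single site using \cref{lem:inj_equal_tensors_2}. Fix a site $v$ and, by the second hypothesis, choose injective regions $R\subset S$ with $S=R\cup\{v\}$ and with $R^c,S^c$ also injective (their injectivity being guaranteed by repeated use of \cref{lem:injective_union}). Blocking the network into the two injective tensors $R$ and $R^c$, and separately into $S$ and $S^c$, the per-edge insertion equalities from the previous step supply the hypothesis of \cref{lem:inj_equal_tensors_2} on each connecting edge, so $A_R\propto\tilde{B}_R$ and $A_S\propto\tilde{B}_S$. Since $A_S$ is $A_R$ contracted with the single-site tensor at $v$ (and $\tilde{B}_S$ likewise), applying the one-sided inverse of $A_R\propto\tilde{B}_R$ to both sides of $A_S\propto\tilde{B}_S$ isolates the site and yields $A=\lambda_v\tilde{B}$ at $v$, exactly as in the final step of \cref{thm:3}. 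As in \cref{thm:inj_MPS}, the scalars $\lambda_v$ can then be reabsorbed into the edge gauges $Z_e$, upgrading the proportionalities to genuine gauge equalities, and uniqueness of the gauges up to a scalar is inherited directly from the uniqueness statement of \cref{lem:inj_isomorph}.

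I expect the real work to be bookkeeping rather than a genuine obstacle: the content already lives in \cref{lem:inj_isomorph,lem:inj_equal_tensors_2}, and what remains is to verify that the two abstract hypotheses make every blocking they require legitimate, and that absorbing the per-edge gauges is globally consistent. The one conceptual point to watch, and where I would be most careful, is that unlike in \cref{thm:3} the gauges need not be equal across edges in the non-translation-invariant setting; I would therefore keep them indexed by edge throughout and never assume any relation between distinct $Z_e$'s.
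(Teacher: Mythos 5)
Your proposal is correct and follows essentially the same route as the paper, which itself proves \cref{thm:normal} by observing that the proof of \cref{thm:3} carries over verbatim once the two hypotheses are used to license the blockings required by \cref{lem:inj_isomorph} and \cref{lem:inj_equal_tensors_2}. Your additional care about keeping the gauges indexed by edge in the non-translation-invariant setting, and about reabsorbing the per-site scalars $\lambda_v$ into the edge gauges, matches what the paper does implicitly.
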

Notice that this statement holds for a fixed system size (which is big enough to allow blocking into injective MPS), and translational invariance is not required. In case of a translational invariant system, the gauges are also translational invariant (if the proportionality constants are not absorbed into the gauges). In the following we present some special cases. For non TI MPS, the statement reads as
\begin{corollary}
  Let $\{A_i\}_{i=1}^n$ and $\{B_i\}_{i=1}^n$ two normal MPS on $n\geq 3L$ sites with the property that blocking any $L$ consecutive sites results in an injective tensor. Suppose they generate the same state:
  \begin{equation*}
    \ket{\Psi} = 
    \begin{tikzpicture}
      \draw (0.5,0) rectangle (4.5,-0.5);
      \foreach \x/\t in {1/1,2/2,4/n}{
        \node[tensor,label=below:$A_\t$] (t\x) at (\x,0) {};
        \draw[very thick] (t\x) --++ (0,0.5);
      }
      \node[fill=white] at (3,0) {$\dots$};
    \end{tikzpicture} = 
    \begin{tikzpicture}
      \draw (0.5,0) rectangle (4.5,-0.5);
      \foreach \x/\t in {1/1,2/2,4/n}{
        \node[tensor,label=below:$B_\t$] (t\x) at (\x,0) {};
        \draw[very thick] (t\x) --++ (0,0.5);
      }
      \node[fill=white] at (3,0) {$\dots$};
    \end{tikzpicture} \ .
  \end{equation*}  
  Then there are invertible matrices $Z_i$ (for $i=1 \dots n$, $n+1\equiv 1$) such that for all $i=1\dots n$
    \begin{equation*}
      \begin{tikzpicture}
        \draw (-0.5,0)--(0.5,0);
        \node[tensor,label=below:$B_i$] (t) at (0,0) {};
        \draw[very thick] (t)--(0,0.5);
      \end{tikzpicture}  = 
      \begin{tikzpicture}
        \draw (-1,0)--(1,0);
        \node[tensor,label=below:$Z_i^{-1}$] at (-0.5,0) {};
        \node[tensor,label=below:$\ Z_{i+1}$\vphantom{$Z_i^{-1}$}] at (0.5,0) {};
        \node[tensor,label=below:$A_i$\vphantom{$Z_i^{-1}$}] (t) at (0,0) {};
        \draw[very thick] (t)--(0,0.5);
      \end{tikzpicture} \ .
    \end{equation*}    
    Moreover, the gauges $Z_i$ are unique up to a multiplicative constant.
\end{corollary}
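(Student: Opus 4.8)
The plan is to obtain this statement as the one-dimensional, non-translationally-invariant specialization of \cref{thm:normal}: once the two hypotheses of that theorem are verified for a normal MPS on $n\geq 3L$ sites, the conclusion is immediate, and the generic ``local gauge'' it produces is precisely a matrix on every edge, which after absorbing proportionality constants takes the claimed form $B_i=Z_i^{-1}A_iZ_{i+1}$. Equivalently, one may rerun the proof of \cref{thm:inj_MPS}, replacing each injective single-site tensor by an injective block of $L$ consecutive sites. The only real content is therefore to exhibit the injective blockings demanded by \cref{lem:inj_isomorph} and \cref{lem:inj_equal_tensors_2}. I will use repeatedly that any block of $m\geq L$ consecutive sites is injective: it is the union of the overlapping length-$L$ blocks $\{k,\dots,k+L-1\}$, so injectivity follows by iterating \cref{lem:injective_union}.

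For the first hypothesis (blocking into three-partite injective MPS around every edge), fix an edge $(i,i+1)$ and group the chain cyclically into the three blocks $\{i-L+1,\dots,i\}$, $\{i+1,\dots,i+L\}$ and the remaining $n-2L$ sites. The first two blocks have $L$ consecutive sites each and the third has $n-2L$ sites, so all three are injective provided $n-2L\geq L$, i.e. $n\geq 3L$, which is precisely the hypothesis. The bond joining the first two blocks is the original edge $(i,i+1)$, so \cref{lem:inj_isomorph} yields an invertible $Z_{i+1}$, unique up to a scalar, with $Y=Z_{i+1}^{-1}XZ_{i+1}$ implementing the corresponding virtual deformation on both MPS. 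Doing this for every edge and absorbing the $Z_{i+1}$ into the $B$-tensors produces tensors $\tilde B_i$ for which inserting any matrix $X$ on any bond of the $A$-MPS yields the same state as inserting the same $X$ on the corresponding bond of the $\tilde B$-MPS, exactly as in the proof of \cref{thm:inj_MPS}.

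For the second hypothesis, fix a site $i$ and take $S=\{i+1,\dots,i+L\}$ and $R=\{i,i+1,\dots,i+L\}$, which differ only in the site $i$. Both are consecutive blocks of length $\geq L$, hence injective, and their complements are consecutive blocks of at least $n-L-1\geq 2L-1\geq L$ sites, hence also injective. Blocking the ring into $R$ and $R^c$ (and into $S$ and $S^c$) and applying \cref{lem:inj_equal_tensors_2} across the two cut bonds gives $A_R\propto\tilde B_R$ and $A_S\propto\tilde B_S$; since $R=S\cup\{i\}$, inverting the proportional tensors $A_R,\tilde B_R$ isolates the single site and yields $A_i=\lambda_i\tilde B_i$ for a constant $\lambda_i$, with $\prod_i\lambda_i=1$ forced by equality of the full trace. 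Absorbing each $\lambda_i$ sequentially into the gauges $Z_i$ turns the relation into $B_i=Z_i^{-1}A_iZ_{i+1}$, and uniqueness of the $Z_i$ up to a scalar is inherited from the uniqueness in \cref{lem:inj_isomorph}.

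The main obstacle is not any single step but the simultaneous bookkeeping of all these blockings: one must check that the three-partite splitting around \emph{every} edge and the pair of regions $R,S$ around \emph{every} site can all be realized with injective pieces on the \emph{same} chain. Tracking the largest ``remaining'' block shows that the binding constraint comes from the three-partite blocking of the first step and is exactly $n\geq 3L$; the region/complement conditions of the second step are then automatically satisfied. For $L=1$ every block is a single site and one recovers \cref{thm:inj_MPS}.
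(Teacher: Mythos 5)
Your proposal is correct and follows the paper's own route: the paper states this corollary as a direct specialization of \cref{thm:normal} (whose proof is the template of \cref{thm:3}), and your verification of the two blocking hypotheses --- three injective arcs of sizes $L$, $L$ and $n-2L$ around each edge, and the one-site-difference pair $R=S\cup\{i\}$ with injective complements --- is exactly the intended instantiation. One trivial slip: to isolate site $i$ from $A_S A_i \propto \tilde B_S \tilde B_i$ you invert the common sub-region $A_S\propto\tilde B_S$, not $A_R$.
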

In \cref{sec:normal_alt} we strengthen the statement to include system sizes $n\geq 2L +1$. For TI MPS, the statement reads as
\begin{corollary}
  Let $A$ and $B$ be two normal TI MPS on $n\geq 3L$ sites with the property that blocking $L$ consecutive sites results in an injective tensor. Suppose they generate the same state:
  \begin{equation*}
    \ket{\Psi} = 
    \begin{tikzpicture}
      \draw (0.5,0) rectangle (4.5,-0.5);
      \foreach \x/\t in {1/1,2/2,4/n}{
        \node[tensor,label=below:$A$] (t\x) at (\x,0) {};
        \draw[very thick] (t\x) --++ (0,0.5);
      }
      \node[fill=white] at (3,0) {$\dots$};
    \end{tikzpicture} = 
    \begin{tikzpicture}
      \draw (0.5,0) rectangle (4.5,-0.5);
      \foreach \x/\t in {1/1,2/2,4/n}{
        \node[tensor,label=below:$B$] (t\x) at (\x,0) {};
        \draw[very thick] (t\x) --++ (0,0.5);
      }
      \node[fill=white] at (3,0) {$\dots$};
    \end{tikzpicture} \ .
  \end{equation*}  
  Then there is an invertible matrix $Z$ and a constant $\lambda$ with $\lambda^n=1$ such that 
    \begin{equation*}
      \begin{tikzpicture}
        \draw (-0.5,0)--(0.5,0);
        \node[tensor,label=below:$B$] (t) at (0,0) {};
        \draw[very thick] (t)--(0,0.5);
      \end{tikzpicture}  = \lambda \cdot
      \begin{tikzpicture}
        \draw (-1,0)--(1,0);
        \node[tensor,label=below:$Z^{-1}$] at (-0.5,0) {};
        \node[tensor,label=below:$Z$\vphantom{$Z^{-1}$}] at (0.5,0) {};
        \node[tensor,label=below:$A$\vphantom{$Z^{-1}$}] (t) at (0,0) {};
        \draw[very thick] (t)--(0,0.5);
      \end{tikzpicture} \ .
    \end{equation*}    
    Moreover the gauge $Z$ is unique up to a multiplicative constant.
\end{corollary}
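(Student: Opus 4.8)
The plan is to deduce this translationally invariant statement from the non-translationally-invariant corollary for normal MPS stated just above, and then to promote the site-dependent gauges it produces to a single translationally invariant gauge using injectivity of the blocked tensors.

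First I would apply the non-TI corollary to the constant families $A_i \equiv A$ and $B_i \equiv B$. This is legitimate because the hypotheses of that corollary---that blocking any $L$ consecutive sites yields an injective tensor, and that $n \geq 3L$---are exactly those assumed here. It furnishes invertible matrices $Z_i$ (for $i = 1,\dots,n$, with $Z_{n+1} = Z_1$) such that, in the notation of that corollary, $B = Z_i^{-1} A Z_{i+1}$ for every $i$, together with uniqueness of the $Z_i$ up to a multiplicative constant.

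Next I would use translation invariance to show that every $Z_i$ is proportional to one fixed matrix $Z$. The relation $B = Z_i^{-1} A Z_{i+1}$ telescopes under blocking: contracting $L$ consecutive sites, the internal gauges $Z_{j+1} Z_{j+1}^{-1}$ cancel and one obtains $B^{(L)} = Z_i^{-1} A^{(L)} Z_{i+L}$, where $A^{(L)}$ and $B^{(L)}$ are the injective tensors resulting from blocking $L$ sites (independent of the starting site by TI). Writing the same identity for the block shifted by one site and equating the two expressions for $B^{(L)}$ gives $(Z_{i+1} Z_i^{-1})\, A^{(L)} = A^{(L)}\, (Z_{i+L+1} Z_{i+L}^{-1})$. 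Applying the one-sided inverse of the injective tensor $A^{(L)}$ to either virtual leg then forces $Z_{i+1} Z_i^{-1} \propto \id$, i.e.\ $Z_{i+1} \propto Z_i$. Iterating over $i$ yields $Z_i = c_i Z$ for a single invertible $Z$ and scalars $c_i$.

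Finally, substituting back gives $B = (c_{i+1}/c_i)\, Z^{-1} A Z$. Since the left-hand side and $Z^{-1} A Z$ are both independent of $i$, the ratio $\lambda := c_{i+1}/c_i$ is a constant, and the periodicity $Z_{n+1} = Z_1$ (so $c_{n+1} = c_1$) forces $\lambda^n = 1$, giving the claimed form $B = \lambda\, Z^{-1} A Z$; uniqueness of $Z$ up to a scalar is inherited from the uniqueness of the $Z_i$. I expect the only delicate point to be the injectivity step above: a single normal tensor need not be injective, so one cannot simply ``invert $A$'' as in the injective TI case, and the blocking into $L$-site injective tensors is essential in order to run the intertwiner-is-scalar argument $P A^{(L)} = A^{(L)} Q \Rightarrow P,Q \propto \id$.
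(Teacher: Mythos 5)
Your proof is correct and follows essentially the same route the paper intends: the paper derives this corollary from its general normal-PEPS theorem together with the remark that in the TI case the gauges are themselves translationally invariant, which is established exactly by your intertwiner argument (equating the two expressions for the blocked tensor and inverting the injective $L$-site block to force $Z_{i+1}\propto Z_i$), generalizing the ``invert the tensor $A$'' step given explicitly for the injective TI case. You correctly identify and handle the one delicate point, namely that a single normal tensor cannot be inverted and the blocking to $L$ sites is what makes the argument run.
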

In \cref{sec:normal_alt} we strengthen the statement to include system sizes $n\geq 2L +1$. For 2D TI PEPS, the statement reads as 
\begin{corollary}
  Let $A$ and $B$ be two normal 2D PEPS tensors such that every $L \times K$ region is injective. Suppose they generate the same state on some region $n\times m$ with $n \geq 3 L$ and $m\geq 3K$. Then $A$ and $B$ are related to each other with a gauge: 
  \begin{equation*}
    \begin{tikzpicture}[baseline=-0.1cm]
      \draw (-0.5,0)--(0.5,0);
      \draw (0,0,-0.5)--(0,0,0.5);
      \node[tensor,label=below:$B$] at (0,0) {};
      \draw[very thick] (0,0)--++(0,0.5);
    \end{tikzpicture} = \lambda \cdot 
    \begin{tikzpicture}[baseline=-0.1cm]
      \draw (-1.3,0)--(1.3,0);
      \draw (0,0,-1.6)--(0,0,1.6);
      \node[tensor,label=below:$A$] at (0,0) {};
      \node[tensor,label=right:$Y^{-1}$] at (0,0,-1) {};
      \node[tensor,label=below:$Y$] at (0,0,1) {};
      \node[tensor,label=below:$X^{-1}$] at (0.8,0,0) {};
      \node[tensor,label=below:$X$] at (-0.8,0,0) {};
      \draw[very thick] (0,0)--++(0,0.5);
    \end{tikzpicture} \ , 
  \end{equation*}
  with $\lambda^{n\cdot m} = 1$ and $X,Y$ invertible matrices. Moreover these matrices $X,Y$ are unique up to a multiplicative constant.
\end{corollary}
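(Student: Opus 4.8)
The plan is to deduce the statement from \cref{thm:normal} by verifying its two structural hypotheses for the $L\times K$-injective square lattice, and then to use translation invariance to collapse the edge gauges down to the two matrices $X$ and $Y$. The argument is the verbatim generalization of the proof of \cref{thm:3}, with the blocks of size $2\times 3$ and $3\times 2$ replaced by suitably staggered copies of $L\times K$ tiles; throughout, the role of \cref{lem:injective_union} is to certify that every region we build, as well as its complement, is injective because it is a union of injective $L\times K$ tiles.

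First I would check the blocking hypothesis of \cref{thm:normal}: around every edge we must exhibit a three-partite injective MPS whose central bond is exactly the chosen PEPS edge. For a horizontal edge joining $(x,y)$ and $(x+1,y)$ I take $A_1$ to be the $L\times K$ tile on columns $[x-L+1,x]$ and rows $[y,y+K-1]$, and $A_2$ to be the tile on columns $[x+1,x+L]$ and rows $[y-K+1,y]$. Staggering the two tiles vertically by $K-1$ rows forces their row overlap to be the single row $y$, and since they occupy disjoint column ranges the only PEPS bond joining them is the chosen horizontal edge. Vertical edges are handled by the mirror construction, staggering horizontally by $L-1$ columns; this is exactly why only $L\times K$ (and not also $K\times L$) injectivity is needed. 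The third MPS region is the complement $A_3$ of $A_1\cup A_2$, which I would cover by $L\times K$ tiles contained in it and declare injective via \cref{lem:injective_union}. The bounds $n\geq 3L$ and $m\geq 3K$ are what guarantee that $A_1$, $A_2$ and a tiling of $A_3$ all fit inside the lattice. \cref{lem:inj_isomorph} then supplies a gauge on the chosen edge, and translation invariance makes this gauge the same matrix $X$ on every horizontal edge and $Y$ on every vertical edge.

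Next I would verify the second hypothesis of \cref{thm:normal}: for each site $s$ I need two injective regions $R\subset S=R\cup\{s\}$ differing only in $s$, with $R^{c}$ and $S^{c}$ injective as well. As in \cref{thm:3} I would take $R$ to be a staircase-shaped union of $L\times K$ tiles and $S$ the same region enlarged by the single site $s$; then $R$, $S$, $R^{c}$ and $S^{c}$ are all unions of $L\times K$ tiles and hence injective by \cref{lem:injective_union}, the size bounds again ensuring all four fit. After absorbing the gauges $X,Y$ into $B$ to form $\tilde B$, so that inserting any matrix on any bond acts identically on the two states (the analogue of \cref{eq:inj_equal_edge}), \cref{lem:inj_equal_tensors_2} applied to the partitions $(R,R^{c})$ and $(S,S^{c})$ yields $A_R\propto\tilde B_R$ and $A_S\propto\tilde B_S$; cancelling $A_R\propto\tilde B_R$ from $A_S\propto\tilde B_S$ leaves $A\propto\tilde B$ at the site $s$. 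This is precisely the claimed gauge relation; the local scalars combine into a single $\lambda$, counting the $n\cdot m$ sites in the full contraction forces $\lambda^{n\cdot m}=1$, and uniqueness of $X,Y$ up to a constant is inherited from the uniqueness clause of \cref{lem:inj_isomorph}.

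The main obstacle I anticipate is combinatorial rather than algebraic: arranging the staggered tiles so that simultaneously (i) each endpoint region is an injective union of tiles, (ii) the two endpoint regions meet in exactly the one chosen bond, and (iii) the complement is still coverable by tiles lying entirely inside it — and carrying this out uniformly, including for edges and sites near the boundary, not only deep in the bulk. The delicate point is (iii): near the notch created by the two staggered tiles, some complement sites may admit no $L\times K$ tile avoiding $A_1\cup A_2$, so the packing must be chosen with care. Optimising these packings is what fixes the size thresholds, and the excerpt explicitly defers the tighter bounds $n\geq 2L+1$ to \cref{sec:normal_alt}; here I would be content with the clean sufficient conditions $n\geq 3L$, $m\geq 3K$, which leave a full tile of slack in each direction and thereby make every region and complement above manifestly realisable.
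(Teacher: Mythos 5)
Your proof follows the paper's own route: the corollary is stated there as a direct instance of \cref{thm:normal}, with the required blockings built exactly as in the proof of \cref{thm:3} (two staggered injective tiles meeting only in the chosen edge, and a staircase region $R$ versus $S=R\cup\{s\}$ for each site, all certified injective via \cref{lem:injective_union}), followed by translation invariance to collapse the gauges to $X$ and $Y$. Your reconstruction of the two hypotheses is correct — including the useful observation that same-orientation $L\times K$ tiles suffice — so nothing further is needed.
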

In \cref{sec:normal_alt} we strengthen the statement to include system sizes $n\geq 2L +1$ and $m\geq 2K +1$. Similar statements can be made for the non-TI case as well as for other situations, including PEPS in 3 and higher dimensions, other lattices (e.g.\ triangular, honeycomb, Kagome), and other geometries (e.g.\ hyperbolic, as it is used in the AdS/CFT constructions \cite{Hayden2016,Pastawski2015}). 

Furthermore, the results hold for general tensor networks as well (including
tensors that do not have physical index), provided that the TN satisfies the conditions
in \cref{thm:normal}. However, there is an important class of TN that do not satisfy them, namely the MERA \cite{Vidal2007}, and thus our results do not apply to them.

\section{Applications}

In this Section we show how the above results can be applied in different scenarios. In particular, we consider local (gauge) and global symmetries as well as translation symmetry.

Consider a normal TN on $n$ particles describing a state $\ket{\Psi}$. Suppose $\ket{\Psi}$ admits a global symmetry: $U^{\otimes n} \Psi = \Psi$. Then, if the TN satisfies the conditions in \cref{thm:normal}, the symmetry operators acting on the individual tensors is the same as acting with gauge transformations on the virtual level. For example, in TI MPS, this is reflected as:
\begin{equation*}
  \begin{tikzpicture}[baseline=-0.1cm]
    \draw (-0.5,0)--(0.5,0);
    \node[tensor,label=below:$A$] (t) at (0,0) {};
    \draw[very thick] (t)--(0,1);
    \node[tensor,label=left:$U$] (t) at (0,0.5) {};
  \end{tikzpicture}  =  \lambda \cdot\ 
  \begin{tikzpicture}[baseline=-0.1cm]
    \draw (-1,0)--(1,0);
    \node[tensor,label=below:$Z^{-1}$] at (-0.5,0) {};
    \node[tensor,label=below:$Z$\vphantom{$Z^{-1}$}] at (0.5,0) {};
    \node[tensor,label=below:$A$\vphantom{$Z^{-1}$}] (t) at (0,0) {};
    \draw[very thick] (t)--(0,0.5);
  \end{tikzpicture} \ ,
\end{equation*}    
with $\lambda^n = 1$. Similar statements are true in the non TI case (in which case the gauges might be different on every edge) and for any geometry. If the state admits a whole symmetry group, the gauges form a projective representation of that group on every bond.

Consider now a local (gauge) symmetry in a normal TN. If the symmetry is strictly one-local, it leaves each tensor invariant. As an illustration, for MPS, if
\begin{equation*}
  \begin{tikzpicture}[baseline=-0.1cm]
    \draw (0.5,0) rectangle (4.5,-0.5);
    \foreach \x/\t in {1/1,2/2,4/n}{
      \node[tensor,label=below:$A$] (t\x) at (\x,0) {};
      \draw[very thick] (t\x) --++ (0,0.5);
    }
    \node[tensor,label=left:$U$] (u) at (1,0.5) {};
    \draw[very thick] (u) --++ (0,0.5);    
    \node[fill=white] at (3,0) {$\dots$};
  \end{tikzpicture} = 
  \begin{tikzpicture}[baseline=-0.1cm]
    \draw (0.5,0) rectangle (4.5,-0.5);
    \foreach \x/\t in {1/1,2/2,4/n}{
      \node[tensor,label=below:$A$] (t\x) at (\x,0) {};
      \draw[very thick] (t\x) --++ (0,0.5);
    }
    \node[fill=white] at (3,0) {$\dots$};
  \end{tikzpicture} \ ,
\end{equation*}  
then 
\begin{equation*}
  \begin{tikzpicture}[baseline=-0.1cm]
    \draw (-0.5,0)--(0.5,0);
    \node[tensor,label=below:$A$] (t) at (0,0) {};
    \draw[very thick] (t)--(0,1);
    \node[tensor,label=left:$U$] (t) at (0,0.5) {};
  \end{tikzpicture}  =  
  \begin{tikzpicture}
    \draw (-0.5,0)--(0.5,0);
    \node[tensor,label=below:$A$\vphantom{$Z^{-1}$}] (t) at (0,0) {};
    \draw[very thick] (t)--(0,0.5);
  \end{tikzpicture} \ ,
\end{equation*}    
as looking at the rest of the tensors, we conclude that all gauges are the identity. For two-local symmetries, if  
\begin{equation*}
  \begin{tikzpicture}[baseline=-0.1cm]
    \draw (0.5,0) rectangle (4.5,-0.5);
    \foreach \x/\t in {1/1,2/2,4/n}{
      \node[tensor,label=below:$A$] (t\x) at (\x,0) {};
      \draw[very thick] (t\x) --++ (0,0.5);
    }
    \node[tensor,label=left:$U_L$] (ul) at (1,0.5) {};
    \draw[very thick] (ul) --++ (0,0.5);    
    \node[tensor,label=left:$U_R$] (ur) at (2,0.5) {};
    \draw[very thick] (ur) --++ (0,0.5);    
    \node[fill=white] at (3,0) {$\dots$};
  \end{tikzpicture} = 
  \begin{tikzpicture}[baseline=-0.1cm]
    \draw (0.5,0) rectangle (4.5,-0.5);
    \foreach \x/\t in {1/1,2/2,4/n}{
      \node[tensor,label=below:$A$] (t\x) at (\x,0) {};
      \draw[very thick] (t\x) --++ (0,0.5);
    }
    \node[fill=white] at (3,0) {$\dots$};
  \end{tikzpicture} \ ,
\end{equation*}  
then 
\begin{equation*}
  \begin{tikzpicture}[baseline=-0.1cm]
    \draw (-0.5,0)--(0.5,0);
    \node[tensor,label=below:$A$] (t) at (0,0) {};
    \draw[very thick] (t)--(0,1);
    \node[tensor,label=left:$U_L$] (t) at (0,0.5) {};
  \end{tikzpicture}  =  
  \begin{tikzpicture}[baseline=-0.1cm]
    \draw (-0.5,0)--(1,0);
    \node[tensor,label=below:$A$\vphantom{$Z^{-1}$}] (t) at (0,0) {};
    \node[tensor,label=below:$Z$\vphantom{$Z^{-1}$}] at (0.5,0) {};
    \draw[very thick] (t)--(0,0.5);
  \end{tikzpicture} \quad \text{and} \quad
  \begin{tikzpicture}[baseline=-0.1cm]
    \draw (-0.5,0)--(0.5,0);
    \node[tensor,label=below:$A$] (t) at (0,0) {};
    \draw[very thick] (t)--(0,1);
    \node[tensor,label=left:$U_R$] (t) at (0,0.5) {};
  \end{tikzpicture}  =  
  \begin{tikzpicture}[baseline=-0.1cm]
    \draw (-1,0)--(0.5,0);
    \node[tensor,label=below:$A$\vphantom{$Z^{-1}$}] (t) at (0,0) {};
    \node[tensor,label=below:$Z^{-1}$\vphantom{$Z^{-1}$}] at (-0.5,0) {};
    \draw[very thick] (t)--(0,0.5);
  \end{tikzpicture} \  .
\end{equation*}    
Here, if the state is symmetric under a whole group of unitaries, then the gauge $Z$ forms a linear representation of that group. Similar statements can be obtained for three-local symmetries as well as for any geometry provided that the TN satisfies the conditions in \cref{thm:normal}.

Consider now translation symmetry. We prove that a TI state (defined on a regular lattice) that has a normal PEPS description also has a TI PEPS description with the same bond dimension. This holds, for instance, for injective and normal 2D PEPS and MPS. Below we provide the proof for injective MPS, but the proof can easily be extended to the other cases as well. 

\begin{corollary}
  Let the tensors $A_i$ define an injective MPS such that the resulting state is translational invariant. Then all bond dimensions are the same and the state has a TI MPS description with an injective tensor $B$ that has the same bond dimension.
\end{corollary}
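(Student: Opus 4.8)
The plan is to leverage translational invariance by comparing the MPS with its own one-site cyclic shift, feed this pair into the Fundamental Theorem for injective MPS (\cref{thm:inj_MPS}), and then ``close up'' the resulting edge-local gauges around the ring. Throughout I assume $n\geq 3$, so that \cref{thm:inj_MPS} is applicable (the interesting regime).

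First I would note that translational invariance $T\ket{\Psi}=\ket{\Psi}$, where $T$ shifts the physical sites cyclically, is equivalent to the statement that the tensors $(A_1,\dots,A_n)$ and their cyclic shift $(A_n,A_1,\dots,A_{n-1})$ generate the same state. Both MPS are injective, so \cref{thm:inj_MPS} yields invertible gauges $Z_i$ (indices mod $n$) with $A_{i-1}=Z_i^{-1}A_iZ_{i+1}$. Since the equality of bond dimensions on each edge is part of the conclusion of \cref{lem:inj_isomorph}, comparing the dimension of the edge $(i-1,i)$ in the two MPS---which is $D_i$ for the original and $D_{i-1}$ for the shifted one---forces $D_{i-1}=D_i$ for every $i$. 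Hence all bond dimensions coincide, say $D$, which settles the first claim.

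Next I would absorb these gauges into a single candidate tensor. Define $Y_1=\id$ and $Y_{i+1}=Y_iZ_{i+1}^{-1}$, and check from $A_{i-1}=Z_i^{-1}A_iZ_{i+1}$ that $B:=Y_iA_iY_{i+1}^{-1}$ is independent of $i$; concretely, the sufficient matching conditions $Y_i=Y_{i-1}Z_i^{-1}$ and $Z_{i+1}Y_i^{-1}=Y_{i+1}^{-1}$ are both exactly the recursion. The tensor $B$ is injective, being gauge-equivalent to the injective $A_i$. Performing this gauge transformation on the closed chain rewrites the state as $\sum_{\vec\imath}\tr(B^{i_1}\cdots B^{i_n}M)\ket{i_1\cdots i_n}$ with a single boundary matrix $M=Y_{n+1}=Z_2^{-1}\cdots Z_n^{-1}Z_1^{-1}$, the monodromy arising because the open-chain gauges need not close up, i.e.\ $Y_{n+1}\neq Y_1$ in general.

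The main obstacle---and the crux of the argument---is to show that this monodromy satisfies $M\propto\id$; this is precisely where the \emph{global} content of translational invariance must be used, beyond the edge-local information already extracted. Here I would invoke translational invariance a second time: equating the above state with its shift shows that inserting $M$ on either of the two edges adjacent to a fixed site produces the same state, that is, the MPS with site tensor $MB$ equals the one with site tensor $BM$ while all other tensors are $B$. Blocking the remaining $n-1$ sites into an injective tensor $b$ (injectivity is preserved under contraction) and using that its physical components span the full $D\times D$ matrix algebra, I conclude $MB^i=B^iM$ for all $i$; since the components $B^i$ of the injective tensor $B$ also span the matrix algebra, $M$ is central and thus $M=\mu\,\id$ for some nonzero scalar $\mu$. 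Absorbing $\mu$ by rescaling $B\mapsto \mu^{1/n}B$ yields a genuinely translationally invariant injective MPS description of $\ket{\Psi}$ with the same bond dimension $D$, completing the proof.
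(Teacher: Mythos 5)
Your proof is correct and follows essentially the same route as the paper: compare the MPS with its cyclic shift, invoke \cref{thm:inj_MPS} to get edge gauges $Z_i$, and telescope them into a single site tensor. The only difference is presentational: where the paper simply asserts $R_{n+1}=L_{n+1}=1$ from $A_{n+1}\equiv A_1$ (which strictly speaking only holds up to a scalar), you make the closure of the gauges around the ring explicit by showing the monodromy $M$ commutes with the spanning set $\{B^i\}$ and is therefore a multiple of the identity --- a welcome clarification of a step the paper glosses over.
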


\begin{proof}
  Translational invariance means
  \begin{equation*}
    \ket{\Psi} = 
    \begin{tikzpicture}
      \draw (0.5,0) rectangle (4.5,-0.5);
      \foreach \x/\t in {1/1,2/2,4/n}{
        \node[tensor,label=below:$A_\t$] (t\x) at (\x,0) {};
        \draw[very thick] (t\x) --++ (0,0.5);
      }
      \node[fill=white] at (3,0) {$\dots$};
    \end{tikzpicture} = 
    \begin{tikzpicture}
      \draw (0.5,0) rectangle (4.5,-0.5);
      \foreach \x/\t in {1/2,2/3,4/1}{
        \node[tensor,label=below:$A_\t$] (t\x) at (\x,0) {};
        \draw[very thick] (t\x) --++ (0,0.5);
      }
      \node[fill=white] at (3,0) {$\dots$};
    \end{tikzpicture} \ ,
  \end{equation*}  
  thus, by \cref{thm:inj_MPS}, there are invertible matrices $Z_i$ such that for all $i$ ($n+1\equiv 1$)
  \begin{equation*}
    \begin{tikzpicture}
      \draw (-0.5,0)--(0.5,0);
      \node[tensor,label=below:$A_{i+1}$] (t) at (0,0) {};
      \draw[very thick] (t)--(0,0.5);
    \end{tikzpicture}  = 
    \begin{tikzpicture}
      \draw (-1,0)--(1,0);
      \node[tensor,label=below:$Z_i^{-1}$] at (-0.5,0) {};
      \node[tensor,label=below:\ $Z_{i+1}$\vphantom{$Z_i^{-1}$}] at (0.5,0) {};
      \node[tensor,label=below:$A_i$\vphantom{$Z_i^{-1}$}] (t) at (0,0) {};
      \draw[very thick] (t)--(0,0.5);
    \end{tikzpicture} \ .
  \end{equation*}  
  Therefore all tensors can be expressed with the help of the first tensor ($A_1$) together with some invertible matrices acting on the virtual dimensions of the tensor:
  \begin{equation}\label{eq:translation A_i}
    \begin{tikzpicture}
      \draw (-0.5,0)--(0.5,0);
      \node[tensor,label=below:$A_{i}$] (t) at (0,0) {};
      \draw[very thick] (t)--(0,0.5);
    \end{tikzpicture}  = 
    \begin{tikzpicture}
      \draw (-1,0)--(1,0);
      \node[tensor,label=below:$L_i^{-1}$] at (-0.5,0) {};
      \node[tensor,label=below:$R_i$\vphantom{$L_i^{-1}$}] at (0.5,0) {};
      \node[tensor,label=below:$A_1$\vphantom{$L_i^{-1}$}] (t) at (0,0) {};
      \draw[very thick] (t)--(0,0.5);
    \end{tikzpicture} \ ,
  \end{equation}  
  with 
  \begin{align*}
    L_i &= Z_1 Z_2 \dots Z_{i-1}, \\
    R_i &= Z_2 Z_3 \dots Z_i. 
  \end{align*}
  As $R_i L_{i+1}^{-1}= Z_1^{-1}$ for all $i$, substituting $A_i$ as in \cref{eq:translation A_i}, the state can be written as
  \begin{equation*}
    \ket{\Psi} =
    \begin{tikzpicture}
      \draw (0.5,0) rectangle (5.0,-0.5);
      \foreach \x/\t in {1/1,2/2,4/n}{
        \node[tensor,label=below:$A_1$] (t\x) at (\x,0) {};
        \node[tensor,label=above:$Z_1^{-1}$]  at (\x+0.5,0) {};
        \draw[very thick] (t\x) --++ (0,0.5);
      }
      \node[fill=white] at (3,0) {$\dots$};
    \end{tikzpicture} \ ,
  \end{equation*}
  where we have used that $A_{n+1}\equiv A_1$ and thus $R_{n+1} = L_{n+1} = 1$, which means that $R_n = Z_2 \dots Z_n = Z_1^{-1}$. This means that the state admits a TI MPS description with the tensor
  \begin{equation*}
    \begin{tikzpicture}
      \draw (-0.5,0)--(0.5,0);
      \node[tensor,label=below:$B$] (t) at (0,0) {};
      \draw[very thick] (t)--(0,0.5);
    \end{tikzpicture}  = 
    \begin{tikzpicture}
      \draw (-0.5,0)--(1,0);
      \node[tensor,label=below:\ $Z_1^{-1}$] at (0.5,0) {};
      \node[tensor,label=below:$A_1$\vphantom{$Z_1^{-1}$}] (t) at (0,0) {};
      \draw[very thick] (t)--(0,0.5);
    \end{tikzpicture} \ .
  \end{equation*}    
\end{proof}
The corresponding statement for injective 2D PEPS is
\begin{corollary}
  Let the tensors $A_{(i,j)}$ define an injective 2D PEPS such that the resulting state is translational invariant. Then all vertical (resp. all horizontal) bond dimensions are the same and the state has a TI 2D PEPS description with an injective tensor $B$ that has the same bond dimension.
\end{corollary}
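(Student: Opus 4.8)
The plan is to follow the proof of the preceding MPS corollary almost verbatim, with the one-dimensional Fundamental Theorem \cref{thm:inj_MPS} replaced by its general counterpart \cref{thm:inj}, and with the single translation replaced by the two lattice translations. Put the PEPS on an $n\times m$ torus (with $n,m\geq 3$, so the graph has no double edges and \cref{thm:inj} applies), with sites $(i,j)$, horizontal edges $h_{(i,j)}$ joining $(i,j)$ to $(i+1,j)$ and vertical edges $v_{(i,j)}$ joining $(i,j)$ to $(i,j+1)$; then the two unit shifts $T_h$ and $T_v$ are graph automorphisms, and translational invariance says that $\{A_{(i,j)}\}$, its horizontal shift $\{A_{(i+1,j)}\}$ and its vertical shift $\{A_{(i,j+1)}\}$ all generate the same state.

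First I would apply \cref{thm:inj} to the horizontal shift. As both networks are injective and live on the same graph, it supplies an invertible matrix $g_e$ on every edge $e$, relating $A_{(i,j)}$ to $A_{(i+1,j)}$ when the $g_e$ are inserted on the legs of each tensor (with the orientation convention of the illustration following \cref{thm:inj}); the vertical shift likewise gives invertible matrices $g'_e$. Since each $g_e$ and $g'_e$ is square and invertible, corresponding bonds carry equal dimension; the horizontal shift then equalizes all horizontal bonds within a row, the vertical shift equalizes across rows, and together they force every horizontal bond to share one dimension and every vertical bond to share one dimension, which is the bond-dimension claim.

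Next I would trivialize both shifts by a single gauge $\mathcal H$, generalizing the telescoping $A_i=(Z_1\cdots Z_{i-1})^{-1}A_1(Z_2\cdots Z_i)$. Solving the coboundary relations $\mathcal H_e=\mathcal H_{T_h^{-1}(e)}\,g_e$ and $\mathcal H_e=\mathcal H_{T_v^{-1}(e)}\,g'_e$ amounts to accumulating the $g_e$ along each row and the $g'_e$ along each column, starting from the reference site $(1,1)$. After absorbing $\mathcal H$ into the tensors, every tensor equals the single tensor $A_{(1,1)}$ dressed only by the accumulated bond gauges, and the two-dimensional analogue of the identity $R_iL_{i+1}^{-1}=Z_1^{-1}$ collapses this dressing to one position-independent matrix $X$ on every horizontal bond and one matrix $Y$ on every vertical bond. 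The tensor $B$ obtained from $A_{(1,1)}$ by placing $X$ on one horizontal and $Y$ on one vertical leg is then translationally invariant, has the same bond dimensions, and is injective since $A_{(1,1)}$ is injective and $X,Y$ are invertible.

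The step I expect to be the real obstacle — the genuinely two-dimensional one, with no analogue in the MPS case — is the consistency of this simultaneous trivialization. The horizontal accumulation has to close up around each horizontal ring, the vertical accumulation around each vertical ring, and, crucially, the two gauges $\{g_e\}$ and $\{g'_e\}$ must be mutually compatible, i.e.\ commute up to a scalar on every edge, so that $\mathcal H$ exists and $X,Y$ are well defined. I expect exactly this from the uniqueness clause of \cref{thm:inj}: evaluating the commuting double shift $T_hT_v=T_vT_h$ in the two orders produces two gauges implementing the same transformation of the state, so by uniqueness they coincide up to a constant on each edge, which is precisely the cocycle relation needed. The residual scalars only rescale $\mathcal H$ and are immaterial for the existence statement, so they can be normalized away.
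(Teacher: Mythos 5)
Your proof is correct and follows the same route the paper intends: the paper proves only the MPS version in detail and asserts that the PEPS case ``can easily be extended,'' and your argument is precisely that extension --- apply \cref{thm:inj} to the two shifted networks, read off equality of bond dimensions from the invertibility (hence squareness) of the gauges, and telescope the gauges back to the reference tensor as in the one-dimensional identity $R_iL_{i+1}^{-1}=Z_1^{-1}$. The one genuinely two-dimensional ingredient you supply --- the mutual compatibility of the horizontal and vertical gauge systems, obtained from the uniqueness clause of \cref{thm:inj} applied to the two factorizations of the commuting double shift $T_hT_v=T_vT_h$ --- is exactly the right way to fill the step the paper leaves implicit, and your observation that the residual per-edge scalars only rescale the final tensor by an $nm$-th root is the correct way to dispose of them.
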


\section{Conclusion}

In this paper we have shown the 'Fundamental Theorem' for injective and normal PEPS: two such tensor networks generate the same state if and only if the defining tensors are related with a local gauge. Moreover, the gauges relating the two set of tensors are uniquely defined up to a multiplicative constant. This result holds for a fixed (but large enough) system size. It is valid for any geometry, TI and non-TI setting, including 1D (MPS), 2D PEPS, higher dimensional PEPS, and other lattice geometries such as the honeycomb lattice, the Kagome lattice and the hyperbolic lattice used in the AdS/CFT correspondence. However, it does not include some important classes of TN like MERA, since they do not meet the conditions of \cref{thm:normal}.

\section{Acknowledgements}

We thank Barbara Kraus, David Sauerwein, Erez Zohar and Ilya Kull for inspiring conversations.

J.G.R. and  D.P.G. acknowledge financial support from MINECO (grant MTM2014-54240-P), Comunidad de Madrid (grant QUITEMAD+CM, ref. S2013/ICE-2801), and Severo Ochoa project SEV-2015-556. 

This project has received funding from the European Research Council (ERC) under
the European Union's Horizon 2020 research and innovation programme
through the ERC Starting Grant WASCOSYS (No.\ 636201), the ERC Consolidator Grant GAPS (No.\ 648913), and the ERC Advanced Grant QENOCOBA (No.\ 742102).
\appendix

\section{Normal MPS: alternative proof} \label{sec:normal_alt}

In \cref{sec:normal} we have shown that two normal TNs generate the same state if and only if the generating tensors are related with a gauge transformation. In the proof, we have blocked tensors to injective tensors. This proof is not optimal in the system size. For example, consider an MPS on five sites 
\begin{equation*}
  \ket{\Psi} = \ 
  \begin{tikzpicture}
    \draw (0.5,0) rectangle (5.5,-0.5);
    \foreach \x in {1,2,3,4,5}{
      \node[tensor,label=below:$A_\x$] (t\x) at (\x,0) {};
      \draw[very thick] (t\x) --++ (0,0.5);
    }
  \end{tikzpicture}  \ ,
\end{equation*}  
where the blocking of any two consecutive tensors:
\begin{equation*}
  \begin{tikzpicture}
    \draw (0.5,0)--(2.5,0);
    \node[tensor,label=below:$A_{i}$] (a1) at (1,0) {};
    \draw[very thick] (a1)--++(0,0.5);
    \node[tensor,label=below:$A_{i+1}$] (a2) at (2,0) {};
    \draw[very thick] (a2)--++(0,0.5);
  \end{tikzpicture}
\end{equation*}
is injective. The proof in \cref{sec:normal} does not work for this case as this MPS cannot be blocked to a three-partite injective MPS (as it is too short). Here we prove a more size-efficient variant of \cref{lem:inj_isomorph} for this situation.

\Cref{lem:injective_union} implies that any region of at least size two is also injective. Now, similar to the injective case, for every edge and every matrix $X$ and $Y$, if 
\begin{equation*}
  \begin{tikzpicture}
    \draw (0.5,0) rectangle (5.5,-0.5);
    \foreach \x in {1,2,3,4,5}{
      \node[tensor,label=below:$A_\x$] (t\x) at (\x,0) {};
      \draw[very thick] (t\x) --++ (0,0.5);
    }
    \node[tensor, red,label=above:$X$] (x) at (2.5,0) {};
  \end{tikzpicture} = 
  \begin{tikzpicture}
    \draw (0.5,0) rectangle (5.5,-0.5);
    \foreach \x in {1,2,3,4,5}{
      \node[tensor,label=below:$A_\x$] (t\x) at (\x,0) {};
      \draw[very thick] (t\x) --++ (0,0.5);
    }
    \node[tensor, red,label=above:$Y$] (x) at (2.5,0) {};
  \end{tikzpicture} \ ,
\end{equation*}
then $X=Y$.

Consider now any virtual operation $X$ on a given edge:
\begin{equation*}
  \ket{\Psi'} = \begin{tikzpicture}
    \draw (0.5,0) rectangle (5.5,-0.5);
    \foreach \x in {1,2,3,4,5}{
      \node[tensor,label=below:$A_\x$] (t\x) at (\x,0) {};
      \draw[very thick] (t\x) --++ (0,0.5);
    }
    \node[tensor, red,label=above:$X$] (x) at (2.5,0) {};
  \end{tikzpicture} .
\end{equation*}
This operation can also be realized by three different two-local physical operators:
\begin{equation}\label{eq:normal_resonate}
  \ket{\Psi'} =  \begin{tikzpicture}
    \draw (0.5,0) rectangle (5.5,-0.5);
    \foreach \x in {1,2,3,4,5}{
      \node[tensor,label=below:$A_\x$] (t\x) at (\x,0) {};
      \draw[very thick] (t\x) --++ (0,0.5);
    }
    \node[longtensor, red,label=left:$O_1$] (o) at (1.5,0.5) {};
    \draw[very thick,red] (1,0.5)--++(0,0.5); 
    \draw[very thick,red] (2,0.5)--++(0,0.5); 
  \end{tikzpicture} = 
  \begin{tikzpicture}
    \draw (0.5,0) rectangle (5.5,-0.5);
    \foreach \x in {1,2,3,4,5}{
      \node[tensor,label=below:$A_\x$] (t\x) at (\x,0) {};
      \draw[very thick] (t\x) --++ (0,0.5);
    }
    \node[longtensor, red,label=left:$O_2$] (o) at (2.5,0.5) {};
    \draw[very thick,red] (2,0.5)--++(0,0.5); 
    \draw[very thick,red] (3,0.5)--++(0,0.5); 
  \end{tikzpicture} \  = \ 
  \begin{tikzpicture}
    \draw (0.5,0) rectangle (5.5,-0.5);
    \foreach \x in {1,2,3,4,5}{
      \node[tensor,label=below:$A_\x$] (t\x) at (\x,0) {};
      \draw[very thick] (t\x) --++ (0,0.5);
    }
    \node[longtensor, red,label=left:$O_3$] (o) at (3.5,0.5) {};
    \draw[very thick,red] (3,0.5)--++(0,0.5); 
    \draw[very thick,red] (4,0.5)--++(0,0.5); 
  \end{tikzpicture} \ , 
\end{equation}
with 
\begin{equation*}
   O_1 = 
  \begin{tikzpicture}
    \draw (-0.5,0) rectangle (1.5,1);
    \node[longtensor,label=below:$A_{23}^{-1}$] at (0.5,0) {};
    \node[tensor,label=above left:$A_{2}$] at (0,1) {};
    \node[tensor,label=above right:$A_{3}$] at (1,1) {};
    \node[tensor,label=right:$X$] at (1.5,0.5) {};
    \draw[very thick] (0,1)--++(0,0.5);
    \draw[very thick] (1,1)--++(0,0.5);
    \draw[very thick] (0,0)--++(0,-0.5);
    \draw[very thick] (1,0)--++(0,-0.5);
  \end{tikzpicture} \quad \text{and} \quad 
  O_2 = 
  \begin{tikzpicture}
    \draw (-0.5,0) rectangle (1.5,1);
    \node[longtensor,label=below:$A_{23}^{-1}$] at (0.5,0) {};
    \node[tensor,label=above left:$A_{2}$] at (0,1) {};
    \node[tensor,label=above right:$A_{3}$] at (1,1) {};
    \node[tensor,label=below:$X$] at (0.5,1) {};
    \draw[very thick] (0,1)--++(0,0.5);
    \draw[very thick] (1,1)--++(0,0.5);
    \draw[very thick] (0,0)--++(0,-0.5);
    \draw[very thick] (1,0)--++(0,-0.5);
  \end{tikzpicture} \ \quad \text{and} \quad
   O_3 = 
  \begin{tikzpicture}
    \draw (-0.5,0) rectangle (1.5,1);
    \node[longtensor,label=below:$A_{34}^{-1}$] at (0.5,0) {};
    \node[tensor,label=above left:$A_{3}$] at (0,1) {};
    \node[tensor,label=above right:$A_{4}$] at (1,1) {};
    \node[tensor,label=left:$X$] at (-0.5,0.5) {};
    \draw[very thick] (0,1)--++(0,0.5);
    \draw[very thick] (1,1)--++(0,0.5);
    \draw[very thick] (0,0)--++(0,-0.5);
    \draw[very thick] (1,0)--++(0,-0.5);
  \end{tikzpicture} \ .
\end{equation*}
Notice that both $X\mapsto O_1$ and $X\mapsto O_3^T$ are algebra homomorphisms, but the map $X\mapsto O_2$ not necessarily. Conversely: 
\begin{lemma}\label{lem:5}
Suppose that the state $\ket{\Psi'}$ can be written as
\begin{equation*}
  \ket{\Psi'} = \begin{tikzpicture}
    \draw (0.5,0) rectangle (5.5,-0.5);
    \foreach \x in {1,2,3,4,5}{
      \node[tensor,label=below:$A_\x$] (t\x) at (\x,0) {};
      \draw[very thick] (t\x) --++ (0,0.5);
    }
    \node[longtensor, red,label=left:$O_1$] (o) at (1.5,0.5) {};
    \draw[very thick,red] (1,0.5)--++(0,0.5); 
    \draw[very thick,red] (2,0.5)--++(0,0.5); 
  \end{tikzpicture} = 
  \begin{tikzpicture}
    \draw (0.5,0) rectangle (5.5,-0.5);
    \foreach \x in {1,2,3,4,5}{
      \node[tensor,label=below:$A_\x$] (t\x) at (\x,0) {};
      \draw[very thick] (t\x) --++ (0,0.5);
    }
    \node[longtensor, red,label=left:$O_2$] (o) at (2.5,0.5) {};
    \draw[very thick,red] (2,0.5)--++(0,0.5); 
    \draw[very thick,red] (3,0.5)--++(0,0.5); 
  \end{tikzpicture} \  = \ 
  \begin{tikzpicture}
    \draw (0.5,0) rectangle (5.5,-0.5);
    \foreach \x in {1,2,3,4,5}{
      \node[tensor,label=below:$A_\x$] (t\x) at (\x,0) {};
      \draw[very thick] (t\x) --++ (0,0.5);
    }
    \node[longtensor, red,label=left:$O_3$] (o) at (3.5,0.5) {};
    \draw[very thick,red] (3,0.5)--++(0,0.5); 
    \draw[very thick,red] (4,0.5)--++(0,0.5); 
  \end{tikzpicture} \ . 
\end{equation*}
Then there is a virtual operation $X$ on the bond $(2,3)$ such that
\begin{equation*}
  \ket{\Psi'} = \begin{tikzpicture}
    \draw (0.5,0) rectangle (5.5,-0.5);
    \foreach \x in {1,2,3,4,5}{
      \node[tensor,label=below:$A_\x$] (t\x) at (\x,0) {};
      \draw[very thick] (t\x) --++ (0,0.5);
    }
    \node[tensor, red,label=above:$X$] (x) at (2.5,0) {};
  \end{tikzpicture} \ ;
\end{equation*}
moreover, 
\begin{equation*}
  \begin{tikzpicture}
    \draw (0.5,0)--(2.5,0);
    \foreach \x in {1,2}{
      \node[tensor,label=below:$A_\x$] at (\x,0) {};
      \draw[very thick] (\x,0)--++(0,0.5);
    }
    \node[longtensor, red,label=left:$O_1$] (o) at (1.5,0.5) {};
    \draw[very thick,red] (1,0.5)--++(0,0.5); 
    \draw[very thick,red] (2,0.5)--++(0,0.5); 
  \end{tikzpicture} = 
  \begin{tikzpicture}
    \draw (0.5,0)--(3.0,0);
    \foreach \x in {1,2}{
      \node[tensor,label=below:$A_\x$] at (\x,0) {};
      \draw[very thick] (\x,0)--++(0,0.5);
    }
    \node[tensor, label=below:$X$] (x) at (2.5,0) {};
  \end{tikzpicture} \quad \text{and} \quad 
  \begin{tikzpicture}
    \draw (2.5,0)--(4.5,0);
    \foreach \x in {3,4}{
      \node[tensor,label=below:$A_\x$] at (\x,0) {};
      \draw[very thick] (\x,0)--++(0,0.5);
    }
    \node[longtensor, red,label=left:$O_3$] (o) at (3.5,0.5) {};
    \draw[very thick,red] (3,0.5)--++(0,0.5); 
    \draw[very thick,red] (4,0.5)--++(0,0.5); 
  \end{tikzpicture} \ = \     
  \begin{tikzpicture}
    \draw (2.0,0)--(4.5,0);
    \foreach \x in {3,4}{
      \node[tensor,label=below:$A_\x$] at (\x,0) {};
      \draw[very thick] (\x,0)--++(0,0.5);
    }
    \node[tensor, label=below:$X$] (y) at (2.5,0) {};
  \end{tikzpicture} \ ,    
\end{equation*}
and the maps $O_1\mapsto X$ and $O_3^T\mapsto X$ are uniquely defined and are algebra-homomorphisms. 
\end{lemma}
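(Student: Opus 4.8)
The plan is to mirror, block by block, the injective derivation that produced \cref{eq:O->X}, replacing every single-site inversion in \cref{eq:inj_O->X_argument} by the inversion of a blocked region of size two. This is legitimate because \cref{lem:injective_union} guarantees that every region of size at least two is injective, so each such block admits a one-sided inverse, and the insertion-injectivity noted just before the lemma (itself a consequence of \cref{lem:injective_union}) holds.

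First I would establish the two local identities (the ``moreover'' part), treating the windows $\{1,2\}$ and $\{3,4\}$ separately. Consider the resonance between $O_1$ and $O_2$, which act on the \emph{overlapping} windows $\{1,2\}$ and $\{2,3\}$ and produce the same $\ket{\Psi'}$. Exactly as in the passage leading to \cref{eq:inj_O->X_argument}, I would invert the block $\{3,4\}$ and close off the remaining block into a scalar loop; the single-site closures of the injective argument now become size-two closures, so one must track the resulting dimension factor, the analogue of the $D_{23}^{-1}$ prefactor. Since $O_2$ sits to the right of the bond $(2,3)$, its effect transfers onto that bond while the left bond $(5,1)$ of the window $\{1,2\}$ is left untouched, yielding a single matrix $X$ on $(2,3)$ with $O_1$ on the open window $\{1,2\}$ equal to $X$ attached to the right bond of the blocked tensor $A_{12}$. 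Running the symmetric argument on the resonance between $O_2$ and $O_3$ (overlapping on site $3$) localizes $O_3$ on $\{3,4\}$ to a matrix on its left bond $(2,3)$.

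Next I would show these two matrices coincide and assemble the global statement. Inserting each of them back into the full five-site network reproduces, by construction, the same deformed state $\ket{\Psi'}$; hence the two single-bond insertions on $(2,3)$ give equal states, and insertion-injectivity forces them to be a common matrix $X$. Substituting this back then shows that $\ket{\Psi'}$ equals the undeformed network with $X$ inserted on $(2,3)$, which is the first assertion. The homomorphism and uniqueness statements follow for free: the forward maps $X\mapsto O_1$ and $X\mapsto O_3^{T}$ are algebra homomorphisms (as noted just after \cref{eq:normal_resonate}), and injectivity of the blocks $\{1,2\}$ and $\{3,4\}$ makes the localization maps $O_1\mapsto X$ and $O_3^{T}\mapsto X$ their two-sided inverses, hence themselves algebra homomorphisms, while uniqueness of $X$ is once more insertion-injectivity.

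I expect the \textbf{main obstacle} to be the bookkeeping in the first step. On a ring one cannot isolate an open bond by naively inverting the complementary region, since the trace simply re-closes the network and only the trace of the inserted operator survives; the localization must therefore be carried out with the precise combination of one block-inversion and one loop-closure used in \cref{eq:inj_O->X_argument}. The delicate point is to verify that the bond $(5,1)$ (respectively $(4,5)$) genuinely drops out, i.e.\ that conjugating $O_1$ by $A_{12}$ produces an operator of the factorized form $\id_{(5,1)}\otimes X$ supported on the single bond $(2,3)$ rather than a generic operator on the full two-bond cut; this is the analogue of the $V=W$ step in \cref{lem:inj_isomorph}, and it is exactly where the resonance with the middle operator $O_2$ is used.
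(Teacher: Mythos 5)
Your high-level plan (localize each physical operator to a matrix on the bond $(2,3)$, then identify the resulting matrices by insertion-injectivity) matches the paper's, but the localization step as you describe it does not go through, and that step is where the actual content of the lemma lies. You propose to extract $X$ from the resonance between $O_1$ and $O_2$ by inverting the block $\{3,4\}$. This fails for two reasons. First, site $3$ lies in the support of $O_2$, so applying $A_{34}^{-1}$ on the $O_2$-side does not cancel $A_{34}$; moreover the leftover site $5$ cannot be ``closed off into a scalar loop'' --- a single site is not invertible in the normal setting, and it remains in the network connecting the bonds $(4,5)$ and $(5,1)$. Second, and more fundamentally, even after inverting the correct region $\{4,5\}$ (the one disjoint from both supports) you only obtain that $O_1$ and $O_2$ agree on the open chain $A_1A_2A_3$; since $O_2$ straddles the bond $(2,3)$, no invertible block separates its two halves, so its action cannot be pushed onto a single virtual bond. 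Your assertion that ``$O_2$ sits to the right of the bond $(2,3)$'' is exactly what is false here, in contrast with the injective case of \cref{lem:inj_isomorph}.

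The paper's proof supplies the missing idea: $O_2$ is used only as a bridge, not as the operator that gets localized. Inverting $\{4,5\}$ shows that $O_1$ and $O_2$ agree on the open chain $A_1A_2A_3$; inverting $\{5,1\}$ shows that $O_2$ and $O_3$ agree on $A_2A_3A_4$; reattaching $A_4$ (resp.\ $A_1$) extends both identities to the common open chain $A_1A_2A_3A_4$, whence $O_1$ and $O_3$ agree there. Only at this point are the two operators being compared supported on disjoint windows $\{1,2\}$ and $\{3,4\}$ separated by the bond $(2,3)$, so inverting $A_{34}$ (resp.\ $A_{12}$) localizes $O_1$ (resp.\ $O_3$) to a matrix on that bond exactly as in \cref{eq:inj_O->X_argument}, with the identity on the outer bond appearing automatically. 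The identification of the two matrices via insertion-injectivity and the homomorphism and uniqueness claims then follow as you say. Your ``main obstacle'' paragraph correctly senses that a factorization across the two-bond cut must be justified, but attributes its resolution to the $V=W$ step and to the $O_1$--$O_2$ resonance; in fact it is resolved by the three-term chain that allows one to compare $O_1$ with $O_3$ directly, a comparison your argument never makes.
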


\begin{proof}
  Invert the injective tensor on the region $45$. We get
  \begin{equation}\label{eq:normal_act_123}
    \begin{tikzpicture}
      \draw (0.5,0)--(3.5,0);
      \foreach \x in {1,2,3}{
        \node[tensor,label=below:$A_\x$] at (\x,0) {};
        \draw[very thick] (\x,0)--++(0,0.5);
      }
      \node[longtensor, red,label=left:$O_1$] (o) at (1.5,0.5) {};
      \draw[very thick,red] (1,0.5)--++(0,0.5); 
      \draw[very thick,red] (2,0.5)--++(0,0.5); 
    \end{tikzpicture} = 
    \begin{tikzpicture}
      \draw (0.5,0)--(3.5,0);
      \foreach \x in {1,2,3}{
        \node[tensor,label=below:$A_\x$] at (\x,0) {};
        \draw[very thick] (\x,0)--++(0,0.5);
      }
      \node[longtensor, red,label=left:$O_2$] (o) at (2.5,0.5) {};
      \draw[very thick,red] (2,0.5)--++(0,0.5); 
      \draw[very thick,red] (3,0.5)--++(0,0.5); 
    \end{tikzpicture} \ .    
  \end{equation} 
  Similarly, inverting the tensor on the region $51$, we get
  \begin{equation}\label{eq:normal_act_234}
    \begin{tikzpicture}
      \draw (1.5,0)--(4.5,0);
      \foreach \x in {2,3,4}{
        \node[tensor,label=below:$A_\x$] at (\x,0) {};
        \draw[very thick] (\x,0)--++(0,0.5);
      }
      \node[longtensor, red,label=left:$O_2$] (o) at (2.5,0.5) {};
      \draw[very thick,red] (2,0.5)--++(0,0.5); 
      \draw[very thick,red] (3,0.5)--++(0,0.5); 
    \end{tikzpicture} = 
    \begin{tikzpicture}
      \draw (1.5,0)--(4.5,0);
      \foreach \x in {2,3,4}{
        \node[tensor,label=below:$A_\x$] at (\x,0) {};
        \draw[very thick] (\x,0)--++(0,0.5);
      }
      \node[longtensor, red,label=left:$O_3$] (o) at (3.5,0.5) {};
      \draw[very thick,red] (3,0.5)--++(0,0.5); 
      \draw[very thick,red] (4,0.5)--++(0,0.5); 
    \end{tikzpicture} \ .    
  \end{equation} 
  Therefore, plugging $A_4$ on the right side in \cref{eq:normal_act_123} and $A_1$ on the left side in \cref{eq:normal_act_234}, we get
  \begin{equation*}
    \begin{tikzpicture}
      \draw (0.5,0)--(4.5,0);
      \foreach \x in {1,2,3,4}{
        \node[tensor,label=below:$A_\x$] at (\x,0) {};
        \draw[very thick] (\x,0)--++(0,0.5);
      }
      \node[longtensor, red,label=left:$O_1$] (o) at (1.5,0.5) {};
      \draw[very thick,red] (1,0.5)--++(0,0.5); 
      \draw[very thick,red] (2,0.5)--++(0,0.5); 
    \end{tikzpicture} = 
    \begin{tikzpicture}
      \draw (0.5,0)--(4.5,0);
      \foreach \x in {1,2,3,4}{
        \node[tensor,label=below:$A_\x$] at (\x,0) {};
        \draw[very thick] (\x,0)--++(0,0.5);
      }
      \node[longtensor, red,label=left:$O_2$] (o) at (2.5,0.5) {};
      \draw[very thick,red] (2,0.5)--++(0,0.5); 
      \draw[very thick,red] (3,0.5)--++(0,0.5); 
    \end{tikzpicture} = 
    \begin{tikzpicture}
      \draw (0.5,0)--(4.5,0);
      \foreach \x in {1,2,3,4}{
        \node[tensor,label=below:$A_\x$] at (\x,0) {};
        \draw[very thick] (\x,0)--++(0,0.5);
      }
      \node[longtensor, red,label=left:$O_3$] (o) at (3.5,0.5) {};
      \draw[very thick,red] (3,0.5)--++(0,0.5); 
      \draw[very thick,red] (4,0.5)--++(0,0.5); 
    \end{tikzpicture} \ .    
  \end{equation*}   
  Comparing the two ends of the equation, similar to \cref{eq:inj_O->X_argument}, we get that
  \begin{equation*}
    \begin{tikzpicture}
      \draw (0.5,0)--(2.5,0);
      \foreach \x in {1,2}{
        \node[tensor,label=below:$A_\x$] at (\x,0) {};
        \draw[very thick] (\x,0)--++(0,0.5);
      }
      \node[longtensor, red,label=left:$O_1$] (o) at (1.5,0.5) {};
      \draw[very thick,red] (1,0.5)--++(0,0.5); 
      \draw[very thick,red] (2,0.5)--++(0,0.5); 
    \end{tikzpicture} = 
    \begin{tikzpicture}
      \draw (0.5,0)--(3.0,0);
      \foreach \x in {1,2}{
        \node[tensor,label=below:$A_\x$] at (\x,0) {};
        \draw[very thick] (\x,0)--++(0,0.5);
      }
      \node[tensor, label=below:$X$] (x) at (2.5,0) {};
    \end{tikzpicture} \quad \text{and} \quad 
    \begin{tikzpicture}
      \draw (2.5,0)--(4.5,0);
      \foreach \x in {3,4}{
        \node[tensor,label=below:$A_\x$] at (\x,0) {};
        \draw[very thick] (\x,0)--++(0,0.5);
      }
      \node[longtensor, red,label=left:$O_3$] (o) at (3.5,0.5) {};
      \draw[very thick,red] (3,0.5)--++(0,0.5); 
      \draw[very thick,red] (4,0.5)--++(0,0.5); 
    \end{tikzpicture} \ = \     
    \begin{tikzpicture}
      \draw (2.0,0)--(4.5,0);
      \foreach \x in {3,4}{
        \node[tensor,label=below:$A_\x$] at (\x,0) {};
        \draw[very thick] (\x,0)--++(0,0.5);
      }
      \node[tensor, label=below:$Y$] (y) at (2.5,0) {};
    \end{tikzpicture} \ .    
  \end{equation*}
Finally $X=Y$ by comparing the states they generate. These relations define $X$ uniquely and by composition, the maps $O_1\mapsto X$ and $O_3\mapsto X^T$ are algebra homomorphisms.
\end{proof}

Notice that similar to the injective case, this leads to
\begin{corollary}
  Let $A$ and $B$ be two normal TI MPS on $n\geq 2L+1$ sites with the property that blocking $L$ consecutive sites results in an injective tensor. Suppose they generate the same state:
  \begin{equation*}
    \ket{\Psi} = 
    \begin{tikzpicture}
      \draw (0.5,0) rectangle (4.5,-0.5);
      \foreach \x/\t in {1/1,2/2,4/n}{
        \node[tensor,label=below:$A$] (t\x) at (\x,0) {};
        \draw[very thick] (t\x) --++ (0,0.5);
      }
      \node[fill=white] at (3,0) {$\dots$};
    \end{tikzpicture} = 
    \begin{tikzpicture}
      \draw (0.5,0) rectangle (4.5,-0.5);
      \foreach \x/\t in {1/1,2/2,4/n}{
        \node[tensor,label=below:$B$] (t\x) at (\x,0) {};
        \draw[very thick] (t\x) --++ (0,0.5);
      }
      \node[fill=white] at (3,0) {$\dots$};
    \end{tikzpicture} \ .
  \end{equation*}  
  Then there is an invertible matrix $Z$ and a constant $\lambda$ with $\lambda^n=1$ such that 
    \begin{equation*}
      \begin{tikzpicture}
        \draw (-0.5,0)--(0.5,0);
        \node[tensor,label=below:$B$] (t) at (0,0) {};
        \draw[very thick] (t)--(0,0.5);
      \end{tikzpicture}  = \lambda \cdot
      \begin{tikzpicture}
        \draw (-1,0)--(1,0);
        \node[tensor,label=below:$Z^{-1}$] at (-0.5,0) {};
        \node[tensor,label=below:$Z$\vphantom{$Z^{-1}$}] at (0.5,0) {};
        \node[tensor,label=below:$A$\vphantom{$Z^{-1}$}] (t) at (0,0) {};
        \draw[very thick] (t)--(0,0.5);
      \end{tikzpicture} \ .
    \end{equation*}    
    Moreover the gauge $Z$ is unique up to a multiplicative constant.
\end{corollary}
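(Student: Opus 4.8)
The plan is to reproduce the proof of the translationally invariant corollary following \cref{thm:inj_MPS}, replacing the role of \cref{lem:inj_isomorph}---whose three-partite injective blocking forces $n\geq 3L$---by the size-efficient mechanism of \cref{lem:5}, which only requires that any $L$ consecutive sites be injective and hence works down to $n=2L+1$. First I would upgrade \cref{lem:5} from the illustrative case $L=2$, $n=5$ to general $L$: any $L$ consecutive sites are injective by hypothesis and, by \cref{lem:injective_union}, so is any contiguous arc of at least $L$ sites, so the same chain of inversions of injective $L$-blocks that produced \cref{eq:normal_act_123,eq:normal_act_234} again lets me slide a central virtual insertion $X$ across the bond and read it off as an $L$-site physical operator flanking the bond on either side. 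This yields, exactly as in \cref{lem:5}, that the two extreme maps $O_1\mapsto X$ and $O_3^{T}\mapsto X$ are well-defined algebra homomorphisms.

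With this in hand I would assemble the normal analogue of \cref{lem:inj_isomorph}. Insert $X$ on the central bond of the $A$-network and realise it by the flanking physical operators; since the forward map $X\mapsto O_1$ (and $X\mapsto O_3^{T}$) is an algebra homomorphism, and since $A$ and $B$ generate the same state, the very same operators resonate on the $B$-network, so the general-$L$ \cref{lem:5} applied to $B$ returns a virtual $Y$ on the central bond with $O_1\mapsto Y$ a homomorphism. Composing $X\mapsto O_1\mapsto Y$ then produces an algebra homomorphism $X\mapsto Y$; it is well defined and injective by injectivity of the $L$-blocks of $B$ and $A$ respectively, and surjective by the symmetric argument exchanging $A$ and $B$. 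Hence it is an algebra isomorphism, forcing equal bond dimensions and $Y=Z^{-1}XZ$ for an invertible $Z$ that is unique up to a scalar. Translational invariance then makes this $Z$ the same on every bond up to a constant, exactly as in the remark following \cref{thm:inj_MPS}.

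It remains to convert these edge gauges into the stated relation. Absorbing $Z$ into $B$ gives a tensor $\tilde B$ such that $A$ and $\tilde B$ generate the same state and inserting the same matrix on corresponding bonds keeps the two states equal. To deduce $A\propto\tilde B$ I would mimic the endgame of \cref{thm:3} via \cref{lem:inj_equal_tensors_2}: since $n\geq 2L+1$, I can pick an $L$-site arc $R$ and the $(L+1)$-site arc $S\supset R$ differing from it in a single site, both injective and both with injective contiguous complements. \Cref{lem:inj_equal_tensors_2} then gives $A_R\propto\tilde B_R$ and $A_S\propto\tilde B_S$; writing the tensor on $S$ as the contraction of the tensor on $R$ with the extra single-site tensor and inverting the injective relation $A_R\propto\tilde B_R$ isolates the single-site proportionality $A\propto\tilde B$. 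Finally, closing the chain around the cycle cancels the $Z$'s and forces $\lambda^{n}=1$, while uniqueness of $Z$ up to a scalar is inherited from the isomorphism step.

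The main obstacle I expect is this first upgrade. I must check that the resonance ladder of \cref{lem:5}---the repeated inversion of injective $L$-blocks used both to identify $O_1,O_2,O_3$ and to prove the two extracted virtual operators coincide---goes through uniformly in $L$, and, more delicately, that the composite $X\mapsto O_1\mapsto Y$ is genuinely multiplicative rather than merely linear; it is precisely the homomorphism property, not just bijectivity, that upgrades $Y=Z^{-1}XZ$ from an arbitrary linear isomorphism to a similarity by a single invertible $Z$.
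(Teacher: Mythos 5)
Your proposal is correct and follows essentially the same route as the paper, which derives this corollary implicitly from \cref{lem:5} ("similar to the injective case, this leads to\dots"): the ladder of overlapping $L$-site physical operators replaces the three-partite blocking of \cref{lem:inj_isomorph}, the resulting edge isomorphisms give a gauge $Z$ that is translation invariant by uniqueness, and the endgame via \cref{lem:inj_equal_tensors_2} on an $L$-site arc and an $(L+1)$-site arc with injective complements is exactly the argument of \cref{thm:3}. The one piece of real work you correctly isolate --- running the inversion chain of \cref{lem:5} uniformly in $L$, using \cref{lem:injective_union} to keep every arc of length at least $L$ and every relevant complement injective for $n\geq 2L+1$ --- is precisely what the paper leaves implicit, and your treatment of it is sound.
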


The arguments in \cref{lem:5} can be applied for 2D PEPS as well. In the TI setting, this leads to
\begin{corollary}
  Let $A$ and $B$ be two normal 2D PEPS tensors such that every $L \times K$ region is injective. Suppose they generate the same state on some region $n\times m$ with $n \geq 2 L+1$ and $m\geq 2K+1$. Then $A$ and $B$ are related to each other with a gauge: 
  \begin{equation*}
    \begin{tikzpicture}[baseline=-0.1cm]
      \draw (-0.5,0)--(0.5,0);
      \draw (0,0,-0.5)--(0,0,0.5);
      \node[tensor,label=below:$B$] at (0,0) {};
      \draw[very thick] (0,0)--++(0,0.5);
    \end{tikzpicture} = \lambda \cdot 
    \begin{tikzpicture}[baseline=-0.1cm]
      \draw (-1.3,0)--(1.3,0);
      \draw (0,0,-1.6)--(0,0,1.6);
      \node[tensor,label=below:$A$] at (0,0) {};
      \node[tensor,label=right:$Y^{-1}$] at (0,0,-1) {};
      \node[tensor,label=below:$Y$] at (0,0,1) {};
      \node[tensor,label=below:$X^{-1}$] at (0.8,0,0) {};
      \node[tensor,label=below:$X$] at (-0.8,0,0) {};
      \draw[very thick] (0,0)--++(0,0.5);
    \end{tikzpicture} \ , 
  \end{equation*}
  with $\lambda^{n\cdot m} = 1$ and $X,Y$ invertible matrices. Moreover these matrices $X,Y$ are unique up to a multiplicative constant.
\end{corollary}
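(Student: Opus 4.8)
The plan is to transcribe the one-dimensional argument of \cref{lem:5} to the square lattice and then feed the resulting edge gauges into the machinery already used in the proof of \cref{thm:3}. The only genuinely new ingredient is the two-dimensional version of the three-operator correspondence; everything downstream is essentially a verbatim copy of the injective and normal arguments above.

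First I would establish the 2D analog of \cref{lem:5}. Fix an edge $e$, say the horizontal bond joining sites $(i,j)$ and $(i{+}1,j)$. A virtual insertion of a matrix $X$ on $e$ can be realized as a physical operator supported on any injective $L\times K$ block containing both endpoints of $e$, by inverting that block. As in \cref{eq:normal_resonate} I would use three overlapping blocks: $O_1$ on an $L\times K$ region abutting $e$ from the left, $O_2$ on a region straddling $e$, and $O_3$ on a region abutting $e$ from the right. Because $n\geq 2L+1$ and $m\geq 2K+1$, the complement of $O_1\cup O_2$ (respectively $O_2\cup O_3$) still contains a translate of a full $L\times K$ tile, hence is injective by \cref{lem:injective_union}. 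Inverting these complements reproduces the comparisons \cref{eq:normal_act_123} and \cref{eq:normal_act_234}, and shows that the physical data $O_1$ (and $O_3^T$) is equivalent to a single virtual insertion $X$ on $e$, with $O_1\mapsto X$ and $O_3^T\mapsto X$ uniquely defined algebra homomorphisms.

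With this in hand I would follow the proof of \cref{thm:3} line for line. Since $A$ and $B$ generate the same state, inserting $X$ on an edge of the $A$-network equals a triple $O_1,O_2,O_3$ of physical operators; acting with the same triple on the $B$-network (which yields the same state) and applying the 2D \cref{lem:5} in reverse produces a single virtual insertion $Y$ on the matching edge of $B$. Composing $X\mapsto O_1\mapsto Y$ gives an algebra homomorphism, which the symmetric role of $A$ and $B$ upgrades to an isomorphism, so $Y=Z_e^{-1}XZ_e$ for an invertible $Z_e$ unique up to a scalar. Translation invariance makes $Z_e$ depend only on the orientation of $e$, yielding the single pair $X,Y$ of the statement (horizontal and vertical gauges). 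Absorbing these into $B$ gives $\tilde B$ for which inserting an arbitrary matrix on any bond of the $A$-network and of the $\tilde B$-network produces identical states. Choosing, via \cref{lem:injective_union}, two injective regions differing in one site with injective complements --- which $n\geq 2L+1$, $m\geq 2K+1$ exactly permits --- and invoking \cref{lem:inj_equal_tensors_2} then forces $A\propto\tilde B$; the global trace closure on the $n\times m$ lattice fixes $\lambda^{nm}=1$, and uniqueness of each $Z_e$ propagates to uniqueness of $X,Y$ up to a constant.

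The main obstacle I expect is purely geometric: certifying that $2L+1$ and $2K+1$ sites suffice. Concretely, I must check that for every edge the three overlapping $L\times K$ blocks can be positioned so that, after removing their union, a complete injective $L\times K$ tile still survives in the complement, and simultaneously that the single-site-difference regions $R,S$ of the final step, together with their complements, all remain injective at this reduced size. Verifying that these packing constraints are simultaneously satisfiable for both edge orientations --- rather than the correspondence or algebra steps, which carry over unchanged from the MPS case --- is where the real work lies.
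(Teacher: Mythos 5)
Your overall strategy --- a two-dimensional analogue of \cref{lem:5} feeding the resulting edge gauges into the blocking argument of \cref{thm:3} --- is the right one, but the geometry you choose for the overlapping blocks breaks the one step that actually matters. If $O_1,O_2,O_3$ live on three $L\times K$ blocks that all occupy the \emph{same} $K$ rows and merely slide horizontally across the edge $e$, then the final comparison is between $O_1$ on the left block $R_1$ and $O_3$ on the right block $R_3$, and inverting $R_3$ and then $R_1$ (as in \cref{eq:inj_O->X_argument}) only shows that $A_{R_1}^{-1}O_1A_{R_1}$ and $A_{R_3}^{-1}O_3A_{R_3}$ are supported on $\partial R_1\cap\partial R_3$. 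For collinear blocks this intersection is the entire $K$-edge cut between the two columns, not the single edge $e$: you localize the deformation only to an operator on that $K$-edge interface, which is too weak to define a per-edge gauge $Z_e$ and to run the $X\mapsto Y$ isomorphism edge by edge. Note that this localization must hold for \emph{arbitrary} $O_i$ satisfying the equality, since after transferring them to the $B$-network you do not yet know they come from a single-edge insertion. This is exactly why the paper's sketch uses \emph{four} regions arranged in a staircase: the last block is shifted vertically as well as horizontally, so that the first and last blocks are disjoint and meet along exactly one edge, namely $e$; after augmenting them by corner blocks and inverting, one is reduced to two injective tensors joined by the single edge $e$, where the one-dimensional argument applies verbatim.

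Two smaller points. First, ``the complement contains a translate of a full $L\times K$ tile'' is not the right criterion: to invoke \cref{lem:injective_union} the complement must be a \emph{union} of injective tiles, i.e.\ every site of the complement must lie in some $L\times K$ tile contained in the complement (this does hold for the complement of an $(L+1)\times K$ block on an $n\times m$ torus with $n\ge 2L+1$, $m\ge 2K+1$, but it is what has to be checked). Second, for $L\ge 3$ three collinear blocks cannot simultaneously have disjoint end blocks and consecutive unions of width $L+1$, so even the intermediate comparisons fail at size $2L+1$; a longer chain of overlapping blocks is needed, which again points to the staircase construction rather than a literal transcription of the three-block MPS argument.
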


\begin{proof}[Sketch of proof]
  We only need to prove a statement similar to \cref{lem:5}. For that, notice that a virtual operation on a given bond can be interpreted as a physical operation on any of the following four regions (in the case of $K=L=2$):
    \begin{equation*}
      \begin{tikzpicture}
        \clip (0.25,0.25) rectangle (2.75,2.75);
        \draw[step=0.5] (-1,-1) grid (4,4);
        \draw[line width=0.6mm,green] (1,1.5)--(1.5,1.5);
        \foreach \x in {0,0.5,...,3}{
          \foreach \y in {0,0.5,...,3}{
            \node[tensor] at (\x,\y) {};
          }
        }
        \draw[red,thick, rounded corners] (1.3,1.3) rectangle (2.2,2.2);
      \end{tikzpicture} \ \rightarrow \ 
      \begin{tikzpicture}
        \clip (0.25,0.25) rectangle (2.75,2.75);
        \draw[step=0.5] (-1,-1) grid (4,4);
        \draw[line width=0.6mm,green] (1,1.5)--(1.5,1.5);
        \foreach \x in {0,0.5,...,3}{
          \foreach \y in {0,0.5,...,3}{
            \node[tensor] at (\x,\y) {};
          }
        }
        \draw[red,thick, rounded corners] (0.8,1.3) rectangle (1.7,2.2);
      \end{tikzpicture} \ \rightarrow \ 
      \begin{tikzpicture}
        \clip (0.25,0.25) rectangle (2.75,2.75);
        \draw[step=0.5] (-1,-1) grid (4,4);
        \draw[line width=0.6mm,green] (1,1.5)--(1.5,1.5);
        \foreach \x in {0,0.5,...,3}{
          \foreach \y in {0,0.5,...,3}{
            \node[tensor] at (\x,\y) {};
          }
        }
        \draw[red,thick, rounded corners] (0.3,1.3) rectangle (1.2,2.2);
      \end{tikzpicture} \ \rightarrow \ 
      \begin{tikzpicture}
        \clip (0.25,0.25) rectangle (2.75,2.75);
        \draw[step=0.5] (-1,-1) grid (4,4);
        \draw[line width=0.6mm,green] (1,1.5)--(1.5,1.5);
        \foreach \x in {0,0.5,...,3}{
          \foreach \y in {0,0.5,...,3}{
            \node[tensor] at (\x,\y) {};
          }
        }
        \draw[red,thick, rounded corners] (0.3,0.8) rectangle (1.2,1.7);
      \end{tikzpicture} \ 
    \end{equation*} 
  We need to prove that conversely, any four physical operators on the above regions that transforms the PEPS into the same state means that the transformation can equally be done with a virtual operation on the highlighted edge. The system size required to compare any two consecutive regions is only $5\times 5$ (and in general, $(2L+1)\times (2K+1)$). Therefore, similar to \cref{lem:5},
    \begin{equation*}
      \begin{tikzpicture}
        \clip (0.25,0.75) rectangle (2.25,2.25);
        \draw[step=0.5] (-1,-1) grid (4,4);
        \draw[line width=0.6mm,green] (1,1.5)--(1.5,1.5);
        \foreach \x in {0,0.5,...,3}{
          \foreach \y in {0,0.5,...,3}{
            \node[tensor] at (\x,\y) {};
          }
        }
        \draw[red,thick, rounded corners] (1.3,1.3) rectangle (2.2,2.2);
      \end{tikzpicture} \ = \ 
      \begin{tikzpicture}
        \clip (0.25,0.75) rectangle (2.25,2.25);
        \draw[step=0.5] (-1,-1) grid (4,4);
        \draw[line width=0.6mm,green] (1,1.5)--(1.5,1.5);
        \foreach \x in {0,0.5,...,3}{
          \foreach \y in {0,0.5,...,3}{
            \node[tensor] at (\x,\y) {};
          }
        }
        \draw[red,thick, rounded corners] (0.8,1.3) rectangle (1.7,2.2);
      \end{tikzpicture} \ = \  
      \begin{tikzpicture}
        \clip (0.25,0.75) rectangle (2.25,2.25);
        \draw[step=0.5] (-1,-1) grid (4,4);
        \draw[line width=0.6mm,green] (1,1.5)--(1.5,1.5);
        \foreach \x in {0,0.5,...,3}{
          \foreach \y in {0,0.5,...,3}{
            \node[tensor] at (\x,\y) {};
          }
        }
        \draw[red,thick, rounded corners] (0.3,1.3) rectangle (1.2,2.2);
      \end{tikzpicture} \ = \ 
      \begin{tikzpicture}
        \clip (0.25,0.75) rectangle (2.25,2.25);
        \draw[step=0.5] (-1,-1) grid (4,4);
        \draw[line width=0.6mm,green] (1,1.5)--(1.5,1.5);
        \foreach \x in {0,0.5,...,3}{
          \foreach \y in {0,0.5,...,3}{
            \node[tensor] at (\x,\y) {};
          }
        }
        \draw[red,thick, rounded corners] (0.3,0.8) rectangle (1.2,1.7);
      \end{tikzpicture} \ ,
    \end{equation*} 
    with open boundaries. Compare now the first and the last expression in the above equation. One can add two-two tensors in the upper left and lower right corner and invert the resulting regions, leading to
    \begin{equation*}
      \begin{tikzpicture}
        \clip (0.25,0.75) rectangle (2.25,2.25);
        \draw[step=0.5] (-1,-1) grid (4,4);
        \draw[line width=0.6mm,green] (1,1.5)--(1.5,1.5);
        \foreach \x in {0,0.5,...,3}{
          \foreach \y in {0,0.5,...,3}{
            \node[tensor] at (\x,\y) {};
          }
        }
        \draw[red,thick, rounded corners] (1.3,1.3) rectangle (2.2,2.2);
        \filldraw[white] (0.25,1.75) rectangle (1.25,2.25);
        \filldraw[white] (1.25,0.75) rectangle (2.25,1.25);
      \end{tikzpicture} \ = \ 
      \begin{tikzpicture}
        \clip (0.25,0.75) rectangle (2.25,2.25);
        \draw[step=0.5] (-1,-1) grid (4,4);
        \draw[line width=0.6mm,green] (1,1.5)--(1.5,1.5);
        \foreach \x in {0,0.5,...,3}{
          \foreach \y in {0,0.5,...,3}{
            \node[tensor] at (\x,\y) {};
          }
        }
        \draw[red,thick, rounded corners] (0.3,0.8) rectangle (1.2,1.7);
        \filldraw[white] (0.25,1.75) rectangle (1.25,2.25);
        \filldraw[white] (1.25,0.75) rectangle (2.25,1.25);
      \end{tikzpicture} \ .
    \end{equation*}
    This results in the desired virtual operation on the highlighted edge. The rest of the proof is the same as the proof of \cref{thm:3}.
\end{proof}

\bibliography{../../library}

\end{document}